\documentclass[a4paper,onecolumn,final,
fleqn]{quantumarticle}
\pdfoutput=1
\usepackage[utf8]{inputenc}
\usepackage[backend=biber, style=alphabetic, maxbibnames=10]{biblatex}
\bibliography{refs.bib}
\pagestyle{plain}

\RequirePackage{amsmath}
\usepackage{tikz,stmaryrd,amssymb,url,color,relsize,mathtools,enumitem,amsthm}
\usepackage[ruled,vlined,linesnumbered, procnumbered]{algorithm2e}
\SetAlgorithmName{Experiment}{experiment}{List of Experiments}
\usepackage{xspace}
\usepackage{xcolor}
\usepackage{dsfont}
\usepackage{hyperref} 
\hypersetup{
	pdftex=true,
	breaklinks=true,
	colorlinks=true,
	linkcolor=dblue,
	urlcolor=dblue,
	citecolor=dgreen
}
\definecolor{dblue}{rgb}{0,0,.8}

\usepackage{xifthen}

\newtheorem{theorem}{Theorem}[section]
\newtheorem{lemma}[theorem]{Lemma}
\newtheorem{corollary}[theorem]{Corollary}
\newtheorem{definition}[theorem]{Definition}

\newtheorem{construction}[theorem]{Construction}
\newtheorem{example}[theorem]{Example}

\DeclareMathOperator{\Tr}{Tr}
\DeclareMathOperator{\id}{id}
\DeclareMathOperator{\negl}{negl}
\DeclareMathOperator*{\E}{\mathbb{E}}

\newcommand{\from}{\leftarrow}
\newcommand{\eps}{\varepsilon}
\newcommand{\ket}[1]{|#1\rangle}

\newcommand{\kb}[2]{\ket{#1}\bra{#2}}
\newcommand{\bra}[1]{\langle#1|}
\newcommand{\dn}[1]{\nm{#1}_\diamond}
\newcommand{\nm}[1]{\left\|#1\right\|}
\newcommand{\lr}[1]{\langle #1 \rangle}
\newcommand{\Trr}[2][]{\Tr\ifthenelse{\isempty{#1}}{}{_{#1}}\left[#2\right]}
\newcommand{\Prr}[1]{\Pr\left[#1\right]}

\newcommand{\QPT}{\axname{QPT}\xspace}
\newcommand{\NM}{\axname{NM}\xspace}
\newcommand{\CiNM}{\axname{CiNM}\xspace}
\newcommand{\DNS}{\axname{DNS}\xspace}
\newcommand{\PNM}{\axname{PNM}\xspace}

\newcommand{\CNM}{\axname{CNM}\xspace}

\newcommand{\QCNM}{\axname{QCNM}\xspace}

\newcommand{\SKQES}{\axname{SKQES}\xspace}
\newcommand{\PKQES}{\axname{PKQES}\xspace}

\newcommand{\PKES}{\axname{PKES}\xspace}

\newcommand{\CPTNI}{\axname{CPTNI}\xspace}
\newcommand{\CPTP}{\axname{CPTP}\xspace}


\newcommand{\IND}{\ensuremath{\mathsf{IND}}\xspace}

\newcommand{\INDparCCAA}{\ensuremath{\mathsf{IND\mbox{-}parCCA2}}\xspace}
\newcommand{\parCCAA}{\ensuremath{\mathsf{parCCA2}}\xspace}


\newcommand{\QCNMreal}{\axname{QCNM\mbox{-}Real}\xspace}
\newcommand{\QCNMideal}{\axname{QCNM\mbox{-}Ideal}\xspace}
\newcommand{\CNMreal}{\axname{CNM\mbox{-}Real}\xspace}
\newcommand{\CNMideal}{\axname{CNM\mbox{-}Ideal}\xspace}
\newcommand{\sckQCNMreal}{\axname{sckQCNM\textbf{-}Real}\xspace}
\newcommand{\sckQCNMideal}{\axname{sckQCNM\textbf{-}Ideal}\xspace}

\newcommand{\Qu}{\ensuremath{\mathsf{Qu}}\xspace}
\newcommand{\Cl}{\ensuremath{\mathsf{Cl}}\xspace}
\newcommand{\Hyb}{\ensuremath{\mathsf{Hyb}}\xspace}


\newcommand{\pk}{\ensuremath{\mathsf{pk}}\xspace}


\newcommand{\KeyGen}{\axname{KeyGen}\xspace}
\newcommand{\Enc}{\axname{Enc}\xspace}
\newcommand{\Dec}{\axname{Dec}\xspace}

\newcommand{\HH}{\mathcal{H}}
\newcommand{\MA}{\mathcal{A}}
\newcommand{\MB}{\mathcal{B}}

\newcommand{\MD}{\mathcal{D}}

\newcommand{\adver}{\ensuremath{\MA}\xspace}

\newcommand{\ch}[2][]{{\color{gray}\ifthenelse{\isempty{#1}}{#2}{#1\rightarrow #2}}}

\newcommand{\I}{\mathds{1}}
\newcommand{\N}{\mathbb{N}}

\newcommand{\C}{\mathbb{C}}
\newcommand{\Z}{\boldsymbol{0}}

\newcommand{\axname}[1]{\ensuremath{\mathsf{#1}}}

\definecolor{dred}{rgb}{.7,.05,.05}
\definecolor{dgreen}{rgb}{.1,.5,.1}

\begin{document}
\title{Non-malleability for quantum public-key encryption}
\author{Christian Majenz}
\affiliation{Institute for Logic, Language and Computation, University of Amsterdam \& QuSoft, Amsterdam, Netherlands}
\email{c.majenz@uva.nl}
\author{Christian Schaffner}
\affiliation{Institute for Logic, Language and Computation, University of Amsterdam \& QuSoft, Amsterdam, Netherlands}
\email{c.schaffner@uva.nl}
\author{Jeroen van Wier}
\affiliation{Interdisciplinary Centre for Security, Reliability and Trust, University of Luxembourg, Esch-sur-Alzette, Luxembourg}
\email{jeroen.vanwier@uni.lu}
\maketitle

\begin{abstract}
	\hspace{-1.6pt}Non-malleability is an important security property for public-key encryption (PKE). Its significance is due to the fundamental unachievability of integrity and authenticity guarantees in this setting, rendering it the strongest integrity-like property achievable using only PKE, without digital signatures. In this work, we generalize this notion to the setting of \emph{quantum} public-key encryption. Overcoming the notorious ``recording barrier'' known from generalizing other integrity-like security notions to quantum encryption, we generalize one of the equivalent classical definitions, comparison-based non-malleability, and show how it can be fulfilled. In addition, we explore one-time non-malleability notions for symmetric-key encryption from the literature by defining plaintext and ciphertext variants and by characterizing their relation.
\end{abstract}
\section{Introduction}
\label{sec:intro}

The development of quantum information processing technology has accelerated recently, with many large public and private players investing heavily \cite{EU-is-slow}. A future where communication networks include at least some high-capacity quantum channels and fault-tolerant quantum computers seems therefore more and more likely. How will we secure communication over the resulting ``quantum internet'' \cite{Wehner2018}? One approach is to rely on features inherent to quantum theory to get unconditional security, e.g. by using teleportation. Such methods are, however, a far cry from the classical standard internet cryptography in terms of efficiency, as they require interaction. A different and more efficient approach is to generalize modern private- and public-key cryptography to the quantum realm.

In this paper, we focus on the notion of non-malleability, which captures the idea that an encrypted message cannot be altered by a third party in a structured manner. 
This notion, first introduced by Dolev, Dwork and Naor \cite{Dolev2003}, derives its importance from the fact that it is the strongest integrity-like notion that is achievable using public-key encryption only. The aim of this work is to generalize this notion to public-key encryption of quantum data. A recent attack that exemplifies the relevance of the concept of non-malleability is the ``efail''-attack on the PGP protocol for confidential and authenticated e-mail communication \cite{efail}. 
This kind of attack, where an attacker is not directly able to learn the message yet still able to modify it, is exactly what non-malleable encryption secures against.

The classical notion of non-malleability is based on the notion of related plaintexts. For a non-malleable encryption scheme, it should, roughly speaking, be hard for an adversary to transform an encryption of a message $m$ into a \emph{different} ciphertext that decrypts to a \emph{related} message $m'$. Here, ``related'' just means that the adversary has some control over the transformation that is applied to the plaintext underlying the ciphertext he attacks. Generalizing this notion to the quantum case is complicated by the quantum no-cloning theorem: After a message has been encrypted and modified by the adversary and subsequently decrypted, it cannot be compared with the result anymore. In addition, it cannot be checked in a straightforward manner whether the adversary has indeed modified the ciphertext. 

In this work, we overcome these obstacles. The key ideas are the following. In the classical security game, an adversary is first asked to submit a distribution from which a plaintext is sampled. In the quantum setting, any message sampling procedure can be implemented by first performing a unitary quantum computation, and then discarding the contents of an auxiliary register. Instead of discarding this register, we view it as an extra record that is created along with the message. This extra record is then used instead of the original plaintext for evaluating the quantum analog of a relation. The test whether the adversary has indeed modified the ciphertext is performed by running the sampling- and encryption computations backwards on the attacked ciphertext. If the ciphertext was \emph{not} modified, this returns the registers into their initial blank state, which can be detected. 

We establish confidence in the new security notion by showing that it becomes equivalent to the classical notion when restricted to the post-quantum setting, i.e.\ to classical PKE schemes and classical plaintexts and ciphertexts. We also show how to satisfy the new security notion using a classical-quantum hybrid construction.

Along the way, we chart the landscape of one-time non-malleability notions for symmetric-key quantum encryption. We propose definitions for plaintext and ciphertext non-malleability and explore their relationship with existing definitions. In particular, we present evidence that these notions are the right ones.

\subsection{Related Work}

Non-malleability has been studied extensively in the classical setting, see \cite{bellare1999non,Pass2006} and references therein. In quantum cryptography, non-malleability has been, to our knowledge, subject of only two earlier works \cite{ambainis2009nonmalleable,NonMal}, which were only concerned with one-time security for symmetric-key encryption. 

Quantum public-key encryption has  been studied in \cite{Broadbent2015,Alagic2016} with respect to confidentiality.

Problems due to quantum no-cloning and the destructive nature of quantum measurement similar to the ones we face in this work have arisen before in the literature.  In particular, devising security notions for quantum encryption where the classical security definition requires copying and comparing plaintexts or ciphertexts \cite{alagic2018unforgeable,SignCrypt}, as well as in some quantum attack models for classical cryptography \cite{Boneh2013,Boneh2013a,Alagic2018} requires tackling similar obstacles. Another important case where the generalization of classical techniques is complicated by the mentioned features of quantum theory is that of rewinding and reprogramming \cite{Unruh2012, Watrous2018, Don2019}.

\subsection{Summary of Contributions}

The contributions presented in this paper can be divided into two categories, depending on whether they concern symmetric-key encryption (SKE) or public-key encryption (PKE). While we consider our results of the latter kind our main contribution, they build upon the former results. We therefore begin by presenting our results on one-time non-malleability of quantum SKE in Section~\ref{sec:SKE}, after which we continue with the results on many-time  non-malleability for quantum PKE in Section~\ref{sec:PKE}.

\subsubsection{Symmetric-Key Non-Malleability}

All security notions that are concerned with malleability attacks come in two flavors, a \emph{plaintext} and a \emph{ciphertext} one. The difference is that in the former, an attack that modifies a ciphertext into a different one that decrypts to the same plaintext is considered harmless, while the latter considers any modification yielding the encryption of a related plaintext a successful attack. We refine the non-malleability notion \NM introduced in \cite{NonMal}, to obtain a definition for both ciphertext and plaintext non-malleability, while staying in the \emph{effective-map based framework}. The effective map resulting from a one-time malleability attack is the map on the plaintext space, that is implemented by the procedure of encrypting the input, applying the attack and subsequently decrypting the result again.

\begin{definition}[\PNM and \CiNM, informal]
A scheme is \emph{plaintext non-malleable, (\PNM))}, if for any attack $\Lambda$ the effective map $\Tilde{\Lambda}$ consists of replacing the plaintext with a random decryption with some probability $p$, and leaving it unchanged otherwise. If the probability $p$ is equal to the probability that $\Lambda$ acts as the identity on a random ciphertext, it is even \emph{ciphertext non-malleable (\CiNM))}.
\end{definition}


There is one important subtlety that we would like to highlight here. The notion \NM from \cite{NonMal} is very similar to \CiNM. The only difference is that in the former, $p$ is derived from the probability that $\Lambda$ acts as identity on a random element of the ciphertext space, including ciphertexts that are not even valid. For \CiNM, on the other hand, $p$ is the probability that  $\Lambda$ acts as identity on a ciphertext that is generated by picking a random plaintext and then encrypting it.

We continue by exploring the relationship between \NM, \CiNM, and \PNM. In particular, we present separating examples between \NM and \CiNM, \NM and \PNM, and \CiNM and \PNM, and  show that both notions of ciphertext non-malleability, \NM and \CiNM, imply plaintext non-malleability,
\begin{theorem}[\ref{thm:NMorCiNMimpliesPNM}]\label{thm:NMorCiNMimpliesPNM-summary}
	Any $\eps$-\NM or $\eps$-\CiNM \SKQES is $\eps$-\PNM.
\end{theorem}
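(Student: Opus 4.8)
The plan is to unpack all three notions into their shared ``effective-map'' form and then observe that the ideal maps certifying $\eps$-\NM and $\eps$-\CiNM are already of the shape permitted by $\eps$-\PNM. Concretely, each of the three notions asserts that, for every admissible attack $\Lambda$, the induced effective map $\tilde{\Lambda}$ on the plaintext space is $\eps$-close (in the diamond norm) to an ideal channel of the form
\begin{equation}
  \mathcal{I}_p = (1-p)\,\id + p\,R,
\end{equation}
where $R$ is the fixed, attack-independent channel that traces out its input and replaces it with the decryption of a random ciphertext. The only difference between the notions lies in which mixing weights $p\in[0,1]$ are admissible: for \PNM the weight is existentially quantified, whereas \CiNM and \NM each pin it to one concrete value, determined by the probability that $\Lambda$ acts as the identity on a random valid ciphertext (resp.\ on a uniformly random element of the entire ciphertext space).

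With this common form in hand, the argument is a one-line containment, which I would spell out for each hypothesis in turn. Suppose the scheme is $\eps$-\CiNM and let $p^{\ast}$ be the specific weight prescribed by that definition for a given attack $\Lambda$, so that $\dn{\tilde{\Lambda} - \mathcal{I}_{p^{\ast}}} \le \eps$ by assumption. Since $p^{\ast}\in[0,1]$, the channel $\mathcal{I}_{p^{\ast}}$ is itself one of the ideal channels allowed by the \PNM definition, and hence $p^{\ast}$ serves directly as the existential witness demanded by \PNM. Thus the scheme is $\eps$-\PNM, with the parameter $\eps$ preserved exactly. Replacing $p^{\ast}$ by the weight prescribed by \NM gives the $\eps$-\NM $\Rightarrow$ $\eps$-\PNM implication verbatim.

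The substantive part is therefore not the inequality but the identification that makes it trivial, and this is where I would be careful. I would verify that the replacement channel $R$ really is the same object in all three definitions, so that the family $\{\mathcal{I}_p\}_{p\in[0,1]}$ is literally shared and only the selected value of $p$ changes, and that all three notions measure approximation in the same norm. If the formal definitions turn out to be phrased through a distinguisher rather than directly through $\dn{\cdot}$, I would instead argue that committing to a fixed legal value of $p$ can only shrink the adversary's freedom, so that the \CiNM (resp.\ \NM) advantage upper-bounds the \PNM advantage obtained by the optimal choice of $p$. Either way, the implication goes through and the exact value of $\eps$ is retained.
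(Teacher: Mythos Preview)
Your intuition that the \NM/\CiNM ideal simulators should already lie in the family admitted by \PNM is exactly the right one, but the verification you flag as ``the substantive part'' does not go through as you describe it. Two concrete issues. First, the ideal maps in all three definitions are not of the scalar form $(1-p)\,\id+p\,R$: attacks carry a side-information register $B$, and the simulator has the shape $\id^{M}\otimes\Lambda_1^{B\to\hat B}+(\text{second branch})^{M}\otimes\Lambda_2^{B\to\hat B}$ for \CPTNI maps $\Lambda_1,\Lambda_2$ with $\Lambda_1+\Lambda_2$ \CPTP. Collapsing this to a scalar $p$ loses the structure you must preserve. Second, and more importantly, the ``second branch'' is \emph{not} the same object across the definitions: for \NM and \CiNM it is $\tfrac{1}{|C|^2-1}\bigl(|C|^2\lr{\Dec_K(\tau)}-\id\bigr)$, whereas \PNM uses $\tfrac{1}{|M|^2-1}\bigl(|M|^2\lr{\Dec_K(\tau)}-\id\bigr)$. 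So your planned check that ``$R$ is literally the same object'' would fail, and the containment is not a one-line observation once $|M|<|C|$.

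What actually closes the gap is an elementary but non-vacuous rewriting: the $|C|^2$-branch is a convex combination of $\id$ and the $|M|^2$-branch, so one can absorb part of $\Lambda_2$ into $\Lambda_1$ to pass from the \NM/\CiNM simulator to a \PNM simulator with the \emph{same} diamond-norm distance. Concretely, setting $\gamma=\tfrac{(|M|^2-1)|C|^2}{|M|^2(|C|^2-1)}\in(0,1]$ and defining $\Lambda_3=\Lambda_1+(1-\gamma)\Lambda_2$, $\Lambda_4=\gamma\Lambda_2$ yields \CPTNI maps with $\Lambda_3+\Lambda_4=\Lambda_1+\Lambda_2$ \CPTP and makes the two simulators equal as maps. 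This is exactly Lemma~\ref{lem:CtoM} in the paper (applied with $\alpha=|M|^2$), and it is the whole content of Theorem~\ref{thm:NMorCiNMimpliesPNM}. Your outline becomes correct once you replace ``verify $R$ is the same'' with this computation.
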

Intuitively, this result is because \PNM is obtained from \CiNM or \NM by removing the restraint on $p$. Additionally, we give a simplifying characterization of \PNM that allows for efficient simulation following \cite{Broadbent2016}.
We also show that for encryption schemes with unitary encryption map\footnote{More precisely, the encryption with a fixed key is unitary.}, all three notions are equivalent.
\begin{theorem}[\ref{thm:unitaryissimple}, informal]
For symmetric-key encryption schemes with unitary encryption map, \PNM, \CiNM, and \NM are equivalent. 
\end{theorem}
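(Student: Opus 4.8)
The plan is to combine the already-established implications with two new arguments specific to unitary encryption. By Theorem~\ref{thm:NMorCiNMimpliesPNM} we already know that any $\eps$-\NM or $\eps$-\CiNM \SKQES is $\eps$-\PNM, so it remains to close the loop by proving, for schemes with unitary encryption $\Enc_k = U_k$, that (i) \NM and \CiNM coincide, and (ii) \PNM implies \CiNM. The single structural fact I would exploit throughout is that $U_k$ is a \emph{bijection} on a common plaintext/ciphertext space: consequently every ciphertext is valid, $\Dec_k = U_k^\dagger$, and the effective map of any attack $\Lambda$ is the key-averaged conjugation $\tilde\Lambda = \E_k\big[U_k^\dagger\,\Lambda\,U_k\big]$.

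For (i), I would observe that the \emph{only} difference between \NM and \CiNM is how the reference probability $p$ is obtained: \NM reads it off from $\Lambda$ acting as the identity on a uniformly random element of the entire ciphertext space (including invalid ciphertexts), whereas \CiNM reads it off from $\Lambda$ acting as the identity on the encryption of a random plaintext. Under unitarity these two references are literally the same quantum state: there are no invalid ciphertexts, and the maximally mixed state on the plaintext space is carried by every $U_k$ to the maximally mixed state on the ciphertext space. Hence the two values of $p$ agree and the two notions coincide, preserving $\eps$.

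For (ii), I would start from a \PNM scheme, for which every attack yields an effective map of the form $\tilde\Lambda = (1-p)\,\id + p\,R$, where $R$ is the ``replace by a random decryption'' channel; by unitarity a random decryption is just the maximally mixed plaintext, so $R(\rho) = \tau\,\Tr[\rho]$ is completely depolarizing. The remaining task is to pin down $p$ and show it equals the value \CiNM demands. I would do this by taking entanglement fidelities (overlaps of the normalized Choi state with the maximally entangled state $\ket{\Phi}$): linearity gives $F_e(\tilde\Lambda) = (1-p) + p\,d^{-2}$, which determines $p$ uniquely from $F_e(\tilde\Lambda)$. Separately, using the transpose trick $(\id\otimes U_k)\ket{\Phi} = (U_k^T\otimes\id)\ket{\Phi}$, one computes $F_e(\tilde\Lambda) = \E_k\,F_e\big(U_k^\dagger\,\Lambda\,U_k\big)$ and identifies it with the probability that $\Lambda$ acts as the identity on a random ciphertext. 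Matching the two expressions forces $p$ to the \CiNM value.

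The main obstacle I anticipate is exactly this last matching step: the \PNM parameter $p$ is an abstract coefficient living on the \emph{plaintext} side of the effective map, while the \CiNM condition is phrased as an identity probability on the \emph{ciphertext} side, and one must move between them without losing the identity component. Unitarity is what makes the transfer exact — conjugation by $U_k$ is an isometry on channels and preserves the entanglement fidelity — whereas for a lossy encryption map the effective map need not even take the depolarizing form, and the two probabilities decouple (this is what the separating examples mentioned earlier exhibit). Two points will need care: first, replacing all three informal definitions by their formal effective-map-with-flag versions following \cite{NonMal, Broadbent2016} before any computation; and second, checking that the entanglement-fidelity bookkeeping is an exact identity rather than an approximation, so that the error parameter $\eps$ carries over verbatim in every implication.
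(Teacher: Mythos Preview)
Your approach is essentially the paper's: one direction comes from Theorem~\ref{thm:NMorCiNMimpliesPNM}, and for the converse on unitary schemes the paper invokes Theorem~\ref{thm:pnm_charac} to pin down $\Lambda_1=\Trr[MM']{\phi^{+}\tilde\Lambda_A(\phi^{+}\otimes(\cdot))}$ (exactly your entanglement-fidelity extraction of $p$), then rewrites this expression via the mirror lemma (your transpose trick) into both the \CiNM form $\E_{k}\Trr[CM']{\psi_{k}\Lambda_A(\psi_{k}\otimes(\cdot))}$ with $\psi_k=V_k\phi^{+}V_k^\dagger$ and the \NM form with $\phi^{+}$ on $CC'$.

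There is, however, a genuine slip in your part (i). You claim the \NM and \CiNM reference states are ``literally the same quantum state'' because $U_k$ carries $\tau^{\ch M}$ to $\tau^{\ch C}$. That only matches the \emph{marginals}; the pure entangled states $\phi^{+\ch{CC'}}$ and $\psi_k=V_k\phi^{+\ch{MM'}}V_k^\dagger$ are different unless $V_k=V_k^T$, and the overlap $\Tr[\psi\Lambda(\psi)]$ depends on the full state, not just its $C$-marginal. What actually makes the two $\Lambda_1$'s agree is precisely the transpose trick you correctly deploy in (ii): move $V_k$ to the untouched purifying register via $V_k^{\ch M}\ket{\phi^+}=V_k^{T\ch{M'}}\ket{\phi^+}$ and cancel it under the partial trace over $M'$. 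So (i) is not an independent argument but another instance of the computation in (ii); the paper accordingly does not first collapse \NM and \CiNM but proves \PNM$\Rightarrow$\CiNM and \PNM$\Rightarrow$\NM in parallel by the same rewriting. A minor further point: the formal definitions carry a side-information register $B$, so $\Lambda_1,\Lambda_2$ are CP maps rather than scalars $p,1-p$ --- this is why quoting Theorem~\ref{thm:pnm_charac} directly is cleaner than your reduction to a single probability.
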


Finally, we show that one can construct a quantum authentication scheme according to the security definition from \cite{dupuis2012actively} from a \PNM scheme (and therefore, by  Theorem \ref{thm:NMorCiNMimpliesPNM-summary}, also from a \CiNM scheme). This is done by adding a tag to the plaintext during encryption, which is checked during decryption, as proposed for \NM schemes in \cite{NonMal}.

\begin{theorem}[\ref{thm:DNSfromPNM}]
From any \PNM scheme, one can construct a $2^{2-r}$-\DNS-authenticating scheme using $r$ tag qubits.
\end{theorem}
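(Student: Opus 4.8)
The plan is to instantiate the tag-based construction and then push an arbitrary authentication attack through the \PNM guarantee of the underlying scheme. Concretely, to authenticate a message held in register $M$ I would adjoin an $r$-qubit tag register $T$ initialized to $\ket{0^r}$ and encrypt the joint register $MT$ with the \PNM scheme; to verify, I would decrypt $MT$, measure $T$ in the computational basis, and accept (outputting $M$) exactly when the outcome is $0^r$, rejecting and outputting a fixed state $\ket{\perp}$ otherwise. The keys are inherited directly from the underlying scheme, so the resulting object is again a symmetric-key scheme.

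Next I would observe that any adversary against the authentication scheme is in particular a malleability attack $\Lambda$ on the ciphertext register of the \PNM scheme, acting jointly with its side-information register $R$, since the authentication ciphertext is literally the \PNM ciphertext of $MT$. Invoking the \PNM property in the simplified, simulatable form discussed above, the effective map on $MT$ together with $R$ is $\eps$-close (in diamond norm) to a convex combination: with total weight $1-p$ it acts as $\id_{MT}$ while applying some CP map $\Lambda_{\mathrm{id}}$ to $R$, and with total weight $p$ it discards $MT$ and replaces it with the random-decryption state $\tau_{MT}$ (the maximally mixed state on the plaintext register) while applying a CP map $\Lambda_{\mathrm{r}}$ to $R$.

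I would then restrict the inputs to the product form $\rho_M \otimes \kb{0^r}{0^r}_T$ and compose with the tag measurement. On the identity branch the tag is returned in $\ket{0^r}$ with certainty, so verification always accepts and leaves $M$ untouched; this reproduces exactly the accept-and-identity branch of the \DNS ideal experiment, with $\Lambda_{\mathrm{id}}$ playing the role of the simulator on $R$. On the replace branch the tag register carries the marginal of $\tau_{MT}$, which is maximally mixed on $T$, so the measurement yields $0^r$ with probability only $2^{-r}$; with the complementary probability verification rejects and outputs the fixed state $\ket{\perp}$, reproducing the reject branch of the ideal. The sole source of deviation from the ideal is thus the event that a replaced (garbage) state nonetheless passes the tag check, which occurs with probability at most $p\,2^{-r}\le 2^{-r}$.

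The remaining work, and the step that will require the most care, is to convert this bad-event bound together with the $\eps$ from \PNM into a bound on the \DNS distinguishing distance between the real and ideal experiments. I would do this branch by branch: the accept/identity and reject/$\ket{\perp}$ parts match exactly, while the offending sub-normalized component of weight at most $2^{-r}$ sits on the wrong (accept rather than reject) flag subspace and carries the wrong state, contributing at most a constant multiple of $2^{-r}$; together with the norm conventions of the \DNS definition from \cite{dupuis2012actively} this yields the claimed $2^{2-r}$. The main obstacle I anticipate is bookkeeping the side-information register $R$ and the classical accept/reject flag consistently, so that the \PNM simulator maps $\Lambda_{\mathrm{id}}$ and $\Lambda_{\mathrm{r}}$ assemble into a legitimate \DNS simulator, and confirming that the random-decryption state's tag marginal is genuinely maximally mixed so that the $2^{-r}$ bound is valid; for a perfectly \PNM scheme ($\eps=0$) no further error enters, whereas a general $\eps$ would add a term of order $\eps$.
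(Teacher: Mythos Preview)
Your overall strategy matches the paper's: adjoin a tag, encrypt with the \PNM scheme, check the tag after decryption, and read off the \DNS simulator from the \PNM simulator. But there is a genuine gap in your replace-branch analysis.

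In the \PNM effective map, the state that replaces the plaintext on the non-identity branch is $\Dec_K(\tau^{\ch{C}})$, the averaged decryption of a uniformly random \emph{ciphertext}, not the maximally mixed state $\tau^{\ch{MT}}$ on the plaintext space. These need not coincide: the tag marginal $\Trr[M]{\Dec_K(\tau^{\ch{C}})}$ is in general not uniform, and nothing in the \PNM definition forces it to be. Consequently your claim that the replaced state passes the tag check with probability $2^{-r}$ is unjustified, and your fixed choice of tag $\ket{0^r}$ can fail badly: it is perfectly consistent with \PNM that $\Dec_K(\tau^{\ch{C}})$ puts almost all of its tag-register weight on $\ket{0^r}$, in which case the replace branch passes the check with probability close to $1$ and no nontrivial \DNS bound follows. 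You flag this very point at the end as something to ``confirm'', but it is not a formality to be checked---it is simply false in general.

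The paper closes this gap with a short averaging argument (Lemma~\ref{lem:dns_tag}): since $\sum_{x}\Trr{\bra{x}^{\ch{R}}\Dec_K(\tau^{\ch{C}})\ket{x}^{\ch{R}}}=1$, there must exist \emph{some} computational-basis tag $x$ with $\Trr{\bra{x}\Dec_K(\tau^{\ch{C}})\ket{x}}\le 2^{-r}$, and the construction uses that particular $x$ rather than $\ket{0^r}$. With this choice the replace branch passes the check with probability at most $2^{-r}$, and the rest of your outline (the identity branch matching the accept part, the reject-to-$\ket\bot$ part, and bounding the small offending accept component) then goes through, yielding $3\cdot 2^{-r}+\eps$ and hence the stated $2^{2-r}$.
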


\subsubsection{Public-Key Non-Malleability}

We propose a definition for public-key quantum non-malleability in a computational setting, by adapting the classical definition for comparison-based non-malleability found in \cite{bellare1999non}, a real-vs-ideal definition. In the following, we describe informally what main challenges the generalization of the classical security experiments (the real and the ideal one) to the quantum setting poses, and how we resolved them. 

In the first step in the classical security experiments, the adversary submits a probability distribution $p$ over messages. In both experiments, a plaintext from this distribution is sampled, encrypted and sent to the adversary. The adversary now has the opportunity to manipulate (or ``malleate'') the ciphertext with the goal that the output decrypts to a \emph{related} plaintext.\footnote{In the actual experiments, the adversary is allowed to transform the ciphertext into many attempted manipulated ciphertexts. In this informal description we simplify as no significant additional technical challenges arise from the generalization.} The relation according to which the plaintexts are related, is supplied by the adversary. Of course there are examples of relations that allow for easy creation of a ciphertext that decrypts to a related plaintext, like e.g. the trivial relation where any plaintext is related to any other plaintext. To not credit an adversary with a break for fulfilling such a relation, her success in two experiments is compared: in the real world, the relation is evaluated on the initial and final plaintexts, but in the ideal world, it is evaluated on the final plaintext and a plaintext that is independently sampled from $p$.

Attempting a naive quantum generalization, we face two main challenges: How does the challenger ensure that the ciphertext he received from the adversary is actually modified? And how does he evaluate a relation on the input plaintext and the decrypted one? Both questions are complicated by the fact that quantum information cannot be copied. The first question has a rather elegant solution. Instead of asking the adversary to provide a distribution of messages, we ask her to provide a state preparation circuit, a strict generalization of the former. Such a state preparation circuit starts from a blank register and prepares a quantum state on the plaintext register and an auxiliary register. But quantum operations are reversible, which means that to test whether the plaintext has changed after encryption, attack and decryption,\footnote{Here we have to undo encryption in a different way as decryption is an irreversible process in general, see Section~\ref{sec:PKE} for details.} we can run preparation backwards and measure whether we got back a blank register. If so, the ciphertext was not changed, and the candidate manipulated plaintext is discarded. If not, we run preparation forward again, recovering (the actually changed part of) the adversary's candidate malleation.

The second question is solved by exploiting the fact that in the quantum setting, any message-sampling procedure can be implemented by first performing a unitary quantum computation, and then discarding the contents of an auxiliary register. We can therefore ask the adversary to provide a state-sampling unitary and store the auxiliary register as a record indicating which plaintext has been created. After proceeding with the experiment as in the classical case and using the modification test as described above, instead of checking whether the original and the attacked plaintext are related, we can now, in the real world case, check whether the attacked plaintext is related to the record. In the ideal world, the record is replaced with an independently created one.


\begin{definition}[\QCNM, informal]
	A scheme is quantum comparison-based non-malleable (\QCNM) if no adversary can achieve a better than negligible advantage in distinguishing the real and ideal versions of the quantum comparison-based non-malleability experiment described above.
\end{definition}
We go on to show that \QCNM is  a consistent generalization of \CNM.
\begin{theorem}[\ref{thm:QCNMequivPQ}, informal]
When restricted to the post-quantum setting, \QCNM and \CNM are equivalent.
\end{theorem}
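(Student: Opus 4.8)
The plan is to prove the equivalence by exhibiting, for every adversary against one notion, an adversary against the other that achieves the same distinguishing advantage between the real and ideal experiments; negligibility of the advantage then transfers in both directions. Throughout I fix a classical scheme $(\KeyGen,\Enc,\Dec)$ with classical plaintext and ciphertext alphabets and argue with \QPT adversaries, so that the real content is a \emph{structural} identity: the quantum experiments $\QCNMreal/\QCNMideal$, specialized to such a scheme, implement the classical experiments $\CNMreal/\CNMideal$ step by step.

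The easy direction converts an adversary breaking \CNM into one breaking \QCNM. Given a classical adversary that outputs a message distribution $p$ and a relation $R$, I build the \QCNM-adversary whose state-preparation circuit is the unitary $U_p$ with $U_p\ket{0}=\sum_m \sqrt{p(m)}\,\ket{m}_P\ket{m}_R$, coherently copying the sampled message into the auxiliary record, whose malleation is the classical ciphertext transformation executed in the computational basis, and whose relation measurement is the computational-basis projector evaluating $R$ on the record and the decrypted output. Since every register is either classical or immediately decohered by $\Enc$/$\Dec$, the quantum experiment reproduces the classical one, the ideal-world resampling of the record matches the ideal-world independent resampling of the plaintext, and I would check that the reverse-preparation modification test coincides exactly with the classical constraint that the output ciphertext differ from the challenge ciphertext. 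The two advantages are therefore equal.

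The substantive direction converts a \QCNM-adversary $(U,\Lambda,M)$---state-preparation unitary, malleation channel, relation measurement---into a \CNM-adversary. Here I use that in the post-quantum setting $\Enc$ acts as a computational-basis measurement of the plaintext register followed by a classical function, so the post-encryption state is a classical ensemble $\{(m,\psi_m,\Enc(\pk,m))\}$ with $m$ distributed according to the $p$ induced by measuring $U\ket{0}$ on the plaintext register, and $\Dec$ likewise decoheres the malleated ciphertext. The \CNM-adversary I construct samples $m\sim p$, prepares the residual record $\psi_m$ internally, runs $\Lambda$ and the decryption-equivalent modification test, and applies $M$; the classical relation it submits is the predicate computed by $M$ on the now-classical record and decrypted output.

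The main obstacle is precisely this collapse argument in the second direction: I must show that the quantum adversary gains nothing from coherence between the plaintext register and the auxiliary record, and that the reverse-preparation test reduces to the classical ``ciphertext unchanged'' check even when $\psi_m$ is not a clean copy of $m$. The former follows because $\Enc$ measures the plaintext register, destroying any such coherence before $M$ ever touches the record; the latter follows because, for a classical scheme, running $\Enc$ backwards on an unaltered ciphertext returns $m$ on the plaintext register, so reverse preparation yields the blank register if and only if the malleated ciphertext equals the challenge ciphertext. Once these two structural facts are established, the remaining correspondence of operations is routine bookkeeping, and the real and ideal advantages match exactly, yielding the claimed equivalence.
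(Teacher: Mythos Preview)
Your easy direction (converting a \CNM adversary into a \QCNM adversary by purifying the message distribution and reading off the relation as a projector) is essentially the paper's construction, modulo the minor point that $U_p$ should be built from the sampling circuit rather than from the probabilities $\sqrt{p(m)}$ to keep it efficient.

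The substantive direction, however, has a real gap. You write that the \CNM-adversary ``samples $m\sim p$, prepares the residual record $\psi_m$ internally'' and that the classical relation applies the POVM to this record. But in the \CNM experiment the plaintext $m$ is sampled by the \emph{challenger}, not the adversary; neither $\MA_2$ nor the relation it outputs sees $m$ until the very last step, where the relation is handed the pair $(x,\mathbf{x})$. At that point the relation would have to produce the conditional state $\psi_x$ of the reference register given that the message register collapsed to $x$. Given only the circuit for $U$ and the value $x$, preparing $\psi_x$ is post-selection: you can run $U$ on $\ket{0}$, measure the message register, and hope to see $x$, but this succeeds with probability $p(x)$, and the expected number of trials is $\sum_x p(x)\cdot 1/p(x)=|\mathrm{supp}(p)|$, which can be exponential. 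So the relation you describe is not \QPT in general, and the reduction breaks.

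The paper sidesteps this with a two-sample trick you do not mention. Its $\MB_1$ runs $U$ \emph{twice}, measures both copies to obtain $(m_0,z_0)$ and $(m_1,z_1)$, hands the challenger the uniform distribution over $\{m_0,m_1\}$, and stashes all four values in the state $s$. The relation, now holding $m_0,z_0,m_1,z_1$, simply looks up which $m_i$ equals the challenger's $x$ and uses the matching $z_i$ as the record---no conditional sampling needed. The price is that in the ideal experiment the freshly drawn $\tilde x$ coincides with the challenge $m_i$ with probability $\tfrac12$, so the \CNM advantage of $\MB$ is only half the \QCNM advantage of $\MA$; this factor is harmless for negligibility. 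Your decoherence observations (that classical $\Enc$ collapses the message register and that the reverse-preparation test becomes a ciphertext-equality check) are correct and are used implicitly in the paper, but they do not by themselves furnish the missing access to $\psi_m$.
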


Finally, we show that a \QCNM scheme can be constructed from a \CiNM scheme. The quantum-classical hybrid construction, which was extensively studied in \cite{SignCrypt} with respect to confidentiality and integrity,
 is obtained by encrypting every plaintext with a symmetric-key, one-time secure quantum encryption scheme and a fresh key, and then encrypting that key with a non-malleable classical public-key encryption scheme and appending it to the ciphertext.
\begin{theorem}[\ref{thm:security}, informal]
Using a \CNM classical scheme and a \CiNM quantum scheme it is possible to construct a \QCNM scheme via quantum-classical hybrid encryption.
\end{theorem}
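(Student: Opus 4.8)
The plan is to prove the statement through a hybrid argument that decouples the two ingredients of the construction. Recall the scheme: to encrypt a quantum plaintext under public key \pk one samples a fresh key $k$ for the \CiNM quantum \SKQES, encrypts the plaintext under $k$ to obtain a quantum ciphertext $c_q$, encrypts $k$ under \pk with the classical \CNM scheme to obtain a classical ciphertext $c_c$, and outputs the pair $(c_c,c_q)$; decryption recovers $k$ from $c_c$ and then uses it to decrypt $c_q$. The structural feature I would exploit throughout is that the classical component $c_c'$ of any ciphertext the adversary returns lives in a classical register and may therefore be measured without disturbing the rest of her state. This yields a clean, disjoint case distinction for every malleation attempt: either the returned classical ciphertext satisfies $c_c' = c_c$ (the encapsulated key is unchanged) or $c_c' \neq c_c$ (the key may have been altered).

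First I would treat the \emph{key-unchanged} branch. Conditioned on $c_c' = c_c$, decrypting the returned ciphertext amounts to decrypting the quantum part $c_q'$ under the \emph{same} fixed key $k$, so the map induced on the plaintext register is precisely the effective map of a one-time malleability attack against the quantum \SKQES. Since confidentiality of the classical \CNM scheme guarantees that $c_c$ leaks no usable information about $k$, this is a legitimate attack with hidden key, and the \CiNM guarantee applies to the induced effective map directly: it must, up to the scheme's error, be the convex combination of the identity and a replacement by a fresh random decryption, with the replacement weight $p$ pinned to the identity-on-a-random-ciphertext probability. Crucially, it is the ciphertext-variant matching of $p$, rather than mere plaintext non-malleability, that lets this branch line up with the real-versus-ideal comparison of the \QCNM experiment, in which the reverse-and-measure modification test screens out the identity part and the relation is evaluated against the record.

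Next I would treat the \emph{key-changed} branch, where the two assumptions interact. Because the adversary outputs $c_c' \neq c_c$, I would build a \CNM distinguisher that embeds the \QCNM challenge into the classical comparison-based game, so that passing from the real world to the world in which the key $k'$ underlying $c_c'$ is produced independently of $k$ costs only a negligible amount. Once $k'$ is effectively independent of $k$, the one-time confidentiality of the quantum \SKQES ensures that decrypting $c_q$ --- an encryption under $k$ --- with the unrelated key $k'$ produces a state statistically independent of the original plaintext, which is exactly the distribution of a fresh decryption in the ideal \QCNM world. Here I would take care that the record register, against which the relation is ultimately evaluated, is carried through and handled identically in both worlds, so that no spurious correlation is introduced.

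Finally I would chain these observations into a single sequence of hybrids running from the real \QCNM experiment to the ideal one, bounding each hop by $\negl$ using \CNM, the one-time confidentiality of the quantum scheme, and \CiNM in turn. The main obstacle I anticipate is not any one of the reductions in isolation but the faithful embedding of the entire \QCNM machinery --- the adversary's state-preparation unitary, the auxiliary record register, and in particular the reverse-and-measure modification test that undoes encryption in a non-straightforward way --- inside the rather differently phrased \CNM comparison game and \CiNM effective-map framework. Reconciling the coherent modification test with the \CiNM characterization, and verifying that measuring the classical register $c_c'$ disturbs neither the record nor the quantum plaintext, is where I expect the bulk of the technical work to lie.
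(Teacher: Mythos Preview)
Your high-level strategy coincides with the paper's: hybridize using the classical \CNM assumption and the quantum \CiNM assumption, with a case split on whether the classical component of each returned ciphertext equals the challenge classical ciphertext $c$. The organization, however, differs in a way that matters. In your key-unchanged branch you invoke \CiNM directly, arguing that confidentiality of the classical scheme hides $k$. This is not enough as stated: the adversary's attack map on the quantum ciphertext is a function of $c=\Enc^\Cl_\pk(k)$, hence depends on $k$, whereas the \CiNM definition concerns a \emph{fixed} attack map with the expectation over $k$ taken afterwards. The paper therefore first passes \emph{globally} to ``spoofed classical key'' experiments in which $c$ encrypts a fresh independent key $k'$; only then is the adversary's behavior decoupled from $k$ and \CiNM applicable. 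To justify this hop the paper uses the equivalence \CNM $\Leftrightarrow$ \INDparCCAA from \cite{bellare1999non}, because the reduction must still decrypt all classical parts $c'\neq c$ that the adversary returns (your plan to embed directly into the \CNM comparison game is awkward here, as the \CNM relation is evaluated on decrypted plaintexts rather than handed back to the reduction). Your key-changed branch also inverts the logic slightly: what one replaces is the challenge $c$, not the adversary's $c'$; after spoofing, the adversary's entire view is independent of $k$, and then \IND of the quantum scheme makes $\mathbf M$ independent of $R$.

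Two further points the paper handles explicitly and your plan omits. First, the adversary outputs a \emph{vector} of ciphertexts, several of which may reuse $c$, so in the key-unchanged branch one needs the multi-decryption effective map $\E_k\left[(\Dec_k)^{\otimes \ell}\circ\Lambda\circ\Enc_k\right]$; the paper proves a separate lemma showing that \CiNM forces this map into the required form $\sum_i p_i\,\id^{\ch[M]{M_i}}\otimes\sigma_i+p_0\lr{\sigma_0}$. Second, the paper restricts the quantum \SKQES to have \emph{unitary} encryption; this is precisely what lets the reverse-and-measure modification test of the \QCNM experiments be rewritten with $U$ in place of $U_{prep}$ after decryption, resolving the obstacle you correctly identify at the end.
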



\section{Preliminaries}
\label{sec:prelims}

In this section, we introduce the notation and conventions used and provide a very brief overview of background material. For a more general overview of quantum computing see, for example, \cite{Watrous2018}.

\subsection{Conventions and Notation}
The adjoint of a complex matrix $M$ is denoted by $M^\dagger$ and its trace as $\Trr{M}$. All Hilbert spaces $\HH_A$ in this work have dimension $|A|:= \dim(\HH_A) = 2^m$ for some $m\in\N$. For Hilbert spaces $\HH_A$, and $\HH_B$, we write $\I^{\ch{A}}$ for the identity matrix on $\HH_A$, or $\I$ if the space is clear from context, and $\Z^{\ch[A]{B}}$ or $\Z^{\ch{A}}$ for the all-zero matrix of dimension $|A| \times |B|$ or $|A| \times |A|$ respectively. 
We denote the set of square matrices that act on $\HH_A$ as $\MB(\HH_A)$. We call a function $\eps(n)$ negligible (denoted $\eps \leq \negl(n)$) if for every polynomial $p$ there exists $n_0 \in \N$ such that for all $n \geq n_0$ it holds that $\eps(n) < \frac{1}{p(n)}$. Furthermore we use $\log(x)$ to denote the base-$2$ logarithm of $x$.

\subsection{Quantum States and Operations}
We use bra-ket notation to denote a norm-$1$ vector $\ket{\phi} \in \HH_A$, sometimes denoted $\ket{\phi}^{\ch{A}}$ for clarity. The set $\{\ket{x}^{\ch{A}} \mid x \in \{0,1\}^n\}$ forms a basis of $\HH_A$ with $|A| = 2^n$, which is called the \emph{computational basis}. Quantum states are described by \emph{density matrices}, which are positive semi-definite Hermitian matrices with trace $1$. The set of density matrices on $\HH_A$ is denoted by $\MD(\HH_A)$. 
The \emph{maximally mixed state} is defined as $\tau^{\ch{A}} = \frac{\I}{|A|}$. Furthermore we use $\phi^{+\ch{AA'}} = \kb{\phi^+}{\phi^+}^{\ch{AA'}}$ to denote the (standard) \emph{maximally entangled state}, where $\ket{\phi^+}^{\ch{AA'}} = \frac{1}{\sqrt{|A|}}\sum\limits_{x\in \{0,1\}^{\log(|A|)}} \ket{xx}$.

A quantum state can be stored in a \emph{quantum register}, which can be thought of as the quantum equivalent of a variable. A register $A$ can store a density matrix $\rho \in \MD(\HH_A)$. In a cryptographic setting a ``register'' $A$ is often an infinite family of registers, one for each value of the security parameter. The action of a quantum algorithm can be described as a completely positive trace-preserving (\CPTP) map (a quantum channel). Sometimes the trace preserving property is relaxed to trace non-increasing, in which case we call it a \CPTNI-map. If a quantum algorithm has a classical argument then it is understood that this argument is converted to the computational basis and classical outputs are obtained by measuring in the computational basis. We write $\Lambda^{\ch[A]{B}}$ to mean a \CPTP map from register $A$ to register $B$. 
When a quantum channel $\Lambda^{\ch[A]{B}}$ is evaluated on a state $\rho^{\ch{AC}}$, then it implicitly acts as identity on register $C$, meaning $\Lambda^{\ch[A]{B}}(\rho^{\ch{AC}}) = (\Lambda^{\ch[A]{B}} \otimes \id^{\ch{C}})(\rho^{\ch{AC}})$. To quantify the difference between quantum channels we will use the diamond norm, or completely bounded trace norm, defined as
\[ \dn{L^{\ch[A]{B}}} = \max_{\rho^{\ch{AA'}}} \nm{(L \otimes \id^{\ch{A'}})(\rho)}_1,\]
where $A'$ is a copy of the $A$ register and $\nm{M}_1 = \Trr{\sqrt{M^\dagger M}}$.
For a quantum state $\sigma$, we define the CPTP map $\lr{\sigma}(\cdot)=\sigma\Tr(\cdot)$, i.e. $\lr{\sigma}$ is the constant quantum channel that maps every input state to $\sigma$.

We write $y \leftarrow A(x_1,\dots, x_n)$ to mean that $y$ is the result of running an algorithm $A$ on inputs $x_1,\dots, x_n$, and similarly $Y \leftarrow A(X_1,\dots, X_n)$ to mean that register $Y$ holds the state resulting from running the quantum algorithm $A$ on input registers $X_1,\dots,X_n$. We write PPT to denote a uniform polynomial-time family of classical circuits and QPT to denote a uniform polynomial-time family of quantum circuits.

\subsection{(Quantum) Encryption Schemes} We follow the conventions used in \cite{alagic2018unforgeable}, in particular we use $\Enc_k = \Enc(k, \cdot)$ and $\Dec_k = \Dec(k, \cdot)$. We begin by defining symmetric-key and public-key quantum encryption schemes.

\begin{definition}
A \emph{symmetric-key quantum encryption scheme (\SKQES)} is a triple $(\KeyGen, \Enc, \Dec)$, where
\begin{itemize}
    \item $\KeyGen$ is a PPT algorithm that given a security parameter $n\in\N$ outputs a key $k$,
    \item $\Enc$ is a QPT algorithm which takes as input a classical key $k$ and a quantum state in register $M$ and outputs a quantum state in register $C$,
    \item $\Dec$ is a QPT algorithm which takes as input a classical key $k$ and a quantum state in register $C$ and outputs a quantum state in register $M$ or $\kb{\bot}{\bot}^{\ch{\bot}}$,
\end{itemize}
such that $\dn{\Dec_k \circ \Enc_k - \id^{\ch[M]{M\oplus\bot}}} \leq \negl(n)$ for all $k \leftarrow \KeyGen(1^n)$.
\end{definition}

\begin{definition}
A \emph{public-key quantum encryption scheme (\PKQES)} is a triple $(\KeyGen, \Enc, \Dec)$, where
\begin{itemize}
    \item $\KeyGen$ is a PPT algorithm that given a security parameter $n \in \N$ outputs a pair of keys $(pk, sk)$,
    \item $\Enc$ is a QPT algorithm which takes as input a classical public key $pk$ and a quantum state in register $M$ and outputs a quantum state in register $C$,
    \item $\Dec$ is a QPT algorithm which takes as input a classical secret key $sk$ and a quantum state in register $C$ and outputs a quantum state in register $M$ or $\kb{\bot}{\bot}^{\ch{\bot}}$,
\end{itemize}
s.t. $\dn{\Dec_{sk} \circ \Enc_{pk} - \id^{\ch[M]{M\oplus\bot}}} \leq \negl(n)$ for all $(pk, sk) \leftarrow \KeyGen(1^n)$.
\end{definition}

It is implicit that 
$|M| \leq |C| \leq 2^{q(n)}$ for some polynomial $q$. Furthermore we only consider \emph{fixed-length} schemes, which means $|M|$ is a fixed function of $n$. Lastly we adopt the convention that every honest party applies the measurement $\{\kb{\bot}{\bot}, \I - \kb{\bot}{\bot}\}$ after running $\Dec$, and denote with $\Dec_k( C) \neq \bot$ the event that this measurement did not measure $\kb{\bot}{\bot}$ and thus produced a valid plaintext. Because of this convention we often state that the output space of $\Dec$ is $\MD(\HH_M)$ although it is technically $\MD(\HH_M \oplus \HH_\bot)$, where $\HH_\bot = \mathbb{C}\ket{\bot}$.

\begin{theorem}[Lemma 1 in \cite{SignCrypt}]
\label{thm:charac}\ \\
    Let $\Pi = (\KeyGen, \Enc,\Dec)$ be a \SKQES, then $\Enc$ and $\Dec$ have the following form, for all $k \leftarrow \KeyGen$:
    \begin{align*}
        &\dn{\Enc_k - V_k((\cdot)^{\ch{M}} \otimes \sigma_k^{\ch{T}})V_k^\dagger} \leq \eps\\
		&\dn{\Dec_k(V_k P_{\sigma_k}^{\ch{T}}(V_k^\dagger (\cdot)^{\ch{C}}V_k)P_{\sigma_k}^{\ch{T}}V_k^\dagger) - \Trr[T]{P_{\sigma_k}^{\ch{T}}(V_k^\dagger (\cdot)^{\ch{C}}V_k)P_{\sigma_k}^{\ch{T}}}} \leq \eps.
    \end{align*}
	Here $\sigma_k$ is a state on register $T$, $V_k$ is a unitary $\eps \leq \negl(n)$. Furthermore, $P_{\sigma_k}$ is an orthogonal projectors such that $\dn{P_{\sigma_k}\sigma_kP_{\sigma_k} - \sigma_k} \leq \eps$ and $\bar{P}_{\sigma_k} = \I - P_{\sigma_k}$. 
	
	Furthermore, for every $k$ there exists a probability distribution $p_k$ and a family of quantum states $\ket{\psi_{k,r}}^{\ch{T}}$ such that $\Enc_k$ is $\eps$-close to the following algorithm:
    \begin{enumerate}
        \item sample $r \xleftarrow{p_k} \{0,1\}^{\log |T|}$;
        \item apply the map $\Enc_{k;r}(X^{\ch{M}}) = V_k(X^{\ch{M}} \otimes \psi_{k,r}^{\ch{T}})V_k^\dagger$.
    \end{enumerate}
\end{theorem}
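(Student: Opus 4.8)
The plan is to derive the structure of $\Enc_k$ purely from the correctness condition $\dn{\Dec_k\circ\Enc_k - \id^{\ch[M]{M\oplus\bot}}}\leq\negl(n)$, treating the key $k$ as fixed throughout. First I would pass to a Stinespring dilation of the encryption channel: there is a fixed ancilla state $\ket{0}^{\ch{A}}$ and a unitary $U_k$ on $MA$ such that $\Enc_k(X) = \Trr[E]{U_k(X^{\ch{M}}\otimes\kb{0}{0}^{\ch{A}})U_k^\dagger}$, where the output of $U_k$ is split into the ciphertext register $C$ and an environment register $E$. The complementary channel is then $\Enc_k^c(X)=\Trr[C]{U_k(X^{\ch{M}}\otimes\kb{0}{0}^{\ch{A}})U_k^\dagger}$, mapping $M\to E$. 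The key conceptual step is an information--disturbance argument: because the plaintext can be recovered from $C$ alone (which is exactly what correctness asserts), essentially no information about the plaintext can remain in $E$. Quantitatively, I would invoke the continuity of the Stinespring dilation in diamond norm to conclude that $\Enc_k^c$ is $\negl(n)$-close to a constant channel $\lr{\xi_k}$ for some fixed state $\xi_k$ on $E$.

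Next I would exploit the near-constancy of the complementary channel to extract the product structure. In the idealized case $\Enc_k^c=\lr{\xi_k}$, writing $\xi_k=\sum_j q_j\kb{e_j}{e_j}$ in an eigenbasis and defining Kraus operators $M_j=(\I^{\ch{C}}\otimes\bra{e_j}^{\ch{E}})U_k(\cdot\otimes\ket{0}^{\ch{A}})$, constancy is equivalent to the orthogonality relations $M_i^\dagger M_j=q_i\delta_{ij}\I^{\ch{M}}$. These say precisely that $M_j=\sqrt{q_j}V_j$ for an isometry $V_j$ and that the $V_j$ have mutually orthogonal ranges. Bundling them into a single isometry $V_k$ on $MT$ via $V_k(\ket{\psi}^{\ch{M}}\otimes\ket{j}^{\ch{T}})=V_j\ket{\psi}^{\ch{M}}$ and setting $\sigma_k=\sum_j q_j\kb{j}{j}^{\ch{T}}$ then gives $\Enc_k(X)=V_k(X^{\ch{M}}\otimes\sigma_k^{\ch{T}})V_k^\dagger$ exactly; padding $T$ and completing the columns makes $V_k$ unitary, as $|C|\geq|M|$. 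For the actual approximate hypothesis I would carry the $\negl(n)$ error through, obtaining operators with $M_i^\dagger M_j\approx q_i\delta_{ij}\I$, and then perturb them to exact isometries with orthogonal ranges while controlling the induced diamond-norm error; this yields the first displayed bound.

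With $V_k$ and $\sigma_k$ in hand, the remaining claims follow more directly. I would take $P_{\sigma_k}$ to be the spectral projector of $\sigma_k$ onto its eigenvectors of non-negligible weight; discarding the negligible tail gives $\dn{P_{\sigma_k}\sigma_kP_{\sigma_k}-\sigma_k}\leq\eps$ immediately. For the decryption form, note that $\Enc_k$ maps into the ``code space'' $V_k(\HH_M\otimes\mathrm{supp}(\sigma_k))$, and on this subspace the canonical decryption $\Trr[T]{P_{\sigma_k}^{\ch{T}}(V_k^\dagger(\cdot)V_k)P_{\sigma_k}^{\ch{T}}}$ recovers $X$ from $\Enc_k(X)$ up to $\eps$. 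Since correctness forces $\Dec_k$ to do the same on the image of $\Enc_k$, and a channel is determined on the code space by its action on a spanning set, I would argue that $\Dec_k$ agrees with the canonical decryption up to $\eps$ once its argument has been projected into the code space by $V_kP_{\sigma_k}^{\ch{T}}V_k^\dagger$, which is exactly the second displayed bound. Finally, the sample-then-apply form is obtained by fixing any convex decomposition $\sigma_k=\sum_r p_k(r)\kb{\psi_{k,r}}{\psi_{k,r}}^{\ch{T}}$ (e.g.\ the spectral one), whence $V_k(X\otimes\sigma_k)V_k^\dagger=\sum_r p_k(r)V_k(X\otimes\psi_{k,r})V_k^\dagger$ is literally the claimed randomized algorithm, with $\eps$-closeness to $\Enc_k$ inherited from the first bound.

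The main obstacle I anticipate is the quantitative robustness underlying the first two steps: turning ``$\Dec_k\circ\Enc_k$ is close to the identity'' into ``$\Enc_k^c$ is close to a constant channel'' requires a sharp information--disturbance estimate, and then converting the approximate relations $M_i^\dagger M_j\approx q_i\delta_{ij}\I$ into an exact unitary $V_k$ demands a perturbation argument whose error must stay $\negl(n)$ after both the channel-to-operator and operator-to-channel translations. The exact-case algebra is routine; it is the propagation of the approximation errors through these two translations that needs care.
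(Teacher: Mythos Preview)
This theorem is not proven in the paper; it is quoted as ``Lemma 1 in \cite{SignCrypt}'' and used as a black box throughout. There is therefore no proof in the present paper to compare your proposal against.

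Your outline is nevertheless the standard route to such characterisation results and is essentially sound. One step deserves more care than you give it. For the second displayed bound you write that ``a channel is determined on the code space by its action on a spanning set'', but correctness only tells you $\Dec_k(V_k(X\otimes\sigma_k)V_k^\dagger)\approx X$, and the operators $X\otimes\sigma_k$ do \emph{not} linearly span $\MB(\HH_M\otimes\mathrm{range}(P_{\sigma_k}))$ when $\mathrm{rank}(\sigma_k)>1$. The gap is closed by a further argument you do not mention: writing $\sigma_k=\sum_j q_j\kb{j}{j}$ and applying correctness to $\phi^{+\ch{MM'}}$, purity of the output $\phi^+$ forces each term $\Dec_k(V_k(\phi^+\otimes\kb{j}{j})V_k^\dagger)$ with $q_j>0$ to be (approximately) $\phi^+$ individually; a Stinespring dilation $W$ of $\Dec_k(V_k(\cdot)V_k^\dagger)$ then satisfies $W(\ket\psi\otimes\ket j)\approx\ket\psi\otimes\ket{e_j}$ for some vectors $\ket{e_j}$, and it is only the isometry relation $W^\dagger W=\I$ that forces the $\ket{e_j}$ to be approximately orthonormal, whence $\Dec_k(V_k(\cdot)V_k^\dagger)\approx\Tr_T$ on the full code space rather than merely on the image of $\Enc_k$.
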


In this paper we will only consider schemes where all the actions described in Theorem \ref{thm:charac} can be implemented by a PPT or QPT algorithm.\footnote{Of course, the algorithms $\Enc $ and \Dec have some efficient implementation, but in principle that might differ from the above one.}

\subsection{Security Definitions}

In this paper we will build upon the classical definitions of non-malleability \cite{bellare1999non} and the existing quantum definitions of non-malleability \cite{ambainis2009nonmalleable,NonMal}.\\
\begin{minipage}{.46\textwidth}
\begin{algorithm}[H]
	\caption{\CNMreal}
\DontPrintSemicolon
\SetKwInOut{Input}{Input}\SetKwInOut{Output}{Output}
\Input{$\Pi, \MA, n$}
\Output{$b \in \{0,1\}$}
\BlankLine
$(pk, sk) \leftarrow \KeyGen(1^n)$\;
$(M,s) \leftarrow \MA_1(pk)$\;
$x \leftarrow M$\;
$y \leftarrow \Enc_{pk}(x)$\;
$(R, \mathbf{y}) \leftarrow \MA_2(s,y)$\;
$\mathbf{x} \leftarrow \Dec_{sk}(\mathbf{y})$\;
Output $1$ iff $(y\not\in\mathbf{y}) \land R(x,\mathbf{x})$
\end{algorithm}
\end{minipage}
\hspace{0.06\textwidth}
\begin{minipage}{.46\textwidth}
\begin{algorithm}[H]
	\caption{\CNMideal}
\DontPrintSemicolon
\SetKwInOut{Input}{Input}\SetKwInOut{Output}{Output}
\Input{$\Pi, \MA, n$}
\Output{$b \in \{0,1\}$}
\BlankLine
$(pk, sk) \leftarrow \KeyGen(1^n)$\;
$(M,s) \leftarrow \MA_1(pk)$\;
$x, \Tilde{x} \leftarrow M$\;
$\Tilde{y} \leftarrow \Enc_{pk}(\Tilde{x})$\;
$(R, \Tilde{\mathbf{y}}) \leftarrow \MA_2(s,\Tilde{y})$\;
$\Tilde{\mathbf{x}} \leftarrow \Dec_{sk}(\Tilde{\mathbf{y}})$\;
Output $1$ iff $(\Tilde{y}\not\in\Tilde{\mathbf{y}}) \land R(x,\Tilde{\mathbf{x}})$
\end{algorithm}
\end{minipage}

\begin{definition}[Definition 2 in \cite{bellare1999non} (CNM-CPA)]
A $\PKES$ $\Pi$ is \emph{com\-pa\-ri\-son-based non-malleable for chosen-plaintext attacks (\CNM)} if for any adversary $\MA = (\MA_1, \MA_2)$ it holds that
\[ \Prr{\CNMreal(\Pi, \MA, n) = 1} - \Prr{\CNMideal(\Pi, \MA, n) = 1} \leq \negl(n),\]
if $\MA$ is such that:
\begin{itemize}
    \item $\MA_1$ and $\MA_2$ are PPT
    \item $\MA_1$ outputs a valid message space $M$ which can be sampled by a PPT algorithm
    \item $\MA_2$ outputs a relation $R$ computable by a PPT algorithm
    \item $\MA_2$ outputs a vector $\mathbf{y}$ such that $\bot \not\in \Dec_{sk}(\mathbf{y})$
\end{itemize}
\end{definition}

For comparison-based non-malleability, we consider adversaries that are split into two stages, where each stage is a probabilistic algorithm. The first stage takes as input the public key and produces a message distribution, which is (a description of) a probabilistic algorithm that produces a plaintext. The second stage takes as input one ciphertext of a plaintext produced by this algorithm and produces a vector of ciphertexts and a relation $R$. The goal of the adversary is to construct $R$ in such a way that $R$ holds between the original plaintext and the (element-wise) decryption of the produced ciphertext vector, but not between another plaintext which is sampled independently from the message distribution and the decryption of this same vector. If an adversary can achieve this relation to hold with non-negligible probability, then intuitively the adversary was able to structurally change an encrypted message, which would indicate that the scheme is malleable.

In the existing literature on non-malleability in the quantum setting, the approach taken is quite different from the notion described above. Here, the focus is put on unconditional one-time security notions of symmetric-key non-malleability and authentication. In this setting, a notion of non-malleability was first introduced in \cite{ambainis2009nonmalleable}, which defines non-malleability as a condition on the effective map of an arbitrary attack. 
The \emph{effective map} of an attack $\Lambda_A^{\ch[CB]{C\hat{B}}}$ is defined as $\Tilde{\Lambda}_A^{\ch[MB]{M\hat{B}}} = \E\limits_{k\from\KeyGen(1^n)}\Dec_k \circ \Lambda^A \circ \Enc_k$, and can be thought of as the average effect of an attack on the plaintext level.

The main idea of this definition is  that a ciphertext cannot be meaningfully transformed into the ciphertext of another message, which means that the effective map of any attack is either identity, in case no transformation is applied, or a map $\lr{\rho}$, that replaces the ciphertext by a fixed one. Note that this way of defining non-malleability can also be satisfied by a scheme which has the property that an attacker can transform a ciphertext into another ciphertext of the same message. In other words, the non-malleability is only enforced on the plaintext level, which means it is a form of \emph{plaintext non-malleability}. The classical notions discussed in the previous section do not allow for attacks that map an encrypted message to a different encryption of the same message. This restriction means non-malleability is enforced on the ciphertext level and thus these classical notions define forms of \emph{ciphertext non-malleability}.

This effective-map-based way of describing non-malleability was continued in \cite{NonMal}, where an insufficiency of the previous definition was demonstrated and a new definition was given. Their definition is given in terms of the mutual information between the plaintext and the side-information collected by the attacker. However, one of the results in their paper is a characterization theorem which we consider as the definition instead.

\begin{definition}[Theorem 4.4 in \cite{NonMal}]
\label{def:nm}
A \SKQES\ $(\KeyGen, \Enc, \Dec)$ is \emph{$\eps$-non-malleable ($\eps$-\NM)} if, for any attack $\Lambda_A^{\ch[CB]{C\hat{B}}}$, its effective map\\ $\Tilde{\Lambda}_A^{\ch[MB]{M\hat{B}}}$ is such that
\[ \dn{\Tilde{\Lambda}_A - \left(\id^{\ch{M}} \otimes \Lambda_1^{\ch[B]{\hat{B}}} + \frac{1}{|C|^2 - 1}\left(|C|^2\lr{\Dec_K(\tau^{\ch{C}})} - \id\right)^{\ch{M}} \otimes \Lambda_2^{\ch[B]{\hat{B}}}\right)} \leq \eps,\]
where
\begin{align*}
    \Lambda_1 &= \Trr[CC']{\phi^{+\ch{CC'}}\Lambda_A(\phi^{+\ch{CC'}} \otimes (\cdot))} \qquad \text{and}\\
    \Lambda_2 &= \Trr[CC']{(\I^{\ch{CC'}} - \phi^{+\ch{CC'}})\Lambda_A(\phi^{+\ch{CC'}} \otimes (\cdot))}.
\end{align*}
A \SKQES is \emph{non-malleable (\NM)} if it is $\eps$-\NM for some $\eps \leq \negl(n)$.
\end{definition}

In the symmetric-key setting, one can also consider the notion of authentication. A scheme satisfying this notion not only prevents an attacker from meaningfully transforming ciphertexts, but any attempt to do so can also be detected by the receiving party. In \cite{dupuis2012actively} a definition is given for this notion, which we adapt slightly to use the diamond norm instead of the trace norm.

\begin{definition}[Definition 2.2 in \cite{dupuis2012actively}] \label{def:dns}
A \SKQES\ $\Pi$ is \emph{$\eps$-DNS authenticating ($\eps$-\DNS)} if, for any attack $\Lambda_A^{\ch[CB]{C\hat{B}}}$, its effective map $\Tilde{\Lambda}_A^{\ch[MB]{M\hat{B}}}$ is such that
 \[ \dn{\Tilde{\Lambda}_A - \left(\id^{\ch{M}} \otimes \Lambda_{acc}^{\ch[B]{\hat{B}}} + \lr{\kb{\bot}{\bot}} \otimes \Lambda_{rej}^{\ch[B]{\hat{B}}}\right)} \leq \eps,\]
for some \CPTNI\ maps $\Lambda_{acc}, \Lambda_{rej}$ such that $\Lambda_{acc} + \Lambda_{rej}$ is \CPTP.
A \SKQES\ $\Pi = (\KeyGen, \Enc, \Dec)$ is \emph{DNS authenticating (\DNS)} if it is $\eps-\DNS$ for some $\eps \leq \negl(n)$.
\end{definition}

It is shown in \cite{NonMal} that a \NM\ scheme can be modified to a scheme that is \DNS\ authenticating by appending a tag to the encoded plaintext.


\section{Non-Malleability for Quantum SKE} \label{sec:SKE}
While Definition~\ref{def:nm} of \NM presented in \cite{NonMal} has many desirable features, it turns out that it is slightly too strong in the sense that it rules out schemes that are clearly  non-malleable intuitively. Furthermore, it has not been discussed in \cite{NonMal} whether \NM actually ensures non-malleability of ciphertexts, or merely plaintext non-malleability. In this section, we will discuss these features of \NM in detail. Furthermore, we propose a plaintext and a ciphertext version of \NM, shedding light on how these different security properties are expressed in the effective-map formalism.

\subsection{Ciphertext Non-Malleability}

When inspecting the Definition~\ref{def:nm} of \NM, one can observe that the constraints on $\Lambda_1$ and $\Lambda_2$ make \NM a type of ciphertext non-malleability: Unless the adversary applies the identity channel, we end up in the case of $\Lambda_2$. However, the use of $\phi^+$ in defining these constraints can be considered problematic when the ciphertext space is not uniformly used, i.e. when $\Enc_K(\tau^{\ch{M}}) \neq \tau^{\ch{C}}$. We provide an example of how this could be problematic.


\begin{example}
Let $\Pi' = (\KeyGen, \Enc', \Dec')$ be an \NM \SKQES, with ciphertext space $\HH_{C'}$. Let $\HH_C = \HH_{C'} \otimes \HH_T$, where $\HH_T = \C^2$, then define $\Pi = (\KeyGen, \Enc, \Dec)$ as follows, with ciphertext space $\HH_C$:
\begin{itemize}
    \item $\Enc_k(X) = \Enc'_k(X) \otimes \kb{0}{0}^{\ch{T}}$
    \item $\Dec_k(Y) = \Dec'_k(\Trr[T]{\kb{0}{0}^{\ch{T}}Y}) + \Trr{\kb{1}{1}^{\ch{T}}Y}\kb{\bot}{\bot}$
\end{itemize}

Consider the attack $\Lambda(\psi^{\ch{C}}) = \kb{0}{0}^{\ch{T}}\psi\kb{0}{0}^{\ch{T}} + \lr{\tau}^{\ch{C'}}(\kb{1}{1}^{\ch{T}}\psi\kb{1}{1}^{\ch{T}})$ (with trivial register $B$), which is the attack of measuring the $T$ register in the computational basis and replacing the $C'$ register with the maximally mixed state if the outcome of this measurement is $1$ and doing nothing otherwise. As the register $B$ is trivial, $\Lambda_1$ is just a probability. We calculate
\begin{align*}
    \Lambda_1 &= \Trr{\phi^{+\ch{CC}}\Lambda(\phi^{+\ch{CC}})} \\
    &= \Trr{\phi^{+\ch{CC}}(\frac{1}{2}\kb{0}{0} \otimes \kb{0}{0} \otimes \phi^{+\ch{C'C'}} + \frac{1}{2}\kb{1}{1}\otimes\kb{1}{1}\otimes\tau^{\ch{C'}} \otimes \tau^{\ch{C'}})}\\
    &= \frac{|C'|^2 + 1}{4|C'|^2}.
\end{align*}
However the effective map is $\Tilde{\Lambda} = \id$, which shows that $\Pi$ is not \NM.
\end{example}

What could be considered problematic about this example is that any attack on $\Pi$ is also an attack on $\Pi'$, since the attacker could add and remove the $T$ register himself. Furthermore, there is a one-to-one correspondence between ciphertexts of $\Pi$ and ciphertexts of $\Pi'$, because the $T$ register is checked during decryption. This means that if an attacker could perform a malleability attack on $\Pi$, i.e. constructively transform a ciphertext into another ciphertext, then the attack obtained by applying the above strategy would be a malleability attack on $\Pi'$. Thus one could argue that, intuitively, non-malleability of $\Pi'$ should imply non-malleability of $\Pi$. We suggest the following improved definition that prevents this behavior.

\begin{definition}
\label{def:cinm}
A \SKQES\ $(\KeyGen, \Enc, \Dec)$ is \emph{$\eps$-ciphertext non-malleable ($\eps$-\CiNM)} if, for any attack $\Lambda_A^{\ch[CB]{C\hat{B}}}$, its effective map $\Tilde{\Lambda}_A^{\ch[MB]{M\hat{B}}}$ is such that
\[ \dn{\Tilde{\Lambda}_A - \left(\id^{\ch{M}} \otimes \Lambda_1^{\ch[B]{\hat{B}}} + \frac{1}{|C|^2 - 1}\left(|C|^2\lr{\Dec_K(\tau^{\ch{C}})} - \id\right)^{\ch{M}} \otimes \Lambda_2^{\ch[B]{\hat{B}}}\right)} \leq \eps,\]
where
\begin{align*}
    \Lambda_1 &= \E\limits_{k,r}\left[\Trr[CM']{\psi^{\ch{CM'}}_{k,r}\Lambda_A(\psi^{\ch{CM'}}_{k,r} \otimes (\cdot))}\right]&and\\
    \Lambda_2 &= \E\limits_{k,r}\left[\Trr[CM']{(\I^{\ch{CM'}} - \psi^{\ch{CM'}}_{k,r})\Lambda_A(\psi^{\ch{CM'}}_{k,r} \otimes (\cdot))}\right].
\end{align*}
Here $\Enc_{k;r}$ is as in Theorem \ref{thm:charac}, $\E_{k,r}$ is taken uniformly over $k$ and with $r$ sampled according to $p_k$ from Theorem \ref{thm:charac}, and $\psi^{\ch{CM'}}_{k,r} = \Enc_{k;r}(\phi^{+\ch{MM'}})$.
A \SKQES is \emph{ciphertext non-malleable (\CiNM)} if it is $\eps$-\CiNM for some $\eps \leq \negl(n)$.
\end{definition}

\subsection{Plaintext Non-Malleability}

For ciphertext non-malleability, discussed in the last section, the effective map approach seems slightly ill-suited: after all, the effective map is a map on plaintexts! What makes \CiNM (and \NM, albeit in an overzealous way) definitions of ciphertext non-malleability are the constraints placed on the map which $\Tilde{\Lambda}_A$ is compared with (the \emph{simulator}). These constraints are imposed by the definitions of $\Lambda_1$ and $\Lambda_2$ and connect the simulator, which acts on plaintexts, to the attack map, which acts on ciphertexts.
In order to construct a definition for plaintext non-malleability from \NM, we therefore drop these constraints. In addition, we change the $|C|^2$ constant for the constant $|M|^2$, as the former constant is a direct artifact of the constraints. In other words, plaintext-non-malleability ``does not know about ciphertexts'', i.e., in particular, the ciphertext space dimension should be immaterial. We would like to remark that the latter point does not matter when talking about approximate non-malleability in the asymptotic setting, where the plaintext space grows polynomially with the security parameter.

The above considerations lead to the following definition.

\begin{definition}
\label{def:pnm}

A \SKQES\ $(\KeyGen, \Enc, \Dec)$ is \emph{$\eps$-plaintext non-malleable ($\eps$-\PNM)} if, for any attack $\Lambda_A^{\ch[CB]{C\hat{B}}}$, its effective map $\Tilde{\Lambda}_A^{\ch[MB]{M\hat{B}}}$ is such that
\[ \dn{\Tilde{\Lambda}_A - \left(\id^{\ch{M}} \otimes \Lambda_1^{\ch[B]{\hat{B}}} + \frac{1}{|M|^2 - 1}\left(|M|^2\lr{\Dec_K(\tau^{\ch{C}})} - \id\right)^{\ch{M}} \otimes \Lambda_2^{\ch[B]{\hat{B}}}\right)} \leq \eps,\]
where $\Lambda_1$ and $\Lambda_2$ are \CPTNI\ and $\Lambda_1 + \Lambda_2$ is \CPTP.
A \SKQES is \emph{plaintext non-malleable (\PNM)} if it is $\eps-\PNM$ for some $\eps \leq \negl(n)$.
\end{definition}

Intuitively, ciphertext non-malleability is a strictly stronger security notion than plaintext non-malleability since the latter is obtained from the former by dropping the constraints on the simulator. This intuition holds true for our proposed \PNM definition.

\begin{lemma}
\label{lem:CtoM}
Let $\Pi = (\KeyGen, \Enc, \Dec)$ be an arbitrary \SKQES\ and $\Lambda_A^{\ch[CB]{C\hat{B}}}$ an arbitrary attack on $\Pi$ with effective map $\Tilde{\Lambda}_A^{\ch[MB]{M\hat{B}}}$. If there exist \CPTNI\ $\Lambda_1, \Lambda_2$, such that $\Lambda_1 + \Lambda_2$ is \CPTP and it holds that
\[ \dn{\Tilde{\Lambda}_A - \left(\id^{\ch{M}} \otimes \Lambda_1^{\ch[B]{\hat{B}}} + \frac{1}{|C|^2 - 1}\left(|C|^2\lr{\Dec_K(\tau^{\ch{C}})} - \id\right)^{\ch{M}} \otimes \Lambda_2^{\ch[B]{\hat{B}}}\right)} \leq \eps,\]
then for any $\alpha$ such that $|M|^2 \leq \alpha \leq |C|^2$ there exist \CPTNI\ $\Lambda_3, \Lambda_4$ such that $\Lambda_3 + \Lambda_4$ is \CPTP\ and
\[ \dn{\Tilde{\Lambda}_A - \left(\id^{\ch{M}} \otimes \Lambda_3^{\ch[B]{\hat{B}}} + \frac{1}{\alpha - 1}\left(\alpha\lr{\Dec_K(\tau^{\ch{C}})} - \id\right)^{\ch{M}} \otimes \Lambda_4^{\ch[B]{\hat{B}}}\right)} \leq \eps.\]
\end{lemma}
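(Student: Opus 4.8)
The plan is to show that the two simulator maps appearing in the two displayed inequalities can be made \emph{identical} by a suitable choice of $\Lambda_3,\Lambda_4$, so that the bound transfers with no additional loss and the same $\eps$ works. The key observation is that every map of this shape lives in the two-dimensional family spanned by the $M$-channels $\id^{\ch{M}}\otimes(\cdot)$ and $R\otimes(\cdot)$, where $R:=\lr{\Dec_K(\tau^{\ch{C}})}$ is the constant replacement channel, with maps on the side register as coefficients. Concretely, I would first regroup the hypothesis simulator as
\[ \id^{\ch{M}}\otimes\Lambda_1 + \tfrac{1}{|C|^2-1}\bigl(|C|^2 R - \id\bigr)^{\ch{M}}\otimes\Lambda_2 = \id^{\ch{M}}\otimes\Bigl(\Lambda_1 - \tfrac{1}{|C|^2-1}\Lambda_2\Bigr) + R\otimes\tfrac{|C|^2}{|C|^2-1}\Lambda_2, \]
and regroup the target simulator the same way, with $\alpha$ in place of $|C|^2$ and $\Lambda_3,\Lambda_4$ in place of $\Lambda_1,\Lambda_2$.

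Next I would solve for $\Lambda_3,\Lambda_4$ by matching the coefficients of $R$ and of $\id^{\ch{M}}$. Matching the $R$-coefficients forces $\tfrac{\alpha}{\alpha-1}\Lambda_4=\tfrac{|C|^2}{|C|^2-1}\Lambda_2$, which I read off as the definition
\[ \Lambda_4 := \frac{(\alpha-1)\,|C|^2}{\alpha\,(|C|^2-1)}\,\Lambda_2, \]
and then matching the $\id^{\ch{M}}$-coefficients yields
\[ \Lambda_3 := \Lambda_1 + \frac{|C|^2-\alpha}{\alpha\,(|C|^2-1)}\,\Lambda_2. \]
By construction the two simulators then coincide exactly as CPTNI maps, so the target inequality holds with the very same $\eps$ as the hypothesis; this is what makes the lemma an equality-preserving reparametrisation rather than a genuine approximation.

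It then remains to check admissibility of the constructed maps. Since $\Lambda_1,\Lambda_2$ are completely positive and the scalar prefactors are non-negative — here I use $\alpha>1$, which holds because $\alpha\ge|M|^2$, to get positivity of the coefficient in $\Lambda_4$, and $\alpha\le|C|^2$, so that $|C|^2-\alpha\ge0$, for the coefficient in $\Lambda_3$ — both $\Lambda_3$ and $\Lambda_4$ are completely positive. A short computation shows the two $\Lambda_2$-coefficients add up to exactly $1$, whence $\Lambda_3+\Lambda_4=\Lambda_1+\Lambda_2$, which is \CPTP by hypothesis. Finally, trace non-increasingness of each of $\Lambda_3,\Lambda_4$ is automatic from this: for any state $\rho$ one has $\Tr[\Lambda_3(\rho)]+\Tr[\Lambda_4(\rho)]=\Tr[(\Lambda_1+\Lambda_2)(\rho)]=\Tr[\rho]$, and both summands are non-negative by complete positivity, so each is at most $\Tr[\rho]$.

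I expect the only real work to be the coefficient bookkeeping in the two matching equations together with the verification that the $\Lambda_2$-coefficients sum to $1$; the sign constraints $\alpha>1$ and $\alpha\le|C|^2$ are exactly where the hypothesis $|M|^2\le\alpha\le|C|^2$ enters, and they are the only potential obstacle to complete positivity. The main conceptual point, which removes that obstacle, is that varying the simulator constant merely slides along the fixed two-dimensional family spanned by $\id^{\ch{M}}\otimes(\cdot)$ and $R\otimes(\cdot)$, so one can always rebalance $\Lambda_1,\Lambda_2$ into admissible $\Lambda_3,\Lambda_4$ without altering the overall map.
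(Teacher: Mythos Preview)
Your proposal is correct and follows essentially the same route as the paper: the paper defines $\gamma=\frac{(\alpha-1)|C|^2}{\alpha(|C|^2-1)}$ and sets $\Lambda_4=\gamma\Lambda_2$, $\Lambda_3=\Lambda_1+(1-\gamma)\Lambda_2$, which coincides exactly with your choices (note $1-\gamma=\frac{|C|^2-\alpha}{\alpha(|C|^2-1)}$), and then verifies algebraically that the two simulator maps are equal. Your derivation via matching the coefficients of $\id^{\ch{M}}$ and $R$ is a nice bit of extra motivation, and your explicit argument for trace non-increasingness of $\Lambda_3,\Lambda_4$ from $\Lambda_3+\Lambda_4=\Lambda_1+\Lambda_2$ being \CPTP is slightly more careful than the paper's one-line remark.
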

\begin{proof}
For fixed $\Lambda_1$ and $\Lambda_2$ one can obtain the statement by defining
\begin{align*}
    \Lambda_3 &= \Lambda_1 + \left(1 - \frac{(\alpha - 1)|C|^2}{\alpha(|C|^2-1)}\right)\Lambda_2&\text{and}\\
    \Lambda_4 &= \frac{(\alpha - 1)|C|^2}{\alpha(|C|^2-1)}\Lambda_2.
\end{align*}
The full proof of this lemma is rather technical and can be found in Appendix \ref{lem:prf_CtoM}.
\end{proof}

Lemma \ref{lem:CtoM} shows that the $|C|^2$ constant present in the \NM definition can be decreased down to $|M|^2$, obtaining increasingly weaker security notions. This fact immediately implies the following

\begin{theorem}\label{thm:NMorCiNMimpliesPNM}
Any $\eps$-\NM or $\eps$-\CiNM \SKQES is $\eps$-\PNM.
\end{theorem}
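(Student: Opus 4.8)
The plan is to derive the theorem as an essentially immediate consequence of Lemma~\ref{lem:CtoM}, which lowers the constant $|C|^2$ in the effective-map bound to any $\alpha$ with $|M|^2 \leq \alpha \leq |C|^2$. The only content to supply beyond invoking that lemma is the verification that both the \NM and the \CiNM definitions hand us simulator maps satisfying its hypotheses.

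First I would fix an arbitrary attack $\Lambda_A^{\ch[CB]{C\hat{B}}}$ with effective map $\Tilde{\Lambda}_A$. By definition of $\eps$-\NM (respectively $\eps$-\CiNM), this $\Tilde{\Lambda}_A$ already satisfies a diamond-norm bound of exactly the form appearing in the hypothesis of Lemma~\ref{lem:CtoM}, with constant $|C|^2$, for the concrete maps $\Lambda_1, \Lambda_2$ given in Definition~\ref{def:nm} (respectively Definition~\ref{def:cinm}). The single point requiring care is that these concrete $\Lambda_1, \Lambda_2$ are \CPTNI with $\Lambda_1 + \Lambda_2$ \CPTP, as the lemma demands. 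In the \NM case this is immediate: $\Lambda_1$ and $\Lambda_2$ arise from the \CPTP map $\Lambda_A$ by pairing its output against the rank-one projector $\phi^+$ and against the positive operator $\I - \phi^+$, so each is completely positive and trace-non-increasing, while their sum is $\Trr[CC']{\Lambda_A(\phi^+ \otimes (\cdot))}$, which is \CPTP. The \CiNM case is identical with $\phi^+$ replaced by the state $\psi_{k,r}$ and $\I - \phi^+$ by the positive operator $\I - \psi_{k,r}$; since $\psi_{k,r} \leq \I$ both pieces are \CPTNI, their sum is \CPTP, and the outer average $\E_{k,r}$ preserves both properties because the sets of \CPTNI and of \CPTP maps are convex.

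With the hypotheses verified, I would apply Lemma~\ref{lem:CtoM} with $\alpha = |M|^2$. This choice is admissible because $|M| \leq |C|$ forces $|M|^2 \leq |C|^2$, so $\alpha = |M|^2$ is the smallest value in the allowed range $[|M|^2, |C|^2]$. The lemma then produces \CPTNI maps $\Lambda_3, \Lambda_4$ with $\Lambda_3 + \Lambda_4$ \CPTP for which $\Tilde{\Lambda}_A$ satisfies the same diamond-norm bound, now with $|C|^2$ replaced by $|M|^2$. This is precisely the condition of Definition~\ref{def:pnm}, whose simulator maps are exactly unconstrained \CPTNI maps with \CPTP sum. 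Since $\Lambda_A$ was arbitrary, $\Pi$ is $\eps$-\PNM, and the argument is the same whether we started from \NM or from \CiNM.

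I expect no genuine obstacle in this argument, as all the analytic difficulty has been isolated into Lemma~\ref{lem:CtoM} and deferred to the appendix. The only substantive bookkeeping is the projector/state-sandwich structure that certifies the \NM and \CiNM simulator maps as \CPTNI with \CPTP sum, together with the observation that $|M|^2$ is the smallest admissible value of $\alpha$, which is exactly the constant appearing in the \PNM definition.
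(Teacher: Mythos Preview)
Your proposal is correct and follows exactly the paper's approach: the paper's proof is the one-line ``This follows directly from Lemma~\ref{lem:CtoM} with $\alpha = |M|^2$.'' You simply spell out the verification that the concrete $\Lambda_1,\Lambda_2$ from Definitions~\ref{def:nm} and~\ref{def:cinm} satisfy the \CPTNI/\CPTP hypotheses of the lemma, which the paper leaves implicit.
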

\begin{proof}
This follows directly from Lemma \ref{lem:CtoM} with $\alpha = |M|^2$.
\end{proof}

While \PNM does not explicitly restrict the choice of $\Lambda_1$ and $\Lambda_2$, an explicit form for $\Lambda_i$ can be required without significantly strengthening the definition in the sense that the additional requirement only decreases security by at most a factor of 3.

\begin{theorem}
\label{thm:pnm_charac}
Let $\Pi = (\KeyGen, \Enc, \Dec)$ be an arbitrary $\eps$-\PNM\ \SKQES\ for some $\eps$, then for any attack $\Lambda_A^{\ch[CB]{C\hat{B}}}$, its effective map $\Tilde{\Lambda}_A^{\ch[MB]{M\hat{B}}}$ is such that
\[ \dn{\Tilde{\Lambda}_A - \left(\id^{\ch{M}} \otimes \Lambda_1^{\ch[B]{\hat{B}}} + \frac{1}{|M|^2 - 1}\left(|M|^2\lr{\Dec_K(\tau^{\ch{C}})} - \id\right)^{\ch{M}} \otimes \Lambda_2^{\ch[B]{\hat{B}}}\right)} \leq 3\eps,\]
where
\begin{align*}
    \Lambda_1 &= \Trr[MM']{\phi^{+\ch{MM'}}\Tilde{\Lambda}_A(\phi^{+\ch{MM'}} \otimes (\cdot))}&and\\
    \Lambda_2 &= \Trr[MM']{(\I^{\ch{MM'}} - \phi^{+\ch{MM'}})\Tilde{\Lambda}_A(\phi^{+\ch{MM'}} \otimes (\cdot))}.
\end{align*}
\end{theorem}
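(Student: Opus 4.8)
The plan is to exploit the \PNM\ hypothesis, which only asserts the \emph{existence} of some simulator within $\eps$, and to show that the \emph{canonical} simulator built from the Choi-type maps $\Lambda_1,\Lambda_2$ is worse by at most a constant factor. Write $S[\Gamma_1,\Gamma_2] := \id^{\ch{M}}\otimes\Gamma_1 + T^{\ch{M}}\otimes\Gamma_2$ for the simulator assembled from a pair of maps $\Gamma_1,\Gamma_2$, where $T := \frac{1}{|M|^2-1}\left(|M|^2\lr{\Dec_K(\tau^{\ch{C}})} - \id\right)$, and introduce the two \emph{extraction maps} $\Phi(N) := \Trr[MM']{\phi^{+\ch{MM'}} N(\phi^{+\ch{MM'}}\otimes(\cdot))}$ and $\Psi(N) := \Trr[MM']{(\I - \phi^{+\ch{MM'}}) N(\phi^{+\ch{MM'}}\otimes(\cdot))}$, so that by definition $\Lambda_1 = \Phi(\Tilde{\Lambda}_A)$ and $\Lambda_2 = \Psi(\Tilde{\Lambda}_A)$. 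Since $\Pi$ is $\eps$-\PNM, there exist \CPTNI\ maps $\Lambda_1',\Lambda_2'$ with $\Lambda_1'+\Lambda_2'$ \CPTP\ and $\dn{\Tilde{\Lambda}_A - S[\Lambda_1',\Lambda_2']} \le \eps$; set $L := \Tilde{\Lambda}_A - S[\Lambda_1',\Lambda_2']$.

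First I would establish the purely algebraic fact that the extraction maps \emph{invert} the simulator construction, i.e. $\Phi(S[\Gamma_1,\Gamma_2]) = \Gamma_1$ and $\Psi(S[\Gamma_1,\Gamma_2]) = \Gamma_2$ for all $\Gamma_1,\Gamma_2$. This reduces to four scalar overlaps of the $M$-side maps $\id$ and $T$ fed with $\phi^{+}$: one has $\Trr{\phi^{+}(\id\otimes\id)(\phi^{+})}=1$ trivially; $\Trr{\phi^{+}(T\otimes\id)(\phi^{+})}=0$ using $\Trr{\phi^{+\ch{MM'}}(\Dec_K(\tau^{\ch{C}})^{\ch{M}}\otimes\tau^{\ch{M'}})}=1/|M|^2$; and, because $T$ is \emph{trace-preserving}, $\Trr{(\I-\phi^{+})(T\otimes\id)(\phi^{+})}=1$ while $\Trr{(\I-\phi^{+})(\id\otimes\id)(\phi^{+})}=0$. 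Applied to $S[\Lambda_1',\Lambda_2']$ this gives $\Phi(S[\Lambda_1',\Lambda_2'])=\Lambda_1'$ and $\Psi(S[\Lambda_1',\Lambda_2'])=\Lambda_2'$, so by linearity $\Lambda_1-\Lambda_1' = \Phi(L)$ and $\Lambda_2-\Lambda_2'=\Psi(L)$.

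The triangle inequality then reduces the claim to a norm bound on the reconstructed simulator:
\[ \dn{\Tilde{\Lambda}_A - S[\Lambda_1,\Lambda_2]} \le \dn{L} + \dn{S[\Lambda_1,\Lambda_2]-S[\Lambda_1',\Lambda_2']} = \dn{L} + \dn{S[\Phi(L),\Psi(L)]}. \]
The decisive step is to bound the last term well. I would recognise $S[\Phi(L),\Psi(L)]$ as a composition $\mathcal K\circ(\id^{\ch{M}}\otimes\mathcal I(L))$, where $\mathcal I(L)$ is the instrument ``adjoin $\phi^{+}$, apply $L$, then measure $\{\phi^{+},\I-\phi^{+}\}$ and write the outcome to a classical flag''. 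Since $\mathcal I(L)=\mathcal J\circ L\circ\mathcal A$ with $\mathcal A$ (adjoining $\phi^{+}$) and $\mathcal J$ (the measurement channel) both \CPTP, submultiplicativity of the diamond norm gives $\dn{\mathcal I(L)}\le\dn{L}\le\eps$ \emph{jointly}, i.e. treating both measurement branches at once. The map $\mathcal K$ merely reads the flag and applies $\id$ or $T$ to the carried plaintext register before discarding the flag, so $\dn{\mathcal K}=\dn{T}$. Hence $\dn{S[\Phi(L),\Psi(L)]}\le\dn{T}\,\eps$ and altogether $\dn{\Tilde{\Lambda}_A - S[\Lambda_1,\Lambda_2]}\le(1+\dn{T})\eps$.

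The main obstacle is that $T$ is genuinely \emph{not} completely positive, so the $T$-branch cannot be handled as a channel. The resolution is twofold: $T$ is trace-preserving (used already for the inversion identity), and it is a difference of two scaled channels, $T=\frac{|M|^2}{|M|^2-1}\lr{\Dec_K(\tau^{\ch{C}})}-\frac{1}{|M|^2-1}\id$, which gives $\dn{T}\le\frac{|M|^2+1}{|M|^2-1}$ and therefore $(1+\dn{T})\eps\le\frac{2|M|^2}{|M|^2-1}\eps\le 3\eps$. I expect the subtle point to be precisely this packaging: if instead one bounds $\dn{\Phi(L)}$ and $\dn{\Psi(L)}$ separately and splits $S[\Phi(L),\Psi(L)]$ into $\id^{\ch{M}}\otimes\Phi(L)$ and $T^{\ch{M}}\otimes\Psi(L)$, one pays an extra additive $\eps$ and overshoots $3\eps$ for the smallest plaintext spaces. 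Keeping the two orthogonal measurement outcomes together inside the single instrument $\mathcal I(L)$ (so that only \emph{one} factor of $\dn{L}$, rather than two, is incurred) is what secures the constant $3$.
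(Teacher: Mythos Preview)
Your argument is correct and follows the same skeleton as the paper's proof: pick a witnessing pair $(\Lambda_1',\Lambda_2')$ from the \PNM hypothesis, verify the inversion identities $\Phi(S[\Gamma_1,\Gamma_2])=\Gamma_1$ and $\Psi(S[\Gamma_1,\Gamma_2])=\Gamma_2$ (this is exactly the paper's computation $\Lambda_5=\Lambda_1$, $\Lambda_6=\Lambda_2$, using $\Trr{\phi^{+}(\Dec_K(\tau)\otimes\tau)}=1/|M|^2$), and finish with a triangle inequality. The one genuine difference is in how $\dn{S[\Phi(L),\Psi(L)]}$ is bounded. The paper splits this as $\dn{\id\otimes(\Lambda_1-\Lambda_3)}+\dn{T\otimes(\Lambda_2-\Lambda_4)}$, bounds each of $\dn{\Lambda_1-\Lambda_3}$ and $\dn{\Lambda_2-\Lambda_4}$ separately by $\eps$ via the same instrument trick you use, and then reaches $3\eps$ only by invoking $\dn{T}\le 1$; that last step is not justified in general and indeed fails whenever $\Dec_K(\tau)$ is far from $\tau$ (for instance when it has large weight on $\ket\bot$, where $\dn{T}$ can approach $\tfrac{|M|^2+1}{|M|^2-1}$). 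Your packaging $S[\Phi(L),\Psi(L)]=\mathcal K\circ(\id^{\ch{M}}\otimes\mathcal I(L))$ sidesteps this entirely: it yields the single product $\dn{\mathcal K}\,\dn{\mathcal I(L)}\le\dn{T}\,\eps$, hence the total is $(1+\dn{T})\eps\le\tfrac{2|M|^2}{|M|^2-1}\eps\le 3\eps$ without any assumption on $T$ beyond the triangle-inequality bound you write down. So your final paragraph diagnoses precisely the point where your route improves on the paper's. One small nit: strictly speaking $\dn{\mathcal K}=\max(1,\dn{T})$, since $\mathcal K$ applies $\id$ on one flag value; this equals $\dn{T}$ because $T$ is trace-preserving (hence $\dn{T}\ge 1$), but it would be cleaner to say so explicitly.
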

\begin{proof}
We sketch the proof here, the full proof of this theorem can be found in Appendix \ref{thm:prf_pnm_charac}.
Let $\Pi = (\KeyGen, \Enc, \Dec)$ be an arbitrary $\eps$-\PNM\ \SKQES\ for some $\eps$ and let $\Lambda_A^{\ch[CB]{C\hat{B}}}$ be an arbitrary attack with effective map $\Tilde{\Lambda}_A^{\ch[MB]{M\hat{B}}}$. Furthermore, let $\Lambda_1^{\ch[B]{\hat{B}}}$ and $\Lambda_2^{\ch[B]{\hat{B}}}$ be such that $\dn{\Tilde{\Lambda}_A - \Tilde{\Lambda}_{ideal}} \leq \eps$,
where $\Tilde{\Lambda}_{ideal}^{\ch[MB]{M\hat{B}}} = \id^{\ch{M}} \otimes \Lambda_1 + \frac{1}{|M|^2-1}(|M|^2\lr{\Dec_K(\tau)} - \id)^{\ch{M}} \otimes \Lambda_2$. Lastly, let
\begin{align*}
    \Lambda_3 &= \Trr[MM']{\phi^{+\ch{MM'}}\Tilde{\Lambda}_A(\phi^{+\ch{MM'}} \otimes (\cdot))},\\
    \Lambda_4 &= \Trr[MM']{(\I^{\ch{MM'}} - \phi^{+\ch{MM'}})\Tilde{\Lambda}_A(\phi^{+\ch{MM'}} \otimes (\cdot))},\\
    \Tilde{\Lambda}_{trace}^{\ch[MB]{M\hat{B}}} &= \id^{\ch{M}} \otimes \Lambda_3 + \frac{1}{|M|^2-1}(|M|^2\lr{\Dec_K(\tau)} - \id)^{\ch{M}} \otimes \Lambda_4,\\
    \Lambda_5 &= \Trr[MM']{\phi^{+\ch{MM'}}\Tilde{\Lambda}_{ideal}(\phi^{+\ch{MM'}} \otimes (\cdot))}\text{, and}\\
    \Lambda_6 &= \Trr[MM']{(\I^{\ch{MM'}} - \phi^{+\ch{MM'}})\Tilde{\Lambda}_{ideal}(\phi^{+\ch{MM'}} \otimes (\cdot))}
\end{align*}
Observe that $\dn{\Tilde{\Lambda}_{ideal} - \Tilde{\Lambda}_{trace}} \leq \dn{ (\Lambda_1 - \Lambda_3)} + \dn{ (\Lambda_2 - \Lambda_4)}$. Since\\ $\dn{(\Tilde{\Lambda} - \Tilde{\Lambda}_{ideal})(\phi^{+\ch{MM'}} \otimes (\cdot))} \leq \dn{\Tilde{\Lambda} - \Tilde{\Lambda}_{ideal} } \leq \eps$, we have $\dn{\Lambda_3 - \Lambda_5} \leq \eps$ and $\dn{\Lambda_4 - \Lambda_6} \leq \eps$. Using this we observe that
\begin{align*}
    \dn{\Tilde{\Lambda}_{ideal} - \Tilde{\Lambda}_{trace} } &\leq\dn{ \Lambda_1 - \Lambda_3} + \dn{ \Lambda_2 - \Lambda_4}\\
    &\leq \dn{ \Lambda_1 - \Lambda_5} + \dn{ \Lambda_5 - \Lambda_3} +\dn{ \Lambda_2 - \Lambda_6} + \dn{ \Lambda_6 - \Lambda_4}\\
    &\leq 2\eps + \dn{ \Lambda_1 - \Lambda_5} + \dn{ \Lambda_2 - \Lambda_6}.
\end{align*}

By substituting the definition of $\Tilde{\Lambda}_{ideal}$ we observe that $\Lambda_5 = \Lambda_1$ and $\Lambda_6 = \Lambda_2$. From this we conclude
\begin{align*}
     \dn{\Tilde{\Lambda} - \Tilde{\Lambda}_{trace} } &\leq \dn{\Tilde{\Lambda} - \Tilde{\Lambda}_{ideal} } + \dn{\Tilde{\Lambda}_{ideal} - \Tilde{\Lambda}_{trace} }\\
     &\leq 3\eps.
\end{align*}
\end{proof}

\begin{theorem}
\label{thm:pnm_not_nm}
There exists a \PKQES\ $\Pi = (\KeyGen, \Enc, \Dec)$ that is \PNM\ but not \NM and not \CiNM.
\end{theorem}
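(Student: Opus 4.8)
The plan is to start from an existing non-malleable scheme and deliberately spoil its ciphertext structure by appending a ``junk'' register that decryption discards, so that ciphertexts become re-randomizable at the level of that register while the plaintext effect is untouched. Concretely, let $\Pi' = (\KeyGen, \Enc', \Dec')$ be any \NM\ \SKQES\ with ciphertext register $C_0$ (such schemes exist, e.g.\ from \cite{NonMal}, and are \PNM\ by Theorem~\ref{thm:NMorCiNMimpliesPNM}). I would set $\HH_C = \HH_{C_0} \otimes \HH_T$ with $\HH_T = \C^2$ and define $\Enc_k(X) = \Enc'_k(X) \otimes \kb{0}{0}^{T}$ and $\Dec_k(Y) = \Dec'_k(\Trr[T]{Y})$, so that $\Dec$ ignores the $T$ register entirely; correctness of $\Pi$ is immediate from that of $\Pi'$. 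There are then three claims to establish: (i) $\Pi$ is not \NM, (ii) $\Pi$ is not \CiNM, and (iii) $\Pi$ is \PNM, with the genuine work concentrated in (iii).

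For (i) and (ii) a single attack suffices, namely the channel $\Lambda_A^{\ch[CB]{C\hat{B}}}$ that applies the Pauli $X$ to the $T$ register (with trivial side register $B$). Because $\Dec$ discards $T$, the effective map is simply $\Tilde{\Lambda}_A = \id^{\ch{M}}$. To show this violates \NM, I would use the factorization $\phi^{+CC'} = \phi^{+C_0C_0'} \otimes \phi^{+TT'}$ together with $\langle\phi^+|^{TT'}(X \otimes \I)|\phi^+\rangle^{TT'} = 0$ to compute $\Lambda_1 = \Trr{\phi^{+CC'}\Lambda_A(\phi^{+CC'})} = 0$, hence $\Lambda_2 = 1$. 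The same values arise for \CiNM: here $\psi_{k,r}^{CM'} = \Enc_{k;r}(\phi^{+MM'})$ factors as $\sigma_{k,r}^{C_0M'} \otimes \kb{0}{0}^{T}$ for the corresponding state $\sigma_{k,r}$ of $\Pi'$, and since $X\kb{0}{0}^{T}X = \kb{1}{1}^{T}$ is orthogonal to $\kb{0}{0}^{T}$ the overlap again vanishes, giving $\Lambda_1 = 0$, $\Lambda_2 = 1$. In both cases the comparison (``ideal'') map reduces to $\frac{1}{|C|^2-1}(|C|^2\lr{\Dec_K(\tau^{\ch{C}})} - \id)^{\ch{M}}$, whose diamond distance from $\Tilde{\Lambda}_A = \id$ equals $\frac{|C|^2}{|C|^2-1}\dn{\id - \lr{\Dec_K(\tau^{\ch{C}})}}$. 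Using the standard lower bound on the diamond norm between the identity and a replacement channel (probing with half of $\phi^+$), this is bounded below by a constant, hence non-negligible, so $\Pi$ is neither \NM\ nor \CiNM.

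The crux is part (iii), since \PNM\ quantifies over all attacks, and this is where I expect the main difficulty to lie: one must rule out every attack rather than exhibit one. My plan is a reduction to the plaintext non-malleability of $\Pi'$. Given an arbitrary attack $\Lambda_A^{\ch[CB]{C\hat{B}}}$ on $\Pi$, I define the induced attack on $\Pi'$ by $\Lambda'_A(\cdot) = \Trr[T]{\Lambda_A((\cdot) \otimes \kb{0}{0}^{T})}$, i.e.\ append a fresh $\kb{0}{0}^{T}$ register, run $\Lambda_A$, and discard the output $T$. A short calculation from $\Enc_k = \Enc'_k(\cdot) \otimes \kb{0}{0}^{T}$ and $\Dec_k(Y) = \Dec'_k(\Trr[T]{Y})$ shows that the two effective maps coincide, $\Tilde{\Lambda}_A = \Tilde{\Lambda'}_A$, and moreover that $\Dec_K(\tau^{\ch{C}}) = \Dec'_K(\tau^{\ch{C_0}})$ while the plaintext register $M$ (and hence the constant $|M|$) is unchanged. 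Since $\Pi'$ is \PNM, the map $\Tilde{\Lambda'}_A$ admits \CPTNI\ maps $\Lambda_1,\Lambda_2$ with $\Lambda_1 + \Lambda_2$ \CPTP\ witnessing the \PNM\ bound, and because \PNM\ imposes no structural constraint tying $\Lambda_1,\Lambda_2$ to the encryption, these same maps transfer verbatim to certify the \PNM\ bound for $\Tilde{\Lambda}_A$. The reduction thus handles the quantification over all attacks cleanly; the only remaining care is the routine verification that $\Lambda'_A$ is a legitimate attack on $\Pi'$ and that no contribution is lost when the output $T$ register is discarded.
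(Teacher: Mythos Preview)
Your proposal is correct and follows essentially the same approach as the paper: the same padded construction (an extra qubit register appended at encryption and discarded at decryption), the same Pauli-$X$ attack on that register to witness failure of \NM\ and \CiNM, and the same reduction $\Lambda'_A = \Trr[T]{\Lambda_A((\cdot)\otimes\kb{0}{0}^{T})}$ to transfer plaintext non-malleability from $\Pi'$. The only cosmetic difference is that you invoke \PNM\ of $\Pi'$ directly (via Theorem~\ref{thm:NMorCiNMimpliesPNM}) so that the witnessing $\Lambda_1,\Lambda_2$ already carry the $|M|^2$ constant and $\Dec'_K(\tau^{C_0})=\Dec_K(\tau^{C})$, whereas the paper invokes \NM\ of $\Pi'$ and then applies Lemma~\ref{lem:CtoM} to pass from the $|C_0|^2$ constant to $|M|^2$; these are the same argument in slightly different order.
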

\begin{proof}
Let $\Pi' = (\KeyGen', \Enc', \Dec')$ be an arbitrary \PKQES\ that is \NM\footnote{See \cite{NonMal} for such a scheme.}. Then define $\Pi = (\KeyGen, \Enc, \Dec)$ as
\begin{align*}
    \KeyGen &= \KeyGen'\\
    \Enc_k &= \Enc_k' \otimes \kb{0}{0}^{\ch{R}}\\
    \Dec_k &= \Dec_k' \circ \Tr_R,
\end{align*}
where $R$ is an auxiliary 1-qubit register. Let $\Lambda$ be an arbitrary attack on $\Pi$ with effective map $\Tilde{\Lambda}$, then define $\Lambda' = \Trr[R]{\Lambda((\cdot) \otimes \kb{0}{0}^{\ch{R}})}$, which is an attack on $\Pi'$ with effective map $\Tilde{\Lambda}'$. Observe that $\Tilde{\Lambda}' = \Tilde{\Lambda}$, since the $R$ register is only added and then traced out.
Because $\Pi'$ is \NM, we have $\Lambda_3^{\ch[B]{\hat{B}}}$ and $\Lambda_4^{\ch[B]{\hat{B}}}$ such that 
\begin{align*}
  \dn{\Tilde{\Lambda}' - \id^{\ch{M}} \otimes \Lambda_3 + \frac{1}{|C|^2 - 1}\left(|C|^2\lr{\Dec_K(\tau^{\ch{C}})} - \id\right)^{\ch{M}} \otimes \Lambda_4} \leq \negl(n)
\end{align*}

It follows from Lemma \ref{lem:CtoM} that $\Pi$ is \PNM.

Now consider the attack $\Lambda_X = \id^{\ch{C}} \otimes (X(\cdot)X)^{\ch{R}} \otimes \Tr[\cdot]^{\ch{B}}$, where $X$ is the Pauli $X$ gate, with $X\ket{0} = \ket{1}$ and $X\ket{1} = \ket{0}$. Let $f(x_1\dots x_n) = x_1\dots x_{n-1}(1-x_n)$, i.e. the result of flipping the last bit of some bitstring. Observe that
\begin{align*}
    \Lambda_X(\phi^{+\ch{CC'}} \otimes (\cdot)^{\ch{B}}) &= (\id \otimes (X(\cdot)X)^{\ch{R}}\otimes \Tr[\cdot]^{\ch{B}})\left(\sum\limits_{i,j}\kb{ii}{jj}^{\ch{CC'}} \otimes (\cdot)^{\ch{B}}\right)\\
    &= \Trr{\cdot}^{\ch{B}}\sum\limits_{i,j}\kb{f(i)i}{f(j)j}.
\end{align*}
Since this superposition contains no components of the form $\kb{xx}{xx}^{\ch{CC'}}$ and $\phi^{+\ch{CC'}}$ only contains components of this form, we have that $\phi^{+\ch{CC'}}\Lambda_X(\phi^{+\ch{CC'}}\otimes(\cdot)^{\ch{B}}) = \Z^{\ch[BCC']{CC'}}$. With $\psi^{\ch{CM'}}_{k,r} = \Enc_{k;r}(\phi^{+\ch{MM'}}) = \kb{0}{0} \otimes \Enc'_{k;r}(\phi^{+\ch{MM'}})$, we have $\psi^{\ch{CM'}}_{k,r}\Lambda_X(\psi^{\ch{CM'}}_{k,r}\otimes(\cdot)^{\ch{B}}) = \Z^{\ch[BCC']{CC'}}$. 

Also note that the effective map of $\Lambda_X$ is $\Tilde{\Lambda}_X = \id^{\ch{M}} \otimes \Tr[\cdot]^{\ch{B}}$, since the attack only acts on $R$ and $B$ and thus does not modify the message in $M$. Let $\Lambda_1$ and $\Lambda_2$ be as in Definition \ref{def:nm} or as in Definition \ref{def:cinm}, then $\Trr{\Lambda_1(\rho)} = 0$ for all $\rho$. It follows that $\Lambda_2 = \Trr[CC']{(\I - \phi^{+\ch{CC'}})\Lambda_X(\phi^{+\ch{CC'}}\otimes(\cdot)^{\ch{B}})} = \Tr[\cdot]^{\ch{B}}$. Furthermore we have
\begin{align*}
    &\dn{\Tilde{\Lambda}_X - \left(\id^{\ch{M}} \otimes \Lambda_1 + \frac{1}{|C|^2 - 1}\left(|C|^2\lr{\Dec_K(\tau^{\ch{C}})} - \id\right)^{\ch{M}} \otimes \Lambda_2\right)}\\
    = &\dn{\left(\id^{\ch{M}} \otimes \Tr[\cdot]^{\ch{B}}\right) - \left(\frac{1}{|C|^2 - 1}\left(|C|^2\lr{\Dec_K(\tau^{\ch{C}})} - \id\right)^{\ch{M}} \otimes \Tr[\cdot]^{\ch{B}}\right)}\\
    = &\dn{\id^{\ch{M}} - \frac{1}{|C|^2 - 1}\left(|C|^2\lr{\Dec_K(\tau)} - \id\right)^{\ch{M}}}\\
    \geq &\nm{\phi^{+\ch{MM'}} - \frac{1}{|C|^2 - 1}(|C|^2\Dec_K(\tau)\otimes\tau^{\ch{M'}} - \phi^{+\ch{MM'}})}_1\\
    = &2\max\limits_{0 \leq P \leq \I} \Trr{P(\phi^{+\ch{MM'}} - \frac{1}{|C|^2 - 1}(|C|^2\Dec_K(\tau)\otimes\tau^{\ch{M'}} - \phi^{+\ch{MM'}}))}\\
    \geq &2\Trr{\phi^{+\ch{MM'}}(\phi^{+\ch{MM'}} - \frac{1}{|C|^2 - 1}(|C|^2\Dec_K(\tau)\otimes\tau^{\ch{M'}} - \phi^{+\ch{MM'}}))}\\
    = &2 - \frac{2(|C|^2 - |M|^2)}{|M|^2(|C|^2-1)} > 1,
\end{align*}
where we use that $\Trr{\phi^{+\ch{MM'}}(\Dec_K(\tau)\otimes\tau^{\ch{M'}})} = \frac{1}{|M|^2}$, as is proven in the proof of Theorem \ref{thm:prf_pnm_charac}, and $|M| \geq 2$, which is true when we assume that we are encrypting at least one qubit. This shows that $\Pi$ is not \NM and not \CiNM.
\end{proof}

While the above shows that \PNM and \CiNM are not the same in general, a special case arises when each plaintext has exactly one ciphertext (per key). Recall that plaintext non-malleability relaxes the constraints of ciphertext non-malleability by allowing the adversary to implement an attack that transforms one ciphertext into another, as long as both decrypt to the same plaintext. Thus in this special case, this relaxation is no relaxation at all. This special case arises in particular when an encryption scheme is \emph{unitary}, meaning that $\Enc_k(X) = V_kXV_k^\dagger$ for some collection $\{V_k\}_k$ of unitaries $V_k\in\mathrm U(\HH_M)$.

\begin{theorem}\label{thm:unitaryissimple}
For any unitary \SKQES $\Pi$, $\Pi$ is \PNM iff $\Pi$ is \CiNM iff $\Pi$ is \NM. 
\end{theorem}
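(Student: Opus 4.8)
The plan is to exploit the two structural features that a unitary scheme $\Pi$ with $\Enc_k(X)=V_kXV_k^\dagger$ enjoys. First, since $V_k\in\mathrm{U}(\HH_M)$, the ciphertext register coincides with the plaintext register, so $|C|=|M|$, and the characterization of Theorem~\ref{thm:charac} has a trivial $T$-register (hence no $r$-randomness, $\Enc_{k;r}=\Enc_k$). Second, $\Dec_k=V_k^\dagger(\cdot)V_k$, so $\Dec_K(\tau^{\ch{C}})=\tau^{\ch{M}}$. Consequently the leading constant and the replacement channel $\lr{\Dec_K(\tau^{\ch{C}})}$ appearing in Definitions~\ref{def:nm}, \ref{def:cinm} and \ref{def:pnm} become literally identical across all three notions, and the whole statement reduces to comparing only the simulator maps $\Lambda_1,\Lambda_2$. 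From Theorem~\ref{thm:NMorCiNMimpliesPNM} I already have $\eps$-\NM$\Rightarrow\eps$-\PNM and $\eps$-\CiNM$\Rightarrow\eps$-\PNM for free, so the work is to close the two reverse implications.

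First I would show that, for a unitary scheme, \NM and \CiNM are in fact the \emph{same} condition, i.e.\ the maps $\Lambda_1^{\mathrm{NM}},\Lambda_2^{\mathrm{NM}}$ of Definition~\ref{def:nm} equal $\Lambda_1^{\mathrm{CiNM}},\Lambda_2^{\mathrm{CiNM}}$ of Definition~\ref{def:cinm}. The computation is a ricochet (transpose-trick) argument: writing $\psi_k^{\ch{CM'}}=\Enc_k(\phi^{+\ch{MM'}})=(V_k\otimes\I)\phi^{+\ch{MM'}}(V_k^\dagger\otimes\I)=(\I\otimes V_k^T)\phi^{+\ch{CM'}}(\I\otimes\bar V_k)$, I push the $M'$-unitaries $V_k^T,\bar V_k$ out of $\Lambda_A$ (which acts trivially on $M'$) and use $\bar V_k V_k^T=\I$ to cancel the inner pair, obtaining
\[ \psi_k^{\ch{CM'}}\,\Lambda_A(\psi_k^{\ch{CM'}}\otimes(\cdot)) = (V_k^T)^{\ch{M'}}\phi^{+\ch{CM'}}\,\Lambda_A(\phi^{+\ch{CM'}}\otimes(\cdot))\,(\bar V_k)^{\ch{M'}}. \]
Tracing out $CM'$ and using cyclicity on $M'$ makes the remaining $V_k^T,\bar V_k$ cancel as well, so the result is independent of $k$ and equals $\Trr[CC']{\phi^{+\ch{CC'}}\Lambda_A(\phi^{+\ch{CC'}}\otimes(\cdot))}=\Lambda_1^{\mathrm{NM}}$; the same holds for $\Lambda_2$, where the $\I$-part agrees because $\Trr[M']{\psi_k^{\ch{CM'}}}=\tau^{\ch{C}}=\Trr[C']{\phi^{+\ch{CC'}}}$. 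Since the constants and replacement maps already agree, the two decompositions coincide exactly, giving $\eps$-\NM$\Leftrightarrow\eps$-\CiNM.

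Next I would close \PNM$\Rightarrow$\CiNM. Theorem~\ref{thm:pnm_charac} guarantees that an $\eps$-\PNM scheme admits the decomposition with the \emph{canonical} maps $\Lambda_1=\Trr[MM']{\phi^{+\ch{MM'}}\Tilde{\Lambda}_A(\phi^{+\ch{MM'}}\otimes(\cdot))}$ and $\Lambda_2=\Trr[MM']{(\I-\phi^{+\ch{MM'}})\Tilde{\Lambda}_A(\phi^{+\ch{MM'}}\otimes(\cdot))}$ up to error $3\eps$. For a unitary scheme $\Tilde{\Lambda}_A=\E_k V_k^\dagger\,\Lambda_A(V_k(\cdot)V_k^\dagger)\,V_k$; substituting this and moving the decryption unitaries $V_k^\dagger,V_k$ off the output register onto $\phi^{+\ch{MM'}}$ by cyclicity turns $\phi^{+\ch{MM'}}$ into $\psi_k^{\ch{CM'}}$, so these canonical maps equal exactly the \CiNM maps of Definition~\ref{def:cinm} (the $\E_k$ survives, the average over $r$ is trivial). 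Because $|M|^2=|C|^2$, the \CiNM decomposition then holds with error $3\eps$, i.e.\ $\eps$-\PNM$\Rightarrow 3\eps$-\CiNM.

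Combining the three facts yields the chain $\eps$-\NM$\Leftrightarrow\eps$-\CiNM$\Rightarrow\eps$-\PNM$\Rightarrow 3\eps$-\CiNM, and the constant factor $3$ preserves negligibility, so \NM, \CiNM and \PNM are equivalent for unitary schemes. The main obstacle I anticipate is the bookkeeping in the ricochet step: one must track carefully which register each of $V_k$, $V_k^\dagger$, $V_k^T$, $\bar V_k$ acts on (the output ``$C$'' versus the reference ``$M'$''), and verify both the cancellation $\bar V_k V_k^T=\I$ and the trace-cyclicity collapse of the $k$-dependence; an analogous but simpler bookkeeping underlies the identification of the \PNM-canonical maps with the \CiNM maps.
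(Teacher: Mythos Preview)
Your proposal is correct and matches the paper's proof essentially step for step: both use Theorem~\ref{thm:NMorCiNMimpliesPNM} for the forward implications, then invoke Theorem~\ref{thm:pnm_charac} to get the canonical \PNM simulator maps, identify them with the \CiNM maps by moving the $V_k,V_k^\dagger$ onto $\phi^+$ via cyclicity (using $|C|=|M|$ and trivial $T$), and identify them with the \NM maps via the mirror lemma $A^{\ch M}\ket{\phi^+}=A^{T\ch{M'}}\ket{\phi^+}$ exactly as you outline. The only cosmetic difference is that you first establish $\NM\Leftrightarrow\CiNM$ directly on $\Lambda_A$ and then close $\PNM\Rightarrow\CiNM$, whereas the paper starts from the \PNM canonical maps and shows separately that they equal both the \CiNM and the \NM maps; the underlying ricochet computation is identical.
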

\begin{proof}
Since $\CiNM, \NM \Rightarrow \PNM$ for all \SKQES, we only need to show the converse direction. Let $\Pi = (\KeyGen, \Enc, \Dec)$ be a \PNM unitary \SKQES and $\Lambda_A$ an arbitrary attack on this scheme. By Theorem \ref{thm:pnm_charac}, we have, for some $\eps \leq \negl(n)$, that 
\[ \dn{\Tilde{\Lambda}_A - \left(\id^{\ch{M}} \otimes \Lambda_1^{\ch[B]{\hat{B}}} + \frac{1}{|M|^2 - 1}\left(|M|^2\lr{\Dec_K(\tau^{\ch{C}})} - \id\right)^{\ch{M}} \otimes \Lambda_2^{\ch[B]{\hat{B}}}\right)} \leq 3\eps,\]
where
\begin{align*}
    \Lambda_1 &= \Trr[MM']{\phi^{+\ch{MM'}}\Tilde{\Lambda}_A(\phi^{+\ch{MM'}} \otimes (\cdot))}&and\\
    \Lambda_2 &= \Trr[MM']{(\I^{\ch{MM'}} - \phi^{+\ch{MM'}})\Tilde{\Lambda}_A(\phi^{+\ch{MM'}} \otimes (\cdot))}.
\end{align*}
Let $\{V_k^{\ch{M}}\}_k$ be the collection such that $\Enc_k(X) = V_kXV_k^\dagger$ and note that $\Enc_{k;r} = \Enc_k$ and $C = M$, where $\Enc_{k;r}$ is as in Theorem \ref{thm:charac}. Observe that
\begin{align*}
    \Lambda_1 &= \Trr[MM']{\phi^{+\ch{MM'}}\E\limits_k[\Dec_k(\Lambda_A(\Enc_k(\phi^{+\ch{MM'}} \otimes (\cdot))))]}\\
    &= \Trr[MM']{\phi^{+\ch{MM'}}\E\limits_k\left[V_k^\dagger(\Lambda_A(V_k(\phi^{+\ch{MM'}} \otimes (\cdot))V_k^\dagger))V_k\right]}\\
    &= \E\limits_k\left[\Trr[MM']{V_k\phi^{+\ch{MM'}}V_k^\dagger(\Lambda_A(V_k(\phi^{+\ch{MM'}} \otimes (\cdot))V_k^\dagger))}\right]\\
    &= \E\limits_{k,r}\left[\Trr[CM']{\psi^{\ch{CM'}}_{k,r}\Lambda_A(\psi^{\ch{CM'}}_{k,r} \otimes (\cdot))}\right],
\end{align*}
where $\psi_{k,r} = V_k\phi^{+\ch{MM'}}V_k^\dagger = \Enc_{k;r}(\phi^{+\ch{MM'}})$. In the same way one can deduce that $\Lambda_2 = \E\limits_{k,r}\left[\Trr[CM']{(\I^{\ch{CM'}} - \psi^{\ch{CM'}}_{k,r})\Lambda_A(\psi^{\ch{CM'}}_{k,r} \otimes (\cdot))}\right]$, and thus $\Pi$ is \CiNM.
Similarly, we have
\begin{align*}
    \Lambda_1 &= \Trr[MM']{\phi^{+\ch{MM'}}\E\limits_k\left[V_k^\dagger(\Lambda_A(V_k(\phi^{+\ch{MM'}} \otimes (\cdot))V_k^\dagger))V_k\right]}\\
    &= \Trr[MM']{\phi^{+\ch{MM'}}\E\limits_k\left[V_k^\dagger(\Lambda_A(V_k^{T\ch{M'}}(\phi^{+\ch{MM'}} \otimes (\cdot))\bar{V_k}^{\ch{M'}}))V_k\right]}\\
    &= \Trr[MM']{\phi^{+\ch{MM'}}\E\limits_k\left[(V_k^\dagger \otimes V_k^{T\ch{M'}})(\Lambda_A(\phi^{+\ch{MM'}} \otimes (\cdot)))(V_k\otimes\bar{V_k}^{\ch{M'}})\right]}\\
    &= \E\limits_k\left[\Trr[MM']{(V_k\otimes\bar{V_k}^{\ch{M'}})\phi^{+\ch{MM'}}(V_k^\dagger \otimes V_k^{T\ch{M'}})(\Lambda_A(\phi^{+\ch{MM'}} \otimes (\cdot)))}\right]\\
    &= \Trr[MM']{\phi^{+\ch{MM'}}(\Lambda_A(\phi^{+\ch{MM'}} \otimes (\cdot)))},
\end{align*}
where we have used the ``mirror lemma," $A^{\ch M}\ket\phi^{+\ch{MM'}}=A^{T\ch M'}\ket\phi^{+\ch{MM'}}$ in the first and third equality, and  $(\cdot)^T$ is the transpose with respect to the computational basis and $\bar{(\cdot)}$ is the complex conjugate. In the same way one can deduce that\\ $\Lambda_2 = \Trr[MM']{(\I - \phi^{+\ch{MM'}})(\Lambda_A(\phi^{+\ch{MM'}} \otimes (\cdot)))}$ and thus $\Pi$ is \NM.
\end{proof}

We can use this equivalence to adopt results proven for \NM in \cite{NonMal}, particularly that the unitaries in a unitary encryption scheme form a unitary 2-design.

\begin{definition}
    A family of unitary matrices $D$ is an \emph{$\eps$-approximate 2-design} if
    \[ \dn{\frac{1}{|D|}\sum\limits_{U\in D} (U\otimes U)(\cdot)(U^\dagger\otimes U^\dagger) - \int (U\otimes U)(\cdot)(U^\dagger\otimes U^\dagger) \mathop{dU}} \leq \eps.\]
\end{definition}

\begin{corollary}\label{thm:DNSfromPNM}
    Let $\Pi = (\KeyGen, \Enc, \Dec)$ be a unitary \SKQES such that $\Enc_k(\rho) = V_k\rho V_k^\dagger$ for some family of unitaries $D = \{V_k\}_k$ and $|M|=|C|=2^n$, then $\Pi$ being \PNM or \CiNM is equivalent to $D$ to being an approximate 2-design, in the sense that, for a sufficiently large constant $r$\footnote{For the exact value of $r$ and the constants hidden by the $\Omega$-s we refer to Theorem C.3 in \cite{NonMal} and Lemma 2.2.14 in \cite{low2010pseudo}},
    \begin{enumerate}
        \item If $D$ is a $\Omega(2^{-rn})$-approximate 2-design then $\Pi$ is $2^{-\Omega(n)}$-\PNM and $2^{-\Omega(n)}$-\CiNM.
        \item If $\Pi$ is $\Omega(2^{-rn})$-\PNM or $\Omega(2^{-rn})$-\CiNM, then $D$ is a $2^{-\Omega(n)}$-approximate 2-design.
    \end{enumerate}
\end{corollary}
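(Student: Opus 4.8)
The plan is to reduce both implications to the corresponding statements for $\NM$, which were already established in \cite{NonMal}, and then transfer them to $\PNM$ and $\CiNM$ via the equivalence proved in Theorem \ref{thm:unitaryissimple}. Since $\Pi$ is a unitary scheme with $|M| = |C| = 2^n$, Theorem \ref{thm:unitaryissimple} tells us that for unitary schemes the three notions $\PNM$, $\CiNM$ and $\NM$ coincide, and the proof of that theorem (together with Theorem \ref{thm:NMorCiNMimpliesPNM} and the characterization in Theorem \ref{thm:pnm_charac}) shows that the conversion between the corresponding approximation parameters costs at most a constant factor: $\NM(\eps) \Rightarrow \PNM(\eps)$ and $\PNM(\eps) \Rightarrow \CiNM(3\eps), \NM(3\eps)$. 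Consequently it suffices to prove the two implications with $\PNM$/$\CiNM$ replaced by $\NM$ throughout, because a constant multiplicative change in $\eps$ is absorbed into the $\Omega$-notation and into the $2^{-\Omega(n)}$ conclusions.

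It then remains to establish the equivalence between $\Pi$ being $\NM$ and $D$ being an approximate $2$-design. This is precisely the content of Theorem C.3 in \cite{NonMal}: for a unitary encryption scheme with encryption unitaries $D = \{V_k\}_k$, the effective-map condition defining $\NM$ is controlled by how close the twirl $\frac{1}{|D|}\sum_{U \in D}(U \otimes U)(\cdot)(U^\dagger \otimes U^\dagger)$ is to the Haar twirl. For the first implication I would simply instantiate that theorem: an $\Omega(2^{-rn})$-approximate $2$-design yields, after accounting for the dimension factors in the estimate, a $2^{-\Omega(n)}$-$\NM$ scheme, which by the reduction above is $2^{-\Omega(n)}$-$\PNM$ and $2^{-\Omega(n)}$-$\CiNM$. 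The second implication runs the same estimate in reverse, starting from $\Omega(2^{-rn})$-$\PNM$ or $\Omega(2^{-rn})$-$\CiNM$, passing to $\NM$ via Theorem \ref{thm:unitaryissimple}, and concluding that $D$ is a $2^{-\Omega(n)}$-approximate $2$-design.

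The main point requiring care — and the reason the statement is phrased with a \emph{sufficiently large} constant $r$ rather than a clean parameter-preserving equivalence — is the bookkeeping of dimension-dependent losses. Passing between the diamond-norm $2$-design error in our Definition and the $\NM$-parameter $\eps$ incurs factors polynomial in $|M| = 2^n$, since the projector $\phi^+$ and the term $|M|^2 \lr{\Dec_K(\tau)}$ in the $\NM$ simulator carry such factors. To absorb these and still obtain a genuinely negligible ($2^{-\Omega(n)}$) conclusion on the other side, one must start from an error smaller by a fixed power of $2^{-n}$; this is exactly the role of $r$. The precise norm conversion needed to match the $2$-design notion used in \cite{NonMal} with the diamond-norm formulation stated here is supplied by Lemma 2.2.14 in \cite{low2010pseudo}, from which the explicit value of $r$ and the hidden constants can be read off. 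Beyond this accounting no new ideas are needed: the heavy lifting is done by Theorem \ref{thm:unitaryissimple} on the cryptographic side and by Theorem C.3 of \cite{NonMal} on the design-theoretic side.
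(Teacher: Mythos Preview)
Your proposal is correct and follows essentially the same approach as the paper: the corollary is stated without proof and is meant to follow directly from Theorem~\ref{thm:unitaryissimple} (the equivalence of \PNM, \CiNM and \NM for unitary schemes) together with the cited Theorem~C.3 in \cite{NonMal} and Lemma~2.2.14 in \cite{low2010pseudo}, exactly as you outline. Your additional discussion of the constant-factor loss in Theorem~\ref{thm:pnm_charac} and the dimension-dependent bookkeeping that necessitates the constant $r$ is more explicit than what the paper provides, but it is in line with the intended argument.
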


To provide additional evidence that \PNM captures plaintext non-malleability for \SKQES in a satisfactory way, we show that any \PNM-secure scheme can be used to construct a plaintext-authenticating scheme in the sense of \cite{dupuis2012actively}, see Definition~\ref{def:dns}.
The intuition behind \DNS-authentication is that, after a possible attack, one can determine from a received plaintext whether or not an attack was performed, unless the attack did not change the underlying plaintext. For this reason, \DNS-authentication is a notion of plaintext authentication. We use the fact that a \PNM\ scheme protects a plaintext from modification to protect a tag register, which we then use to detect whether an attack was attempted. With this in mind, we first determine what state makes a good tag.

\begin{lemma}
\label{lem:dns_tag}
For any \SKQES\ $(\KeyGen, \Enc, \Dec)$ and any $m \in \mathbb{N}$ such that $M = M'R$ for some registers $M'$ and $R$ with $\log|R| = m$ there exists an $x \in \{0,1\}^m$ such that $\Tr[\bra{x}^{\ch{R}}\Dec_K(\tau^{\ch{C}})\ket{x}^{\ch{R}}] \leq \frac{1}{|R|}$.
\end{lemma}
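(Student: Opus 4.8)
The plan is to recognize this as a simple averaging (pigeonhole) statement, once the role of the reject flag $\bot$ is accounted for. For each $x \in \{0,1\}^m$, define the number
\[ p_x := \Trr{\bra{x}^{\ch{R}}\Dec_K(\tau^{\ch{C}})\ket{x}^{\ch{R}}}. \]
Each $p_x$ is non-negative: $\Dec_K(\tau^{\ch{C}})$ is a density matrix, being an average over $k \from \KeyGen(1^n)$ of the positive operators $\Dec_k(\tau^{\ch{C}})$, and conjugation by $\ket{x}^{\ch{R}}$ followed by the trace preserves positivity. Establishing the claim amounts to showing $\min_x p_x \leq \frac{1}{|R|}$.

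First I would sum the $p_x$ over all $x$ and show the total is at most $1$. Decomposing the density matrix according to the convention that the output of $\Dec$ lives in $\MD(\HH_M \oplus \HH_\bot)$, write $\Dec_K(\tau^{\ch{C}}) = \rho_M \oplus c\kb{\bot}{\bot}$, where $\rho_M$ is the (sub-normalized) component on $\HH_M = \HH_{M'} \otimes \HH_R$ and $c = \Trr{\kb{\bot}{\bot}\Dec_K(\tau^{\ch{C}})} \geq 0$ is the weight on the reject flag. The vectors $\ket{x}^{\ch{R}}$ annihilate the $\bot$ component, so $\bra{x}^{\ch{R}}\Dec_K(\tau^{\ch{C}})\ket{x}^{\ch{R}} = \bra{x}^{\ch{R}}\rho_M\ket{x}^{\ch{R}}$, an operator on $\HH_{M'}$. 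Since $\{\ket{x}^{\ch{R}}\}_x$ is the computational basis of $\HH_R$, summing over $x$ performs the partial trace over $R$:
\[ \sum_{x \in \{0,1\}^m} p_x = \sum_{x} \Trr[M']{\bra{x}^{\ch{R}}\rho_M\ket{x}^{\ch{R}}} = \Trr{\rho_M} = 1 - c \leq 1. \]

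Finally I would invoke pigeonhole: there are exactly $|R| = 2^m$ terms in the sum, all non-negative and summing to at most $1$, so the minimum term is at most the average, giving
\[ \min_{x \in \{0,1\}^m} p_x \leq \frac{1}{|R|}\sum_{x} p_x \leq \frac{1}{|R|}. \]
Any minimizing $x$ then satisfies the claim. The only point requiring care is the bookkeeping of the $\bot$ flag, which is precisely what makes the sum $\leq 1$ rather than exactly $1$; there is no genuine obstacle, since once positivity and the normalization bound are in place the statement is purely combinatorial.
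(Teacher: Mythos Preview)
Your proposal is correct and follows essentially the same averaging/pigeonhole argument as the paper: both compute the average of the $p_x$ and conclude the minimum is at most the average. You are in fact slightly more careful than the paper, which writes $\sum_x p_x = \Trr{\Dec_K(\tau^{\ch{C}})} = 1$ without explicitly separating off the $\bot$ component; your observation that the sum equals $1-c \le 1$ is the precise statement, and the conclusion only becomes easier.
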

\begin{proof}
Observe that
\begin{align*}
    \E\limits_{x\in\{0,1\}^m}[\Trr{\bra{x}\Dec_K(\tau^{\ch{C}})\ket{x}}] &= \sum\limits_{x\in\{0,1\}^m}\frac{1}{2^m} \Trr{\bra{x}\Dec_K(\tau^{\ch{C}})\ket{x}}\\
    &= \frac{1}{2^m} \Trr{\Dec_K(\tau^{\ch{C}})} = \frac{1}{|R|}\\
\end{align*}

Since the expected value of $\Trr{\bra{x}\Dec_K(\tau^{\ch{C}})\ket{x}}$ is $\frac{1}{|R|}$, there must be at least one $x$ such that $\Trr{\bra{x}\Dec_K(\tau^{\ch{C}})\ket{x}} \leq \frac{1}{|R|}$. 
\end{proof}

Lemma \ref{lem:dns_tag} allows us to find tags that have little overlap with $\Dec_K(\tau^{\ch{C}})$, which means one can distinguish well between the case were the tag was left unharmed and the case where the ciphertext was depolarized. We use this property to build a scheme that is \DNS\ authenticating.

\begin{theorem}
\label{thm:pnm_dns}
For any $\eps$-\PNM \SKQES $\Pi = (\KeyGen, \Enc, \Dec)$, there exists some $x$ such that the scheme $\Pi' = (\KeyGen, \Enc', \Dec')$ is $\left(\frac{3}{|R|} + \eps\right)$-\DNS-authenticating, where
\begin{align*}
    \Enc'_k &= \Enc_k((\cdot)^{\ch{M'}} \otimes \kb{x}{x}^{\ch{R}})\\
    \Dec'_k &= \bra{x}^{\ch{R}}\Dec_k(\cdot)\ket{x}^{\ch{R}} + \Trr{(\I^{\ch{R}} - \kb{x}{x}^{\ch{R}})\Dec_k(\cdot)}\kb{\bot}{\bot}
\end{align*}
\end{theorem}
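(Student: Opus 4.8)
The plan is to show that the modified scheme $\Pi'$ detects any attack that attempts to tamper with the ciphertext, by leveraging the fact that the underlying $\eps$-\PNM scheme $\Pi$ protects the plaintext (including the tag register $R$) from meaningful modification. The key observation is that the tag $\kb{x}{x}^{\ch{R}}$, chosen via Lemma~\ref{lem:dns_tag}, has small overlap with the ``random decryption'' state $\Dec_K(\tau^{\ch{C}})$: measuring whether the $R$ register is in state $\ket{x}$ lets the decryptor distinguish between the case where the plaintext was left intact (corresponding to the $\id^{\ch M}$ branch of the \PNM simulator) and the case where it was effectively replaced by a fresh random decryption (the $\lr{\Dec_K(\tau)}$ branch).

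First I would take an arbitrary attack $\Lambda_A^{\ch[CB]{C\hat B}}$ on $\Pi'$ and consider its effective map with respect to $\Pi'$. Since $\Enc'_k$ and $\Dec'_k$ are built from $\Enc_k$ and $\Dec_k$ by inserting the tag and a final measurement, I would rewrite the effective map of $\Lambda_A$ on $\Pi'$ in terms of the effective map $\Tilde{\Lambda}_A^{\ch[M'RB]{M'R\hat B}}$ on $\Pi$ (restricted to inputs with the tag register prepared in $\kb{x}{x}^{\ch R}$), followed by the projective measurement $\{\kb{x}{x}^{\ch R}, \I - \kb{x}{x}^{\ch R}\}$ that decides accept/reject. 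Then I would invoke the $\eps$-\PNM property of $\Pi$ to replace $\Tilde{\Lambda}_A$ by its simulator $\id^{\ch M}\otimes\Lambda_1 + \frac{1}{|M|^2-1}(|M|^2\lr{\Dec_K(\tau)}-\id)^{\ch M}\otimes\Lambda_2$ at a cost of $\eps$ in diamond norm.

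Next I would analyze what this simulator does after the tag measurement. On the $\id^{\ch M}$ branch the tag register is untouched, so the measurement accepts and the $M'$ register passes through; this gives the $\id^{\ch{M'}}\otimes\Lambda_{acc}$ part of the \DNS simulator. On the $\lr{\Dec_K(\tau)}$ branch, the plaintext register is in the fixed state $\Dec_K(\tau^{\ch{C}})$, and Lemma~\ref{lem:dns_tag} guarantees that the probability of this state passing the tag check (i.e. $\Tr[\bra{x}^{\ch R}\Dec_K(\tau)\ket{x}^{\ch R}]$) is at most $\tfrac{1}{|R|}$; so this branch rejects except with small weight, contributing to $\Lambda_{rej}$ plus an error term. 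The remaining task is to bound the total deviation of the resulting map from a valid \DNS simulator of the form $\id^{\ch{M'}}\otimes\Lambda_{acc} + \lr{\kb{\bot}{\bot}}\otimes\Lambda_{rej}$: the contribution surviving the tag check on the ``bad'' branch, together with the subtracted $-\frac{1}{|M|^2-1}\id^{\ch M}$ correction term that also partly survives the check, must be collected and bounded by $\tfrac{3}{|R|}$, yielding the total error $\tfrac{3}{|R|}+\eps$.

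The main obstacle I anticipate is the bookkeeping in the last step: carefully accounting for all the ways the simulator's non-accepting branch can leak weight past the tag measurement, including the negative $-\id$ term inside $(|M|^2\lr{\Dec_K(\tau)}-\id)$, and verifying that the leftover maps can genuinely be assembled into \CPTNI maps $\Lambda_{acc},\Lambda_{rej}$ with $\Lambda_{acc}+\Lambda_{rej}$ \CPTP as Definition~\ref{def:dns} requires. Getting the constant to come out as exactly $\tfrac{3}{|R|}$ (rather than some larger multiple of $\tfrac{1}{|R|}$) will require being somewhat careful about which terms are absorbed into the honest accept/reject maps and which are charged as error.
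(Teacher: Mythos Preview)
Your proposal is correct and follows essentially the same approach as the paper: factor the effective map of $\Pi'$ as $\Dec_{ch}\circ\Tilde\Lambda_A\circ\Enc_{ap}$, apply the $\eps$-\PNM simulator for $\Tilde\Lambda_A$, observe that $\Dec_{ch}\circ\id\circ\Enc_{ap}=\id$, and bound the leftover piece coming from $\frac{1}{|M|^2-1}(|M|^2\lr{\Dec_K(\tau)}-\id)$ by $\tfrac 3{|R|}$ using Lemma~\ref{lem:dns_tag} and the triangle inequality. One simplification you may not have anticipated: there is no need to ``assemble'' $\Lambda_{acc},\Lambda_{rej}$---the paper just sets $\Lambda_{acc}=\Lambda_1$ and $\Lambda_{rej}=\Lambda_2$ directly and charges all remaining deviation (including the part of the $\lr{\Dec_K(\tau)}$ branch that spuriously passes the tag check and the $-\id$ correction) to the $\tfrac 3{|R|}$ error term, so the \CPTNI/\CPTP requirements are inherited for free from \PNM.
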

\begin{proof}
We sketch the proof here, the full proof of this theorem can be found in Appendix \ref{thm:prf_pnm_dns}. Take $x$ as in Lemma \ref{lem:dns_tag}. Let $\Lambda_A$ be an arbitrary attack map on $\Pi'$, then its effective map is
\[ \Tilde{\Lambda}'_A = \Dec_{check} \circ \Tilde{\Lambda}_A \circ \Enc_{append}, \]
where $\Tilde{\Lambda}_A$ is the effective map of $\Lambda_A$ as an attack on $\Pi$ and $\Enc_{append}$ and $\Dec_{check}$ are the channels that perform adding $\kb{x}{x}$ to the plaintext during encryption and removing and checking of $\kb{x}{x}$ during decryption respectively. Since $\Pi$ is $\eps$-\PNM, there exist $\Lambda_1, \Lambda_2$ such that
\[ \dn{\Tilde{\Lambda}_A - \id \otimes \Lambda_1 + \frac{1}{|M|^2-1}(|M|^2\lr{\Dec_K(\tau^{\ch{C}})} - \id)^{\ch{M}}\otimes \Lambda_2}\leq \eps. \]
Define $\Lambda_{acc} = \Lambda_1$, $\Lambda_{rej} = \Lambda_2$, then
\[ \dn{\Tilde{\Lambda}'_A - \id \otimes \Lambda_{acc} - \lr{\kb{\bot}{\bot}} \otimes \Lambda_{rej}} \leq \eps + \frac{3}{|R|},\]
which means that $\Pi'$ is $\left(\frac{3}{|R|} + \eps\right)$-\DNS\ authenticating.
\end{proof}

Note that $|R|$ is a parameter of the scheme and any \PNM scheme (with negligible $\eps$) can be made into a \DNS scheme (with negligible $\eps$) by taking $|R| = 2^n$, i.e. taking $R$ as $n$ qubits.

\section{Non-Malleability for Quantum PKE} \label{sec:PKE}

\subsection{Quantum Comparison-Based Non-Malleability}

In this section, we will define a notion of many-time non-malleability for quantum public-key encryption, quantum comparison-based non-malleability (\QCNM), as a quantum analog of the classical notion of comparison-based non-malleability (\CNM, see Section \ref{sec:prelims}) introduced in \cite{bellare1999non}. We first analyze \CNM with the goal of finding appropriate quantum analogs of each of its components. 


The message distribution $M$ in the \CNM\ definition allows an adversary to select messages that she thinks might produce ciphertexts that can be modified in a structural way. This choice is given because the total plaintext space is exponentially large, thus if one picks a message completely at random and only a few of them can be modified into related ciphertexts, then the winning probability is negligible despite the scheme being insecure. In the quantum representation of this message space we consider the following requirements:
\begin{enumerate}
    \item As mentioned earlier, the quantum no-cloning theorem prevents copying the plaintext after sampling it for future reference. In order to check the relation in the last step of \CNM, we require that two related states are produced, one of which will be kept by the challenger and the other encrypted and used by the adversary.
	\item It must not be possible for the adversary to correlate herself with either of the produced messages. This is to prevent the adversary from influencing the second copy of the state later on. For example, consider the case where the adversary produces the state $\frac{1}{\sqrt{2}}(\ket{000} + \ket{111})$, where the first two qubits are the two copies of the message and the last is kept by the adversary. The adversary can then measure her qubit, collapsing the superposition and informing her in which (classical) state the second copy now is. This allows her to trivially construct a relation between her output and the second copy.
\end{enumerate}

In order to satisfy requirement (1), we have chosen to represent $M$ by a unitary $U^{\ch{MRP}}$ such that $U\ket{0}$ is a purification of the message distribution, where the message resides in $M$, the second (reference) state in $R$, and $P$ is used for the purification\footnote{A purification is a quantum register that is similar to the ``garbage'' register in reversible computation.}. This purification register allows the adversary to implement any quantum channel on $MR$, with $U$ being a Stinespring dilation of that channel. The first part of the quantum adversary, $\MA_1$, produces this unitary in the form of a circuit along with some side information $S$ to be passed on to the next stage. We denote this process by $(U, S) \leftarrow \MA_1(pk)$. 

For the \QCNM\ definition we define two experiments, similar to the \CNM\ definition. In the following we describe  how the different elements of the \CNM experiment are instantiated in the quantum case. The appropriate quantum notion of a relation $R$ on plaintexts is given by a POVM element  $0\le E^{\ch{MR}}\le \I$. The two registers $MR$ are considered to contain related states if an application of the measurement $\{E, \I-E\}$ returns the outcome corresponding to $E$. Of course, this POVM is provided by the adversary in form of a circuit and must hence be efficient. The quantum analogue of the vector $\mathbf{y}$ is given by a collection of registers $\textbf{C} = C_1\dots C_m$, where $m$ is at most polynomial in $n$, the security parameter of the considered scheme, and each $C_i$ satisfies $M_iT_i = C_i \cong C = MT$. The quantum analogue of the vector $\mathbf{x}$ is similarly given as $\mathbf{M} = M_1\dots M_m$. Observe that any $\PKQES$ can also be seen as a $\SKQES$, with keys of the form $k = (pk, sk)$, which allows us to use Theorem \ref{thm:charac}\footnote{This characterization could also be invoked with $k = pk$, however the resulting encryption unitary is then (likely) not efficiently implementable.}. For any \PKQES\ $\Pi = (\KeyGen, \Enc, \Dec)$ with security parameter $n$, let $\{V_k \mid k = (pk,sk) \leftarrow \KeyGen(1^n)\}$, $t = \log|T|$, $\{\psi_{k,r} \mid k = (pk,sk) \leftarrow \KeyGen(1^n), r \in \{0,1\}^t\}$ and $\{p_k \mid k = (pk,sk) \leftarrow \KeyGen(1^n)\}$ be as in Theorem \ref{thm:charac} in the \QCNM experiments.

Lastly, we define the unitary $U_{prep}$ combining the preparation of the message state and the encryption of its part in register $M$. This means a check similar to the $y\not\in \mathbf{y}$ check in the \CNM\ experiments can be implemented by sequentially undoing $U_{prep}$ on all combinations $C_iRP$ and then measuring whether the result is $\kb{0}{0}$, which is only the case if $C_i$ contained part of $U_{prep}\ket{0}$, which is the original ciphertext given to the adversary.

We are now ready to define the real and ideal experiments for quantum comparison-based non-malleability.

\begin{algorithm}[H]
	\caption{\QCNMreal}\label{exp:QCNMr}
\DontPrintSemicolon
\SetKwInOut{Input}{Input}\SetKwInOut{Output}{Output}
\Input{$\Pi, \MA, n$}
\Output{$b \in \{0,1\}$}
\BlankLine
$k = (pk, sk) \leftarrow \KeyGen(1^n)$\;
$(U^{\ch{MRP}}, S) \leftarrow \MA_1(pk)$\;
$r \xleftarrow{p_k} \{0,1\}^t$\;
Construct $U_\psi^{\ch{T}}$ such that $U_\psi^{\ch{T}}\ket{0}^{\ch{T}} = \ket{\psi_{k,r}}^{\ch{T}}$\;
Construct $U_{prep}^{\ch{MTRP}} = V_{k}^{\ch{MT}}(U^{\ch{MRP}} \otimes U_\psi^{\ch{T}})$\;
Prepare $U_{prep}\kb{0}{0}U_{prep}^\dagger$ in $MTRP$\;
$(\mathbf{C}, E) \leftarrow \MA_2(MT, S)$\;
\For{$i = 1,\dots, |\mathbf{C}|$}{
    Perform $U_{prep}^\dagger$ on $C_iRP$\;
    Measure $\{\kb{0}{0}, \I - \kb{0}{0}\}$ on $C_iRP$ with outcome $b$\;
    \If{$b = 0$}{Output $0$}
    Perform $U_{prep}$ on $C_iRP$\;
}
$\mathbf{M} \leftarrow \Dec_{sk}(\mathbf{C})$\;
$\{E, \I - E\}$ on $R\mathbf{M}$ with outcome $e$\;
Output $e$
\end{algorithm}
\begin{algorithm}[H]
	\caption{\QCNMideal}\label{exp:QCNMi}
\DontPrintSemicolon
\SetKwInOut{Input}{Input}\SetKwInOut{Output}{Output}
\Input{$\Pi, \MA, n$}
\Output{$b \in \{0,1\}$}
\BlankLine
Run lines 1-14 
of Experiment \QCNMreal\;
\setcounter{AlgoLine}{14}
Prepare $U\kb{0}{0}U^\dagger$ in $\Tilde{M}\Tilde{R}\Tilde{P}$
\;
$\{E, \I - E\}$ on $
\Tilde{R}
\mathbf{M}$ with outcome $e$\;
Output $e$
\end{algorithm}

A \PKQES is now defined to be \QCNM-secure, if no adversary can achieve higher success probability in the experiment \QCNMreal than in \QCNMideal.
\begin{definition}
A $\PKQES$ $\Pi$ is \emph{quantum comparison-based non-malleable \\(\QCNM)} if for any QPT adversary $\MA = (\MA_1, \MA_2)$ it holds that
\[ \Prr{\QCNMreal(\Pi, \MA, n) = 1} - \Prr{\QCNMideal(\Pi, \MA, n) = 1} \leq \negl(n),\]
if $\MA$ such that:
\begin{itemize}
    \item $\MA_1$ outputs a valid unitary $U$ which can be implemented by a QPT algorithm, 
    \item $\MA_2$ outputs a POVM element $E$ which can be implemented by a QPT algorithm, 
    \item $\MA_2$ outputs a vector of registers $\mathbf{C}$ such that $\bot \not\in \Dec_{sk}(\mathbf{C})$.
\end{itemize}
\end{definition}


\subsection{Relation Between \QCNM\ and \CNM}
\label{sec:qcnm_cnm}

In order to compare \QCNM to \CNM, we consider both definitions modified for quantum adversaries and encryption schemes that have classical input and output but can perform quantum computation. In the case that a quantum state is sent to such a post-quantum algorithm, it is first measured in the computational basis to obtain a classical input. 

\begin{algorithm}[H]
	\caption{$\QCNMreal_{PQ}$}\label{exp:pqQCNMr}
\DontPrintSemicolon
\SetKwInOut{Input}{Input}\SetKwInOut{Output}{Output}
\Input{$\Pi, \MA, n$}
\Output{$b \in \{0,1\}$}
\BlankLine
$k = (pk, sk) \leftarrow \KeyGen(1^n)$\;
$(U^{\ch{MRP}}, S) \leftarrow \MA_1(pk)$\;
$r \xleftarrow{p_k} \{0,1\}^t$\;
Construct $U_\psi^{\ch{T}}$ such that $U_\psi^{\ch{T}}\ket{0}^{\ch{T}} = \ket{\psi_{k,r}}^{\ch{T}}$\;
Construct $U_{prep}^{\ch{MTRP}} = V_{k}^{\ch{MT}}(U^{\ch{MRP}} \otimes U_\psi^{\ch{T}})$\;
Prepare $U_{prep}\kb{0}{0}U_{prep}^\dagger$ in $MTRP$\;
Measure $MTR$ in the computational basis with outcome $y^{\ch{MT}}z^{\ch{R}}$\;
$(\mathbf{C}, E) \leftarrow \MA_2(MT, S)$\;
\For{$i = 1,\dots, |\mathbf{C}|$}{
    Measure $\{\kb{y}{y}, \I - \kb{y}{y}\}$ on $C_i$ with outcome $b$\;
    \If{$b = y$}{Output $0$}
}
$\mathbf{M} \leftarrow \Dec_{sk}(\mathbf{C})$\;
$\{E, \I - E\}$ on $R\mathbf{M}$ with outcome $e$\;
Output $e$
\end{algorithm}

\begin{algorithm}[H]
	\caption{$\QCNMideal_{PQ}$}\label{exp:pqQCNMi}
\DontPrintSemicolon
\SetKwInOut{Input}{Input}\SetKwInOut{Output}{Output}
\Input{$\Pi, \MA, n$}
\Output{$b \in \{0,1\}$}
\BlankLine
Run lines 1-13 of Experiment $\QCNMideal_{PQ}$\;
\setcounter{AlgoLine}{13}
Prepare $U\kb{0}{0}U^\dagger$ in $\Tilde{M}\Tilde{R}\Tilde{P}$\;
Measure $\Tilde{M}\Tilde{R}$ in the computational basis\;
$\{E, \I - E\}$ on $\Tilde{R}\mathbf{M}$ with outcome $e$\;
Output $e$
\end{algorithm}

We consider the above experiments to be the post-quantum version of the \QCNM\ experiments. The main modification is the measurement in Step 7, which enforces the requirement that $\MA_2$ only takes classical input. The modification of Steps 9 through 12 is made because the measurement in Step 7 disturbs the state in an irreversible fashion, thus performing $U_{prep}^\dagger$ no longer inverts the sampling/encryption process. Lastly, in the Ideal setting Step 15 is added to mimic the effect that Step 7 would have on $M'$.


\begin{definition}
A $\PKQES$ $\Pi$ is \emph{post-quantum comparison-based non-malleable ($\QCNM_{PQ}$)} if for any adversary $\MA = (\MA_1, \MA_2)$ it holds that
\[ \Prr{\QCNMreal_{PQ}(\Pi, \MA, n) = 1} - \Prr{\QCNMideal_{PQ}(\Pi, \MA, n) = 1} \leq \negl(n),\]
if $\MA$ and $\Pi$ are such that:
\begin{itemize}
    \item $\MA_1$ and $\MA_2$ are QPT and output only classical states,
    \item $\MA_1$ outputs a valid unitary $U$ which can be implemented by a QPT algorithm, 
    \item $\MA_2$ outputs a POVM element $E$ which implementable by a QPT algorithm, and
    \item $\MA_2$ outputs a vector of registers $\mathbf{C}$ such that $\bot \not\in \Dec_{sk}(\mathbf{C})$.
\end{itemize}
\end{definition}

Sampling of the message by the challenger is now done by not only applying $U$ to $\ket{0}$, but in addition also measuring in the computational basis. 
Similarly, we define a post-quantum version of \CNM.
\begin{definition}
A $\PKQES$ $\Pi$ is \emph{comparison-based non-malleable against post-quantum adversaries ($\CNM_{PQ}$)} if for any \QPT adversary $\MA = (\MA_1, \MA_2)$ it holds that
\[ \Prr{\CNMreal(\Pi, \MA, n) = 1} - \Prr{\CNMideal(\Pi, \MA, n) = 1} \leq \negl(n),\]
if $\Pi$ and $\MA$ are such that: 
\begin{itemize}
    \item $\MA_1$ and $\MA_2$ are \QPT and output classical strings,
    \item $\MA_1$ outputs a valid \QPT algorithm $M$ which produces classical strings,
    \item $\MA_2$ outputs a \QPT algorithm $R$,
    \item $\MA_2$ outputs a vector $\mathbf{y}$ such that $\bot \not\in \Dec_{sk}(\mathbf{y})$.
\end{itemize}
\end{definition}

The only difference between \CNM\ and $\CNM_{PQ}$ is that the latter allows
the encryption scheme, adversary and any algorithms produced by the adversary to use a quantum computer.  Furthermore, the relation $R$ has become probabilistic, but since it is used only once there is no difference between using a probabilistic relation or picking a deterministic relation at random. Observe that $\CNM_{PQ}$ is simply a stronger requirement than \CNM\ since it requires security against a strict superset of adversaries, and thus trivially implies \CNM.

\begin{theorem}\label{thm:QCNMequivPQ}
A \PKQES\ $\Pi$ is $\QCNM_{PQ}$ if and only if $\Pi$ is $\CNM_{PQ}$.
\end{theorem}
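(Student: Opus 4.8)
The plan is to prove both implications by exhibiting an explicit correspondence between adversaries, exploiting that in the post-quantum setting every register is measured in the computational basis, so that both experiments collapse to classical (randomized) processes and it suffices to match the induced distributions. After the measurement in Step~7 of Experiment~\ref{exp:pqQCNMr}, the pair $(y,z)$ is a classical sample, where $y$ is a ciphertext obtained by encrypting the $M$-part of $U\ket 0$ and $z$ is the measured reference in $R$; Steps~9--12 then merely test whether some $C_i$ equals $y$, which is exactly the classical check $y\notin\mathbf y$, and the final measurement $\{E,\I-E\}$ on $R\mathbf M$ acts on classical data as a probabilistic relation between the reference $z$ and the decrypted vector $\mathbf x$. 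In this light $\QCNM_{PQ}$ is literally a $\CNM_{PQ}$-style experiment, the only structural difference being that the reference $z$ and the encrypted plaintext $m$ are \emph{jointly} sampled from the distribution $q$ obtained by measuring $U\ket 0$ on $MR$, rather than being forced to coincide as in $\CNM_{PQ}$.

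For the easy direction $\QCNM_{PQ}\Rightarrow\CNM_{PQ}$, I would take a $\CNM_{PQ}$ adversary $(\MA_1,\MA_2)$ with message sampler producing $x$ and relation $R$, and build a $\QCNM_{PQ}$ adversary whose unitary $U$ coherently runs the sampler and writes its classical output into \emph{both} $M$ and $R$ (a reversible classical copy, with $P$ holding the sampling randomness), and whose POVM element $E$ implements $R$ on the classical registers $R\mathbf M$; $\MA_2$ is reused verbatim, outputting $\mathbf C=\mathbf y$. After the measurements this forces $z=x$ with $y\gets\Enc_{pk}(x)$, so the real and ideal $\QCNM_{PQ}$ experiments reproduce the real and ideal $\CNM_{PQ}$ experiments line by line (using that the two i.i.d.\ samples in the ideal games play symmetric roles as reference and encrypted plaintext), and the advantages coincide exactly.

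The converse $\CNM_{PQ}\Rightarrow\QCNM_{PQ}$ is where the real work lies. Given a $\QCNM_{PQ}$ adversary with a general $U$ inducing the joint law $q(m,z)$, I would let the $\CNM_{PQ}$ message distribution sample the marginal $q_M$ (run $U$, measure $M$) and define the relation, on input $(x,\mathbf x)$, to first sample a reference $z$ from the conditional $q(\cdot\mid x)$ and then return the outcome of $E$ applied to $(z,\mathbf x)$. Because the decrypted vector $\mathbf x$ depends on the challenge only through the encrypted plaintext (so that $z$ and $\mathbf x$ are conditionally independent given $m$), a short calculation shows that the $\CNM_{PQ}$ real and ideal acceptance probabilities equal the $\QCNM_{PQ}$ real and ideal ones exactly: in the real game the conditional sampling reconstitutes the joint law $q$, while in the ideal game the freshly drawn reference has the correct marginal $q_R$, independent of the encrypted plaintext.

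The main obstacle is the efficiency of this relation: it must be a \QPT algorithm, yet sampling the conditional $q(\cdot\mid x)$ from the circuit $U$ is in general as hard as post-selecting on a measurement outcome and is not known to be efficient for high-entropy message distributions (naive rejection sampling costs about $1/q_M(x)$ trials). Resolving this is the crux of the theorem, and I would attack it by equipping the relation with the description of $U$ and producing the reference by running $U$ forward, conditioning via measurement on the plaintext it receives, and arguing carefully that the resulting algorithm is efficient in the relevant regime; should full generality prove problematic, I would fall back on proving the equivalence for message distributions whose conditionals are \QPT-samplable, which already suffices for the hybrid construction of Section~\ref{sec:PKE}. The remaining book-keeping---matching the $y\notin\mathbf y$ check with Steps~9--12 and checking that the $\bot\notin\Dec_{sk}(\cdot)$ side conditions transfer across the correspondence---is routine.
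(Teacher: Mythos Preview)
Your easy direction ($\QCNM_{PQ}\Rightarrow\CNM_{PQ}$) matches the paper's argument essentially verbatim.

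For the hard direction you have correctly located the real obstacle---efficiently sampling the conditional $q(\cdot\mid x)$---but you have not resolved it. Running $U$ and ``conditioning via measurement on the plaintext it receives'' is rejection sampling and costs $1/q_M(x)$ in expectation, which need not be polynomial; and your fallback to conditionally samplable message distributions would not prove the theorem as stated. The paper avoids conditional sampling altogether by a different reduction. Their $\MB_1$ prepares $U\ket 0$ \emph{twice}, measures to obtain two independent joint samples $(m_0,z_0)$ and $(m_1,z_1)$, stores both pairs in the state, and outputs as message space the uniform distribution on $\{m_0,m_1\}$. The relation $R(x,\mathbf x)$ then looks up the stored $z_i$ with $m_i=x$ and evaluates $E$ on $(z_i,\mathbf x)$---no conditioning needed, just a table lookup. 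In $\CNMreal$ the encrypted plaintext is some $m_i$ and the relation sees the matching $z_i$, reproducing $\QCNMreal_{PQ}$. In $\CNMideal$ the fresh sample is uniform on $\{m_0,m_1\}$: with probability $\tfrac12$ it equals the encrypted $m_i$ (giving the real behaviour again), and with probability $\tfrac12$ it is the other $m_j$, so the relation uses the independent reference $z_j$, reproducing $\QCNMideal_{PQ}$. (If $m_0=m_1$ both games coincide and contribute nothing.) Hence the $\CNM_{PQ}$ advantage of $\MB$ is exactly half the $\QCNM_{PQ}$ advantage of $\MA$, and every step is \QPT. This two-sample trick is the missing idea in your proposal.
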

\begin{proof}
    For the $\Rightarrow$ direction, let $\Pi$ be an arbitrary \PKQES\ which is $\QCNM_{PQ}$-secure and let $\MA = (\MA_1, \MA_2)$ be an arbitrary quantum adversary intended to perform the $\CNM_{PQ}$ experiments. Assume that $\Pi$ is such that $\Enc$ and $\Dec$ take only classical input and produce only classical output. Define $\MB = (\MB_1, \MB_2)$ as follows:\\
    {
    \setlength{\interspacetitleruled}{0pt}
    \setlength{\algotitleheightrule}{0pt}
    $\MB_1(pk)$:\\
    \begin{algorithm}[H]
    \DontPrintSemicolon
    \SetKwInOut{Input}{Input}\SetKwInOut{Output}{Output}
    $(M,s) \leftarrow \MA_1(pk)$\;
    Let $p_M(x)$ be the probability that $x \leftarrow M$, then construct $U$ such that \( U\ket{0}^{\ch{MM'P}} = \frac{1}{|R|}\sum\limits_{r \in R}\ket{M(r)M(r)r} = \sum\limits_{x \leftarrow M} \sqrt{p_M(x)}\ket{xx\phi_x}^{\ch{MM'P}}, \)
    where $M' \cong M$ is the reference register, $R$ is the set of possible input for $M$ and $\phi_x$ is the uniform superposition over all $\ket{r}$ such that $x\leftarrow M(r)$.\;
    Output $(U, \kb{s}{s})$
    \end{algorithm}
    }
    
    {
    \setlength{\interspacetitleruled}{0pt}
    \setlength{\algotitleheightrule}{0pt}
    $\MB_2(\kb{s}{s}^{\ch{S}}, \kb{y}{y}^{\ch{MT}})$:\\
    \begin{algorithm}[H]
    \DontPrintSemicolon
    \SetKwInOut{Input}{Input}\SetKwInOut{Output}{Output}
    $(R, \mathbf{y}) \leftarrow \MA_2(y, s)$\;
    Construct $E = \sum\limits_{i,\mathbf{j}}R(i,\mathbf{j})\kb{i\mathbf{j}}{i\mathbf{j}}$\;
    Output $(E, \kb{\mathbf{y}}{\mathbf{y}}^{\ch{C_1\dots C_m}})$
    \end{algorithm}
    }
    
    Observe that the definition of $\QCNMreal_{PQ}(\Pi, \MB, n)$, after some simplification, yields\\
    {
    \setlength{\interspacetitleruled}{0pt}
    \setlength{\algotitleheightrule}{0pt}
    \begin{algorithm}[H]
    \DontPrintSemicolon
    \SetKwInOut{Input}{Input}\SetKwInOut{Output}{Output}
    $k = (pk, sk) \leftarrow \KeyGen(1^n)$\;
    $(M,s) \leftarrow \MA_1(pk)$\;
    Let $p_M(x)$ be the probability that $x \leftarrow M$, then construct $U$ such that $U\ket{0}^{\ch{MM'P}} = \sum\limits_{x \leftarrow M} \sqrt{p_M(x)}\ket{xx\phi_x}^{\ch{MM'P}}$\;
    $r \xleftarrow{p_k} \{0,1\}^t$\;
    Construct $U_\psi^{\ch{T}}$ such that $U_\psi^{\ch{T}}\ket{0}^{\ch{T}} = \ket{\psi_{k,r}}^{\ch{T}}$\;
    Construct $U_{prep}^{\ch{MTM'P}} = V_{k}^{\ch{MT}}(U^{\ch{MM'P}} \otimes U_\psi^{\ch{T}})$\;
    Prepare $U_{prep}\kb{0}{0}U_{prep}^\dagger$ in $MTM'P$\;
    Measure $MTM'$ in the computational basis with outcome $yz$\;
    $(R, \mathbf{y}) \leftarrow \MA_2(y, s)$\;
    Construct $E = \sum\limits_{i,\mathbf{j}}R(i,\mathbf{j})\kb{i\mathbf{j}}{i\mathbf{j}}$\;
    Prepare $\kb{\mathbf{y}}{\mathbf{y}}$ in $\mathbf{C}$\;
    \For{$i = 1,\dots, |\mathbf{C}|$}{
        Measure $\{\kb{y}{y}, \I - \kb{y}{y}\}$ on $C_i$ with outcome $b$\;
        \If{$b = y$}{Output $0$}
    }
    $\mathbf{M} \leftarrow \Dec_{sk}(\mathbf{C})$\;
    $\{E, \I - E\}$ on $M'\mathbf{M}$ with outcome $e$\;
    Output $e$
    \end{algorithm}
    }
    Here Steps 3,5,6,7 and 8 together simply execute $x\leftarrow M;y\leftarrow\Enc_{k;r}(x)$. Furthermore, if $y\in\mathbf{y}$ then some $C_i$ contains $\kb{y}{y}$, which will guarantee the output to be $y$ in Step 13. Conversely if $y\not\in\mathbf{y}$, then all $C_i$ contain some state orthogonal to $\kb{y}{y}$ and thus Step 13 has $0$ probability of outputting $y$ in this case, thus Step 13 effectively implements the $y\not\in\mathbf{y}$ check. Lastly, note that $E$ is a projective measurement which projects onto the space spanned by all $\ket{i\mathbf{j}}$ such that $R(i,\mathbf{j})$, which means that Step 17 outputs $1$ iff $R(x, \mathbf{x})$, where $x$ is stored in $M'$ and $\mathbf{x}$ in $\mathbf{M}$. We conclude that $\QCNMreal_{PQ}(\Pi, \MB, n)$ produces the same random variable as $\CNMreal(\Pi, \MA, n)$. By similar reasoning the same is true for the $\axname{Ideal}$ case, with the additional observation that preparing $U\ket{0}$ in $\Tilde{M}\Tilde{M}'\Tilde{P}$ and measuring $\Tilde{M}$ in the computational basis with result $\Tilde{x}$ is equivalent to $\Tilde{x} \leftarrow M$ and collapses $\Tilde{M}'$ to $\Tilde{x}$. 
    It follows that $\Pi$ is $\CNM_{PQ}$.
    
    For the $\Leftarrow$ direction, let $\Pi$ be an arbitrary PKQES\ fulfilling $\CNM_{PQ}$ and let $\MA = (\MA_1, \MA_2)$ be an arbitrary classical adversary on this scheme intended to perform the $\QCNM_{PQ}$ experiments. Define $\MB = (\MB_1, \MB_2)$ as follows:\\
    {
    \setlength{\interspacetitleruled}{0pt}
    \setlength{\algotitleheightrule}{0pt}
    $\MB_1(pk)$:\\
    \begin{algorithm}[H]
    \DontPrintSemicolon
    \SetKwInOut{Input}{Input}\SetKwInOut{Output}{Output}
    \SetKwBlock{Sub}{}{}
    $(U, \kb{s}{s}^{\ch{S}}) \leftarrow \MA_1(pk)$\;
    Prepare $U\ket{0}$ twice, in $M_0R_0P_0$ and $M_1R_1P_1$\;
	Measure $M_0R_0M_1R_1$ in the computational basis with outcome $m_0z_0m_1z_1$\;
	Construct $M$ to be the uniform distribution over $\{m_0, m_1\}$\;
    Output $(M,sm_0z_0m_1z_1)$
    \end{algorithm}
    }
    {
    \setlength{\interspacetitleruled}{0pt}
    \setlength{\algotitleheightrule}{0pt}
    \noindent$\MB_2(s', y)$:\\
    \begin{algorithm}[H]
    \DontPrintSemicolon
    \SetKwInOut{Input}{Input}\SetKwInOut{Output}{Output}
    \SetKwBlock{Sub}{}{}
	Parse $s'$ as $sm_0z_0m_1z_1$\;
    $(E, \mathbf{y}) \leftarrow \MA_2(\kb{s}{s}^{\ch{S}}, \kb{y}{y}^{\ch{MT}})$\;
    Construct $R(x, \mathbf{x})$ to be
    \Sub{
		Find $i$ such that $m_i=x$\;
        \textbf{prepare} $\kb{z_i\mathbf{x}}{z_i\mathbf{x}}$ in $R'\mathbf{M}$\;
        \textbf{measure} $\{E, \I - E\}$ on $R'\mathbf{M}$, \textbf{output} $1$ iff the outcome is $E$
    }
    Output $(R, \kb{\mathbf{y}}{\mathbf{y}})$
    \end{algorithm}
    }
    
    Observe that the definition of $\CNMreal_{PQ}(\Pi, \MB, n)$, after some simplification, yields\\
    {
    \setlength{\interspacetitleruled}{0pt}
    \setlength{\algotitleheightrule}{0pt}
    \begin{algorithm}[H]
    \DontPrintSemicolon
    \SetKwInOut{Input}{Input}\SetKwInOut{Output}{Output}
    $k = (pk, sk) \leftarrow \KeyGen(1^n)$\;
    $(U, \kb{s}{s}^{\ch{S}}) \leftarrow \MA_1(pk)$\;
    Prepare $U\ket{0}$ in $M_0R_0P_0$\;
    Prepare $U\ket{0}$ in $M_1R_1P_1$\;
    Measure $M_0R_0M_1R_1$ in the computational basis with outcome $m_0z_0m_1z_1$\;
    Pick $i \leftarrow \{0,1\}$\;
    $y \leftarrow \Enc_{pk}(m_i)$\;
    $(E, \kb{\mathbf{y}}{\mathbf{y}}) \leftarrow \MA_2(\kb{s}{s}^{\ch{S}}, \kb{y}{y}^{\ch{MT}})$\;
    $\mathbf{x} \leftarrow \Dec_{sk}(\mathbf{y})$\;
    \If{$y\in\mathbf{y}$}{Output $0$}
    Find $j$ such that $m_j = m_i$\;
    Prepare $\kb{z_j\mathbf{x}}{z_j\mathbf{x}}$ in $R\mathbf{M}$\;
    $\{E, \I - E\}$ on $R\mathbf{M}$ with outcome $e$\;
    Output $e$
    \end{algorithm}
    }
    
    Similarly, the $\CNMideal_{PQ}(\Pi, \MB, n)$ yields the same Experiment except with line 12 replaced with ``Pick $j \leftarrow \{0,1\}$''. Note that Step 7, the encrypting, is not performed by $U_{prep}$ but simply by $\Enc$ and that Step 10 simply checks $y\in\mathbf{y}$ instead of loop that we earlier argued to be equivalent. Additionally the measurement in Step 3 and 4 are equivalent to measuring the ciphertext after encryption (as is done in \QCNM), because it is assumed that encryption, and thus $V_k$, maps classical states to classical states.

    Note that w.l.o.g. we can assume that $m_0 \neq m_1$, since if this is not the case then the Real and Ideal case are equivalent and thus the adversary has no hope of winning. This makes that the $\CNMreal_{PQ}(\Pi, \MB, n)$ and $\QCNMreal_{PQ}(\Pi, \MA, n)$ are equivalent given the observations in the previous paragraph. Furthermore, when $i=j$ in the $\CNMideal$ case then it is equivalent to the $\CNMreal$ case. When $i\neq j$, the $\CNMideal_{PQ}(\Pi, \MB, n)$ and $\QCNMideal_{PQ}(\Pi, \MA, n)$ experiments are equivalent. Thus the advantage of $\MB$ in $\CNM$ is half the advantage of $\MA$ in $\QCNM$, which implies that $\Pi$ is $\QCNM_{PQ}$.
\end{proof}

Note that we argued earlier that, for any \PKES, being $\CNM_{PQ}$ trivially implies being $\CNM$, thus we derive the following corollary.

\begin{corollary}
Any $\QCNM_{PQ}$ \PKES\ is \CNM.
\end{corollary}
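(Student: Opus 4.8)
The plan is simply to chain two implications that are already in hand. First I would apply the forward direction of Theorem~\ref{thm:QCNMequivPQ}, which establishes the equivalence of $\QCNM_{PQ}$ and $\CNM_{PQ}$; in particular, any scheme that is $\QCNM_{PQ}$ is also $\CNM_{PQ}$.

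Next I would invoke the observation recorded just before Theorem~\ref{thm:QCNMequivPQ}: the definitions of $\CNM_{PQ}$ and \CNM\ differ only in the class of adversaries (and of algorithms produced by the adversary) over which they quantify, with $\CNM_{PQ}$ ranging over \QPT\ adversaries and \CNM\ over PPT ones. Since every PPT adversary is in particular a \QPT\ adversary, the class of \CNM\ adversaries is a subset of the class of $\CNM_{PQ}$ adversaries, so a scheme that resists all $\CNM_{PQ}$ attacks a fortiori resists all \CNM\ attacks; that is, $\CNM_{PQ} \Rightarrow \CNM$. Composing the two implications gives $\QCNM_{PQ} \Rightarrow \CNM_{PQ} \Rightarrow \CNM$, which is exactly the claim. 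There is no genuine technical obstacle here, as both ingredients are already established; the only point requiring a moment's care is the direction of the adversary-class inclusion, namely that security against the \emph{larger} class is what implies security against the smaller one, and not the reverse.
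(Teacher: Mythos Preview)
Your proposal is correct and follows exactly the paper's own argument: the corollary is derived immediately from Theorem~\ref{thm:QCNMequivPQ} together with the earlier observation that $\CNM_{PQ}$ trivially implies \CNM\ because PPT adversaries form a subset of \QPT\ adversaries. Your remark about the direction of the adversary-class inclusion is precisely the one point the paper highlights as well.
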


\subsection{A \QCNM Secure Scheme}

In this section we show how \QCNM-security can be achieved using a quantum-classical hybrid construction like the ones used in \cite{alagic2018unforgeable,SignCrypt}.
The idea is similar to the classical technique of hybrid encryption. We construct a quantum-non-malleable \PKQES by encrypting each plaintext with a quantum one-time non-malleable scheme and encrypting the key using a classical non-malleable \PKES. We begin by defining the general quantum-classical hybrid construction.

\begin{construction}
	Let $\Pi^\Qu=(\KeyGen^\Qu,\Enc^\Qu, \Dec^\Qu)$ be a \SKQES and $\Pi^\Cl=(\KeyGen^\Cl,\Enc^\Cl, \Dec^\Cl)$ a \PKES. We define the hybrid scheme $\Pi^\Hyb[\Pi^\Qu,\Pi^\Cl]=(\KeyGen^\Hyb,\Enc^\Hyb \Dec^\Hyb)$ as follows. We set $\KeyGen^\Hyb=\KeyGen^\Cl$. The encryption algorithm $\Enc^\Hyb_\pk$, on input $X$,
	\begin{enumerate}
		\item generates a key $k\leftarrow\KeyGen^\Qu(1^{n(\pk)})$, and
		\item outputs the pair $(\Enc^\Qu_k(X), \Enc^\Cl_\pk(k))$.
	\end{enumerate}
Decryption is done in the obvious way, by first decrypting the second part of the ciphertext using $\Dec^\Cl$ to obtain the one-time key $k'$, and then decrypting the first part using $\Dec^\Qu_{k'}$.
\end{construction}

We continue by proving that if $\Pi^\Qu$ is unitary and secure according to \NM, \CiNM or \PNM (they are all equivalent for unitary \SKQES according to Theorem  \ref{thm:unitaryissimple}), and $\Pi^\Cl$ to be \CNM, then $\Pi^\Hyb[\Pi^\Qu,\Pi^\Cl]$ is \QCNM.
\begin{theorem}\label{thm:security}
	Let $\Pi^\Qu=(\KeyGen^\Qu,\Enc^\Qu, \Dec^\Qu)$ be a \NM secure  \SKQES with unitary encryption and decryption map, and $\Pi^\Cl=(\KeyGen^\Cl,\Enc^\Cl, \Dec^\Cl)$ a postquantum-\CNM secure \PKES. Then $\Pi^\Hyb[\Pi^\Qu,\Pi^\Cl]$ is \QCNM.
\end{theorem}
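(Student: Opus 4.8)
The plan is to prove \QCNM security of $\Pi^\Hyb$ by reducing to the two underlying primitives: the post-quantum \CNM security of $\Pi^\Cl$ and the \NM security of $\Pi^\Qu$ (equivalently \CiNM or \PNM, by Theorem~\ref{thm:unitaryissimple}). Concretely, I would assume an adversary $\MA=(\MA_1,\MA_2)$ achieving advantage $\delta$ in distinguishing \QCNMreal from \QCNMideal on $\Pi^\Hyb$, and then build a sequence of hybrid experiments bridging the real and ideal worlds, bounding each consecutive gap by the security of one of the two primitives.

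First I would exploit the structure of the hybrid ciphertext, which is a pair $(c^\Qu,c^\Cl)$ with $c^\Cl=\Enc^\Cl_\pk(k)$ classical. Since honest classical decryption measures $c^\Cl$ in the computational basis, one may without loss of generality measure the classical parts of all output ciphertexts $C_1,\dots,C_m$ up front, splitting them into Type~A (classical part equal to the original $c^\Cl$, hence decrypting to the genuine one-time key $k$) and Type~B (classical part different, decrypting to some $k_i'\neq k$). This split is compatible with the freshness check: since $U_{prep}$, built from $V_k$ and the message-preparation unitary $U$ via Theorem~\ref{thm:charac}, also encodes the classical component, running $U_{prep}^\dagger$ fails to return $\ket{0}$ whenever the classical part was altered, so every Type~B ciphertext automatically passes the check while a Type~A ciphertext passes exactly when its quantum part was modified.

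Next I would handle the two types separately. For Type~B, I would construct a post-quantum \CNM adversary $\MB$ against $\Pi^\Cl$ that treats the one-time key $k$ as the classical message and embeds the quantum experiment as side information; using the post-quantum \CNM guarantee, the keys $k_i'$ recovered from altered classical parts become effectively independent of $k$, and combined with the one-time confidentiality of $\Pi^\Qu$ and the unitary $2$-design property of $\{V_k\}$ (Corollary~\ref{thm:DNSfromPNM}), decrypting any quantum part under such an independent key yields a fixed state close to $\Dec^\Qu_K(\tau)$, decorrelated from the original record $R$. For Type~A, the attack reduces to a purely quantum one-time malleability attack with the fixed key $k$: by \NM security (Definition~\ref{def:nm}) the effective map on the quantum plaintext is, up to negligible error, the convex combination $\id\otimes\Lambda_1$ plus the constant-like branch proportional to $(|C|^2\lr{\Dec_K(\tau)}-\id)\otimes\Lambda_2$, and the freshness check removes precisely the identity branch, so the surviving plaintexts are again decorrelated from $R$.

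Having shown that after both reductions every decrypted plaintext in $\mathbf{M}$ is negligibly close to independent of the original record $R$, I would conclude by observing that replacing $R$ with the freshly prepared independent record $\Tilde{R}$---the only difference between \QCNMreal and \QCNMideal---cannot change the distribution of the relation outcome $e$ by more than a negligible amount, yielding $\delta\leq\negl(n)$. The main obstacle I anticipate is the Type~B classical reduction: because the full \QCNM adversary is quantum it may entangle the classical ciphertext register with the quantum part and place it in superposition, so one must argue carefully that measuring the classical parts and invoking the \emph{post-quantum} \CNM guarantee is sound, and that $\MB$ can prepare the embedded quantum ciphertext without knowing the key $k$ it is asking the challenger to encrypt---most cleanly resolved by first replacing that quantum ciphertext with a key-independent fixed state via one-time confidentiality, a step that must be shown not to disturb the Type~A analysis.
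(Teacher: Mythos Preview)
Your high-level plan---measure the classical parts, split the output ciphertexts into Type~A ($c_i^\Cl=c^\Cl$) and Type~B ($c_i^\Cl\neq c^\Cl$), and handle them via $\Pi^\Qu$ and $\Pi^\Cl$ respectively---matches the paper's strategy. The paper, however, executes it through a different and cleaner mechanism that avoids a real gap in your Type~B reduction.

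The problem is your direct \CNM reduction. In the \CNM (and post-quantum \CNM) game the challenger \emph{decrypts the adversary's output ciphertexts internally and only returns a bit}; the adversary never sees the decrypted values $\mathbf{k'}=(k_1',\dots,k_m')$. But to finish simulating \QCNM you need those $k_i'$ explicitly, to run $\Dec^\Qu_{k_i'}$ on the quantum parts and then evaluate the POVM $E$ against the record register $R$. Trying to pack all of this into the relation $R(\cdot,\cdot)$ does not work cleanly either: in \CNMideal the first argument is an \emph{independent} $\tilde k$, so the relation would decrypt the Type~A quantum parts with the wrong key, and the two worlds no longer line up with \QCNMreal versus \QCNMideal. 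The paper sidesteps this by invoking the classical equivalence $\CNM\Leftrightarrow\INDparCCAA$ from \cite{bellare1999non}. The \INDparCCAA challenger \emph{does} hand back decryptions of all submitted ciphertexts different from the challenge, so the reduction can run the rest of the \QCNM experiment faithfully. Concretely, the paper defines ``spoofed classical key'' hybrids \sckQCNMreal and \sckQCNMideal in which the classical part encrypts a fresh independent $k'$ while the quantum part is still encrypted under the real $k$ (stored by the challenger). Two \INDparCCAA reductions show $\QCNMreal\approx\sckQCNMreal$ and $\QCNMideal\approx\sckQCNMideal$; this is exactly the step that buys what your ``$k_i'$ become effectively independent of $k$'' was meant to buy, but with an actual decryption oracle available.

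A second, smaller gap concerns Type~A. You invoke \NM for a single attack map, but there can be \emph{several} Type~A registers, i.e.\ one ciphertext is mapped to many and all are decrypted under the same one-time key. The paper handles this via Lemma~\ref{lem:CiNM-multidecrypt}, which characterizes the multi-output effective map of a \CiNM scheme and shows that the identity branch can land in at most one output slot---exactly the slot the freshness test then kills. Your sketch does not address this multi-decryption structure. Finally, the paper does not need the $2$-design Corollary~\ref{thm:DNSfromPNM} for Type~B; once the spoofed-key hybrid is in place, independence of the record register from $\mathbf M$ in the modified-classical-part case follows directly from the (one-time) \IND security implied by \NM.
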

\begin{proof} 
	We begin by defining modified versions of the two experiments used in defining \QCNM, \sckQCNMreal and \sckQCNMideal (for \textbf spoofed \textbf classical \textbf key). These two experiments are defined exactly as the experiments \QCNMreal and \QCNMideal, except for the following modifications:
	\begin{enumerate}
		\item When creating the ciphertext register $C$ that is handed to the adversary, its classical part $c$ is produced by encrypting a fresh, independently sampled one-time key $k'\from\KeyGen^\Qu$. The pair $(c, k)$ is stored ($k$ being the key used for encryption with $\Enc^\Qu$.)
		\item The test whether the ciphertext was modified by the adversary is done by first checking whether the classical part $c'$ is equal to $c$. 
		If it is not, the ciphertext was modified and no further test of the quantum part is necessary. If $c'=c$, the modification check from the games \QCNMreal and \QCNMideal is applied, using the stored one-time key $k$. Note that this is equivalent to the check mandated for the \QCNM experiments.
		\item Before decrypting any ciphertext, the challenger checks whether its classical part is equal to $c$. If not, he proceeds with decryption, otherwise, he just decrypts the quantum ciphertext with $\Dec^\Qu_k$.
	\end{enumerate}
	
	Let \adver be a \QCNM-adversary against $\Pi^\Hyb$. Recall that it was proven in \cite{bellare1999non} that \CNM is equivalent to \INDparCCAA, indistinguishability under parallel chosen ciphertext attacks. In this attack model, after receiving the challenge ciphertext, the adversary is allowed to submit one tuple of ciphertexts that is decrypted in case none of them is equal to the challenge ciphertext. Define the following \INDparCCAA adversary $\adver'$ against $\Pi^\Cl$. $\adver'$ simulates the $\QCNMreal(\Pi^\Hyb,\adver,n)$-experiment. When the \QCNMreal challenger is supposed to encrypt a plaintext to be sent to \adver, $\adver'$ sends $m_0=k$ and $m_1=k'$ as challenge plaintexts to the \INDparCCAA challenger, where $k,k'\from\KeyGen^{\Qu}$, and $k$ is used to encrypt the quantum plaintext. After storing a copy of the resulting classical ciphertext $c$ and the one-time key $k$, $\adver'$ continues to simulate $\QCNMreal(\Pi^\Hyb,\adver,n)$ but using the mixed quantum-classical modification check from the spoofed classical key experiments defined above. Decryption is done using the \parCCAA oracle, except for the ciphertexts with classical part $c$, which are just decrypted using the stored one-time key $k$. Now $\adver'$ outputs the result of the simulated experiment $\QCNMreal(\Pi^\Hyb,\adver,n)$.
	
	Now observe that if the \INDparCCAA challenger's bit comes up $b=0$, $\adver'$ faithfully simulated the experiment $\QCNMreal(\Pi^\Hyb,\adver,n)$, while the case $b=1$ results in a simulation of $\sckQCNMreal(\Pi^\Hyb,\adver,n)$. Therefore, the \INDparCCAA security of $\Pi^\Cl$ implies that the games \QCNMreal and \sckQCNMreal have the same result, up to negligible difference.
	
	We can also define an \INDparCCAA adversary $\adver''$ against $\Pi^\Cl$ in the same way as $\adver'$, but this time using the \QCNMideal experiments. This implies analogously that the experiments \QCNMideal and \sckQCNMideal also have the same result, up to negligible difference.
	
	Finally, what is left to prove is that the experiments \sckQCNMreal and \sckQCNMideal have the same outcome due to the \NM security of $\Pi^\Qu$. If the classical part of the ciphertext has been modified, $\Dec_k$ is never applied. By the fact that the scheme $\Pi^\Qu$ is \IND secure \cite{NonMal}, $\mathbf M$ is independent of (i.e. in a product state with) $R$, i.e. $\mathbf MR$ and $\mathbf M\tilde R$ have the same state. Therefore, \sckQCNMreal and \sckQCNMideal have the same outcome. 
	For the remaining case of $c'=c$, note that the modification test in lines 8 through 13 of Experiments \ref{exp:QCNMr} and \ref{exp:QCNMi} are identical, and that the application of $\left(V^{\ch{M}}\right)^\dagger$ ($T$ is trivial for unitary encryption) is equal to decryption. We can hence decrypt all ciphertexts before the modification test in the experiments \sckQCNMreal and \sckQCNMideal (line 9 in experiments \QCNMreal and \QCNMideal), and replace $U_{prep}$ by $U$. It follows that the rest of the experiment after decryption does not depend on the one-time key $k$ anymore. Hence the experiment has the form of a multi-decryption attack on the scheme $\Pi^\Qu$, i.e. where one ciphertext (the one that $\mathcal A_2$ receives as input) is mapped to many ciphertexts (the ones in $\mathbf C$) and are subsequently decrypted. We can therefore apply Lemma \ref{lem:CiNM-multidecrypt} to conclude that the modification test outputs $0$ unless $\mathbf M$ is in product with $R$, in which case $\mathbf MR$ and $\mathbf M\tilde R$ have the same state. \sckQCNMreal and \sckQCNMideal therefore have the same outcome. 
\end{proof}


\section{Open Questions}
After providing the first definition of non-malleability for quantum public-key encryption and showing how to fulfill it, and providing a comprehensive taxonomy of one-time security notions in the symmetric-key case, our work leaves a number of interesting open questions.

First, one might wonder what other connections \PNM and \CiNM have to other established security notions, such as the suggestion made in \cite{NonMal} that \NM, \CiNM or \PNM might be used to construct a totally authenticating scheme as defined in \cite{garg2017new}.

Second, many interesting problems remain in the computational setting. While our proposed definition of \QCNM provides a natural extension of \CNM to the quantum setting, a number of alternative but equivalent definitions of classical non-malleability exist, such as simulation-based non-malleability as defined in \cite{bellare1999non}. Besides the natural question whether \QCNM truly captures non-malleability, one might want to consider quantum versions of other classical notions of non-malleability and the relations between them. Furthermore, a symmetric-key version of \QCNM could be explored, which we suspect to be distinct from a computational version of \CiNM due to the mismatch of the way side information is handled (the $S$ register in \QCNM and the $B$ register in \CiNM).

\subsection*{Acknowledgements}
The authors thank the anonymous referees for the encouragement to attempt a better informal explanation of our results. CM would like to thank Gorjan Alagic for 
stimulating discussions. CM and CS were funded by a NWO VIDI grant (Project No. 639.022.519). CM was supported by a NWO VENI grant (Project No. VI.Veni.192.159).
JvW acknowledges the support from the Luxembourg National Research Fund via the CORE project Q-CoDe (Project No. 11689058).


\addcontentsline{toc}{section}{References}
\emergencystretch=1em
\printbibliography
\emergencystretch=0pt
\appendix
\section{Proofs}
\begin{lemma}[Lemma \ref{lem:CtoM}]
\label{lem:prf_CtoM}
Let $\Pi = (\KeyGen, \Enc, \Dec)$ be an arbitrary\\ \SKQES\ and $\Lambda_A^{\ch[CB]{C\hat{B}}}$ an arbitrary attack on $\Pi$ with effective map $\Tilde{\Lambda}_A^{\ch[MB]{M\hat{B}}}$. If there exist \CPTNI\ $\Lambda_1, \Lambda_2$, such that $\Lambda_1 + \Lambda_2$ is \CPTP and it holds that
\[ \dn{\Tilde{\Lambda}_A - \left(\id^{\ch{M}} \otimes \Lambda_1^{\ch[B]{\hat{B}}} + \frac{1}{|C|^2 - 1}\left(|C|^2\lr{\Dec_K(\tau^{\ch{C}})} - \id\right)^{\ch{M}} \otimes \Lambda_2^{\ch[B]{\hat{B}}}\right)} \leq \eps,\]
then for any $\alpha$ such that $|M|^2 \leq \alpha \leq |C|^2$ there exist \CPTNI\ $\Lambda_3, \Lambda_4$ such that $\Lambda_3 + \Lambda_4$ is \CPTP\ and
\[ \dn{\Tilde{\Lambda}_A - \left(\id^{\ch{M}} \otimes \Lambda_3^{\ch[B]{\hat{B}}} + \frac{1}{\alpha - 1}\left(\alpha\lr{\Dec_K(\tau^{\ch{C}})} - \id\right)^{\ch{M}} \otimes \Lambda_4^{\ch[B]{\hat{B}}}\right)} \leq \eps.\]
\end{lemma}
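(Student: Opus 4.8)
The plan is to observe that the two maps against which $\tilde\Lambda_A$ is compared are in fact \emph{identical} as linear maps, so that the diamond-norm quantity is literally unchanged and the bound transfers with no loss. Concretely, I would take the $\Lambda_3,\Lambda_4$ suggested in the statement, namely $\Lambda_3 = \Lambda_1 + \left(1 - \tfrac{(\alpha-1)|C|^2}{\alpha(|C|^2-1)}\right)\Lambda_2$ and $\Lambda_4 = \tfrac{(\alpha-1)|C|^2}{\alpha(|C|^2-1)}\Lambda_2$, and verify the identity
\[
\id^{\ch{M}} \otimes \Lambda_1 + \tfrac{1}{|C|^2-1}\left(|C|^2\lr{\Dec_K(\tau)} - \id\right)^{\ch{M}} \otimes \Lambda_2 = \id^{\ch{M}} \otimes \Lambda_3 + \tfrac{1}{\alpha-1}\left(\alpha\lr{\Dec_K(\tau)} - \id\right)^{\ch{M}} \otimes \Lambda_4.
\]
Writing $c = \tfrac{(\alpha-1)|C|^2}{\alpha(|C|^2-1)}$, this is a matter of collecting the three kinds of terms: the $\id\otimes\Lambda_1$ term is untouched; the coefficient of $\lr{\Dec_K(\tau)}\otimes\Lambda_2$ on the right is $\tfrac{\alpha c}{\alpha-1} = \tfrac{|C|^2}{|C|^2-1}$, matching the left; and the coefficient of $\id\otimes\Lambda_2$ on the right is $(1-c) - \tfrac{c}{\alpha-1} = 1 - \tfrac{\alpha c}{\alpha-1} = -\tfrac{1}{|C|^2-1}$, again matching. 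Once this identity holds, the two diamond norms coincide exactly and the conclusion $\leq\eps$ is immediate.

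The remaining work is to check that $\Lambda_3,\Lambda_4$ are legitimate, i.e.\ \CPTNI\ with $\Lambda_3+\Lambda_4$ \CPTP. The last point is free, since $\Lambda_3 + \Lambda_4 = \Lambda_1 + \Lambda_2$, which is \CPTP\ by hypothesis. For complete positivity I would note that $c>0$ (as $\alpha>1$, using $\alpha\geq|M|^2$) and, crucially, that $c\leq1$ \emph{exactly} when $\alpha\leq|C|^2$; hence $\Lambda_4 = c\Lambda_2$ and $\Lambda_3 = \Lambda_1 + (1-c)\Lambda_2$ are both nonnegative combinations of the CP maps $\Lambda_1,\Lambda_2$ and so are CP. Trace non-increase of $\Lambda_4$ follows because it is a multiple by $c\leq1$ of the \CPTNI\ map $\Lambda_2$, and trace non-increase of $\Lambda_3$ follows from $\Trr{\Lambda_3(\rho)} = \Trr{(\Lambda_1+\Lambda_2)(\rho)} - \Trr{\Lambda_4(\rho)} \leq \Trr{\rho}$, using that $\Lambda_1+\Lambda_2$ is trace-preserving and $\Lambda_4$ is CP and hence has nonnegative trace.

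There is no real conceptual obstacle here — the lemma is essentially a reparametrization of one and the same simulator map — so the only place demanding care is the coefficient bookkeeping. In particular, the single inequality on which the construction hinges is $c\leq1\Leftrightarrow\alpha\leq|C|^2$, which is precisely where the hypothesis $\alpha\leq|C|^2$ is consumed; the lower bound $\alpha\geq|M|^2$ is not actually needed for validity of the maps and merely reflects the range of interest, interpolating between the \CiNM\ constant $|C|^2$ and the \PNM\ constant $|M|^2$. I would therefore organize the full proof as: (i) state the definitions of $\Lambda_3,\Lambda_4$; (ii) prove the map identity by term collection; (iii) verify the \CPTNI\ and \CPTP\ conditions via the bounds on $c$; and conclude.
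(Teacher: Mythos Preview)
Your proposal is correct and matches the paper's own proof essentially line for line: the paper defines the same $\gamma=c$, the same $\Lambda_3,\Lambda_4$, verifies the same map identity by expansion, and checks \CPTNI/\CPTP via $0<\gamma\le 1$ for $1<\alpha\le|C|^2$. Your additional remark that the lower bound $\alpha\ge|M|^2$ is not actually used is also consistent with the paper's argument, which only invokes $\alpha>1$.
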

\begin{proof}
Assume that for some \CPTNI\ $\Lambda_1, \Lambda_2$ such that $\Lambda_1 + \Lambda_2$ is \CPTP\ it holds that
\[ \dn{\Tilde{\Lambda}_A - \left(\id^{\ch{M}} \otimes \Lambda_1^{\ch[B]{\hat{B}}} + \frac{1}{|C|^2 - 1}\left(|C|^2\lr{\Dec_K(\tau^{\ch{C}})} - \id\right)^{\ch{M}} \otimes \Lambda_2^{\ch[B]{\hat{B}}}\right)} \leq \eps.\]

Define $\gamma = \frac{(\alpha - 1)|C|^2}{\alpha(|C|^2-1)}$, $\Lambda_3 = \Lambda_1 + (1 - \gamma)\Lambda_2$, and $\Lambda_4 = \gamma\Lambda_2$. Note that $0 < \gamma \leq 1$ as long as $1 < \alpha \leq |C|^2$ and thus $\Lambda_3$ and $\Lambda_4$ are \CPTNI. Furthermore $\Lambda_3 + \Lambda_4 = \Lambda_1 + \Lambda_2$, thus $\Lambda_3 + \Lambda_4$ is \CPTP. Observe that

\begin{align*}
    \id^{\ch{M}} &\otimes \Lambda_3^{\ch[B]{\hat{B}}} + \frac{1}{\alpha - 1}\left(\alpha\lr{\Dec_K(\tau^{\ch{C}})} - \id\right)^{\ch{M}} \otimes \Lambda_4^{\ch[B]{\hat{B}}}\\
    &= \id^{\ch{M}} \otimes \left(\Lambda_1 + \left(1 - \gamma\right)\Lambda_2\right) + \frac{1}{\alpha-1}(\alpha\lr{\Dec_K(\tau^{\ch{C}})} - \id)^{\ch{M}} \otimes \gamma\Lambda_2\\
    &= \id^{\ch{M}} \otimes \Lambda_1 + (1-\gamma)\id^{\ch{M}} \otimes \Lambda_2 + \frac{\gamma}{\alpha - 1}(\alpha\lr{\Dec_K(\tau^{\ch{C}})} - \id)^{\ch{M}} \otimes \Lambda_2\\
    &= \id^{\ch{M}} \otimes \Lambda_1 + \frac{|C|^2}{|C|^2-1}\langle \Dec_K(\tau)\rangle\otimes \Lambda_2 - \frac{1}{|C|^2-1}\id^{\ch{M}} \otimes \Lambda_2\\
    &= \id^{\ch{M}} \otimes \Lambda_1 + \frac{1}{|C|^2-1}(|C|^2\lr{\Dec_K(\tau^{\ch{C}})} - \id)^{\ch{M}} \otimes \Lambda_2.
\end{align*}

From this it follows that
\[ \dn{\Tilde{\Lambda}_A - \left(\id^{\ch{M}} \otimes \Lambda_3^{\ch[B]{\hat{B}}} + \frac{1}{\alpha - 1}\left(\alpha\lr{\Dec_K(\tau^{\ch{C}})} - \id\right)^{\ch{M}} \otimes \Lambda_4^{\ch[B]{\hat{B}}}\right)} \leq \eps.\]
\end{proof}

\begin{theorem}[Theorem \ref{thm:pnm_charac}]
\label{thm:prf_pnm_charac}
Let $\Pi = (\KeyGen, \Enc, \Dec)$ be an arbitrary $\eps$-\PNM\ \SKQES\ for some $\eps$, then for any attack $\Lambda_A^{\ch[CB]{C\hat{B}}}$, its effective map $\Tilde{\Lambda}_A^{\ch[MB]{M\hat{B}}}$ is such that
\[ \dn{\Tilde{\Lambda}_A - \left(\id^{\ch{M}} \otimes \Lambda_1^{\ch[B]{\hat{B}}} + \frac{1}{|M|^2 - 1}\left(|M|^2\lr{\Dec_K(\tau^{\ch{C}})} - \id\right)^{\ch{M}} \otimes \Lambda_2^{\ch[B]{\hat{B}}}\right)} \leq 3\eps,\]
where
\begin{align*}
    \Lambda_1 &= \Trr[MM']{\phi^{+\ch{MM'}}\Tilde{\Lambda}_A(\phi^{+\ch{MM'}} \otimes (\cdot))}&and\\
    \Lambda_2 &= \Trr[MM']{(\I^{\ch{MM'}} - \phi^{+\ch{MM'}})\Tilde{\Lambda}_A(\phi^{+\ch{MM'}} \otimes (\cdot))}.
\end{align*}
\end{theorem}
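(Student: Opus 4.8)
The plan is to bootstrap the bare existence statement provided by $\eps$-\PNM\ into the explicit, Choi-type formulas for $\Lambda_1$ and $\Lambda_2$ in the theorem, at the cost of a factor $3$ in the error. Since $\Pi$ is $\eps$-\PNM, Definition~\ref{def:pnm} hands us \emph{some} \CPTNI\ maps, which I would rename $\Lambda_1',\Lambda_2'$ (with $\Lambda_1'+\Lambda_2'$ \CPTP), such that the ideal map
\[ \Tilde{\Lambda}_{ideal}^{\ch[MB]{M\hat{B}}} = \id^{\ch{M}} \otimes \Lambda_1' + \frac{1}{|M|^2-1}\left(|M|^2\lr{\Dec_K(\tau^{\ch{C}})} - \id\right)^{\ch{M}} \otimes \Lambda_2' \]
satisfies $\dn{\Tilde{\Lambda}_A - \Tilde{\Lambda}_{ideal}} \leq \eps$. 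Writing $\Tilde{\Lambda}_{trace}$ for the \emph{same} template but with the explicit extracted maps $\Lambda_1,\Lambda_2$ from the statement plugged in, the goal is to bound $\dn{\Tilde{\Lambda}_A - \Tilde{\Lambda}_{trace}}$, which I would split by the triangle inequality as $\dn{\Tilde{\Lambda}_A - \Tilde{\Lambda}_{ideal}} + \dn{\Tilde{\Lambda}_{ideal} - \Tilde{\Lambda}_{trace}}$, the first term already being $\leq\eps$. So everything reduces to showing $\dn{\Tilde{\Lambda}_{ideal} - \Tilde{\Lambda}_{trace}} \leq 2\eps$.

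Since both maps share the template, their difference is $\id^{\ch{M}}\otimes(\Lambda_1'-\Lambda_1) + \frac{1}{|M|^2-1}(|M|^2\lr{\Dec_K(\tau^{\ch{C}})}-\id)^{\ch{M}}\otimes(\Lambda_2'-\Lambda_2)$, so by a routine termwise triangle bound — using stability of the diamond norm under tensoring the identity onto the $M$-factor, and that the inverse-depolarizing coefficient map is trace preserving and diamond-contractive — it suffices to bound $\dn{\Lambda_1'-\Lambda_1}$ and $\dn{\Lambda_2'-\Lambda_2}$. To do this I would introduce two auxiliary maps $\Lambda_1'',\Lambda_2''$ obtained by applying the very same $\phi^{+\ch{MM'}}$- and $(\I-\phi^{+\ch{MM'}})$-extractions to $\Tilde{\Lambda}_{ideal}$ instead of to $\Tilde{\Lambda}_A$, and insert them via the triangle inequality. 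The cross terms then split in two: first, $\dn{\Lambda_1-\Lambda_1''}\le\eps$ and $\dn{\Lambda_2-\Lambda_2''}\le\eps$, because the extraction operation (prepend $\phi^{+\ch{MM'}}$, apply the map, conjugate by $\phi^{+}$ resp.\ $\I-\phi^+$ and trace out $MM'$) is a composition of CP trace-non-increasing maps and hence contractive in diamond norm when applied to $\Tilde{\Lambda}_A-\Tilde{\Lambda}_{ideal}$; second, the fixed-point identities $\Lambda_1''=\Lambda_1'$ and $\Lambda_2''=\Lambda_2'$.

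The heart of the argument — and the step I expect to be the main obstacle — is verifying these fixed-point identities, i.e.\ that $\Tilde{\Lambda}_{ideal}$ is invariant under the extraction. For $\Lambda_1''$, the term $\id^{\ch{M}}\otimes\Lambda_1'$ leaves $\phi^{+\ch{MM'}}$ invariant so that pairing against $\phi^+$ returns exactly $\Lambda_1'$, while the depolarizing-complement term contributes the coefficient $\frac{1}{|M|^2-1}\bigl(|M|^2\,\Tr[\phi^{+\ch{MM'}}(\Dec_K(\tau^{\ch{C}})\otimes\tau^{\ch{M'}})]-1\bigr)$ times $\Lambda_2'$; this vanishes precisely because $\Tr[\phi^{+\ch{MM'}}(\Dec_K(\tau^{\ch{C}})\otimes\tau^{\ch{M'}})]=\tfrac{1}{|M|^2}$, yielding $\Lambda_1''=\Lambda_1'$. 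The complementary extraction against $\I-\phi^+$ annihilates the identity term and leaves the depolarizing term with coefficient $\frac{1}{|M|^2-1}\bigl(|M|^2(1-\tfrac{1}{|M|^2})\bigr)=1$, giving $\Lambda_2''=\Lambda_2'$. The one nontrivial ingredient here is the overlap value $\tfrac{1}{|M|^2}$, which I would prove separately as the overlap of the maximally entangled state with a product of an arbitrary state on $M$ and the maximally mixed state on $M'$ (this computation is reused elsewhere, e.g.\ in the proof of Theorem~\ref{thm:pnm_not_nm}).

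Assembling the pieces, $\dn{\Lambda_1'-\Lambda_1}\le\dn{\Lambda_1'-\Lambda_1''}+\dn{\Lambda_1''-\Lambda_1}=0+\eps$, and likewise $\dn{\Lambda_2'-\Lambda_2}\le\eps$, whence $\dn{\Tilde{\Lambda}_{ideal}-\Tilde{\Lambda}_{trace}}\le2\eps$ and the outer triangle inequality delivers $\dn{\Tilde{\Lambda}_A-\Tilde{\Lambda}_{trace}}\le3\eps$. The only genuinely delicate points are the diamond-norm contractivity of the extraction (which rests on the CP trace-non-increasing structure of ``prepend $\phi^+$, measure $\phi^+$, trace out'') and the overlap computation feeding the fixed-point identities; the termwise triangle splitting and the identity-tensoring stability are routine.
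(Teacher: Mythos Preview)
Your proposal is correct and follows essentially the same approach as the paper's proof: triangle-splitting through $\Tilde{\Lambda}_{ideal}$, termwise bounding via the shared template, inserting the extractions of $\Tilde{\Lambda}_{ideal}$ (your $\Lambda_i''$, the paper's $\Lambda_5,\Lambda_6$), and verifying the fixed-point identities using the overlap $\Tr[\phi^{+\ch{MM'}}(\Dec_K(\tau)\otimes\tau^{\ch{M'}})]=1/|M|^2$. The only cosmetic difference is that the paper packages the two extractions into a single \CPTP\ measure-and-flag channel to obtain both $\eps$-bounds from one contractivity application, whereas you invoke contractivity for each \CPTNI\ extraction separately; both routes are valid.
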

\begin{proof}
Let $\Pi = (\KeyGen, \Enc, \Dec)$ be an arbitrary $\eps$-\PNM\ \SKQES\ for some $\eps$ and let $\Lambda_A^{\ch[CB]{C\hat{B}}}$ be an arbitrary attack with effective map $\Tilde{\Lambda}_A^{\ch[MB]{M\hat{B}}}$. Furthermore, let $\Lambda_1^{\ch[B]{\hat{B}}}$ and $\Lambda_2^{\ch[B]{\hat{B}}}$ be such that
\[ \dn{\Tilde{\Lambda}_A - \Tilde{\Lambda}_{ideal}} \leq \eps, \]
where $\Tilde{\Lambda}_{ideal}^{\ch[MB]{M\hat{B}}} = \id^{\ch{M}} \otimes \Lambda_1 + \frac{1}{|M|^2-1}(|M|^2\lr{\Dec_K(\tau)} - \id)^{\ch{M}} \otimes \Lambda_2$. Lastly, let
\begin{align*}
   \Lambda_3 &= \Trr[MM']{\phi^{+\ch{MM'}}\Tilde{\Lambda}_A(\phi^{+\ch{MM'}} \otimes (\cdot))}&&\text{,}\\
   \Lambda_4 &= \Trr[MM']{(\I^{\ch{MM'}} - \phi^{+\ch{MM'}})\Tilde{\Lambda}_A(\phi^{+\ch{MM'}} \otimes (\cdot))}&&\text{, and}\\
   \Tilde{\Lambda}_{trace}^{\ch[MB]{M\hat{B}}} &= \id^{\ch{M}} \otimes \Lambda_3 + \frac{1}{|M|^2-1}(|M|^2\lr{\Dec_K(\tau)} - \id)^{\ch{M}} \otimes \Lambda_4&&\text{.}
\end{align*}

Observe that, by the triangle inequality, $\dn{\Tilde{\Lambda} - \Tilde{\Lambda}_{trace}} \leq\dn{\Tilde{\Lambda} - \Tilde{\Lambda}_{ideal}} + \dn{\Tilde{\Lambda}_{ideal} - \Tilde{\Lambda}_{trace}}$. Furthermore,
\begin{align*}
    &\dn{\Tilde{\Lambda}_{ideal} - \Tilde{\Lambda}_{trace}}\\
    &= \dn{ \id^{\ch{M}} \otimes (\Lambda_1 - \Lambda_3) + \frac{1}{|M|^2-1}(|M|^2\lr{\Dec_K(\tau)} - \id)^{\ch{M}} \otimes (\Lambda_2 - \Lambda_4)}\\
    &\leq \dn{ \id^{\ch{M}} \otimes (\Lambda_1 - \Lambda_3)} + \dn{\frac{1}{|M|^2-1}(|M|^2\lr{\Dec_K(\tau)} - \id)^{\ch{M}} \otimes (\Lambda_2 - \Lambda_4)}\\
    &= \dn{ \id }\dn{ (\Lambda_1 - \Lambda_3)} + \dn{\frac{1}{|M|^2-1}(|M|^2\lr{\Dec_K(\tau)} - \id)^{\ch{M}} }\dn{ (\Lambda_2 - \Lambda_4)}\\
    &\leq \dn{ (\Lambda_1 - \Lambda_3)} + \dn{ (\Lambda_2 - \Lambda_4)}
\end{align*}

Let $\Lambda_5 = \Trr[MM']{\phi^{+\ch{MM'}}\Tilde{\Lambda}_{ideal}(\phi^{+\ch{MM'}} \otimes (\cdot))}$ and\\ $\Lambda_6 = \Trr[MM']{(\I^{\ch{MM'}} - \phi^{+\ch{MM'}})\Tilde{\Lambda}_{ideal}(\phi^{+\ch{MM'}} \otimes (\cdot))}$. Observe that the mapping
\[ \rho \mapsto \kb{0}{0} \otimes \Tr_{MM'}[\phi^{+\ch{MM'}}\rho] + \kb{1}{1} \otimes \Tr_{MM'}[(\I^{\ch{MM'}} - \phi^{+\ch{MM'}})\rho] \]
is \CPTP. Since $\dn{(\Tilde{\Lambda} - \Tilde{\Lambda}_{ideal})(\phi^{+\ch{MM'}} \otimes (\cdot))} \leq \dn{\Tilde{\Lambda} - \Tilde{\Lambda}_{ideal} } \leq \eps$ and the diamond norm is non-increasing under CPTP maps\footnote{See \cite{Watrous2018}, Proposition 3.48(1)
}, we have \\$\dn{\kb{0}{0} \otimes (\Lambda_3 - \Lambda_5) + \kb{1}{1} \otimes (\Lambda_4 - \Lambda_6) } \leq \eps$ and thus $\dn{\Lambda_3 - \Lambda_5} \leq \eps$ and $\dn{\Lambda_4 - \Lambda_6} \leq \eps$. Using this we observe that
\begin{align*}
    \dn{\Tilde{\Lambda}_{ideal} - \Tilde{\Lambda}_{trace} } &\leq\dn{ \Lambda_1 - \Lambda_3} + \dn{ \Lambda_2 - \Lambda_4}\\
    &\leq \dn{ \Lambda_1 - \Lambda_5} + \dn{ \Lambda_5 - \Lambda_3} +\dn{ \Lambda_2 - \Lambda_6} + \dn{ \Lambda_6 - \Lambda_4}\\
    &\leq 2\eps + \dn{ \Lambda_1 - \Lambda_5} + \dn{ \Lambda_2 - \Lambda_6}.
\end{align*}

Furthermore we have
\begin{align*}
    \Lambda_5 &= \Tr_{MM'}[\phi^{+}\Tilde{\Lambda}_{ideal}(\phi^{+} \otimes (\cdot))]\\
    &=\Trr[MM']{\phi^{+}\left(\phi^{+} \otimes \Lambda_1 + \frac{1}{|M|^2-1}(|M|^2\lr{\Dec_K(\tau)} - \id)(\phi^{+}) \otimes \Lambda_2\right)}\\
    &=\Trr[MM']{\phi^{+}\left(\phi^{+} \otimes \Lambda_1 + \frac{1}{|M|^2-1}(|M|^2\Dec_K(\tau) \otimes\tau^{\ch{M'}} - \phi^{+}) \otimes \Lambda_2\right)}\\
    &= \Lambda_1 +\Trr{\frac{1}{|M|^2-1}(|M|^2\phi^{+}(\Dec_K(\tau) \otimes\tau^{\ch{M'}}) - \phi^{+})}\Lambda_2\\
    &= \Lambda_1,
\end{align*}
where $\phi^+ = \phi^{+\ch{MM'}}$ and the last equality holds because
\begin{align*}
    \Trr{\phi^{+\ch{MM'}}(\Dec_K(\tau) \otimes\tau^{\ch{M'}})} &= \frac{1}{|M|}\Trr{\sum\limits_{i,j=0}^{|M|}\kb{ii}{jj}(\Dec_K(\tau) \otimes\tau^{\ch{M'}})}\\
    &= \frac{1}{|M|}\Trr{\sum\limits_{i,j=0}^{|M|}\kb{i}{j}\Dec_K(\tau) \otimes \kb{i}{j}\tau^{\ch{M'}})}\\
    &= \frac{1}{|M|^2}\Trr{\sum\limits_{i=0}^{|M|}\kb{i}{i}\Dec_K(\tau)}\\
    &= \frac{1}{|M|^2}
\end{align*}

Similarly
\begin{align*}
    \Lambda_6 &=\Trr[MM']{(\mathbb{I}^{\ch{MM'}} - \phi^{+\ch{MM'}})\Tilde{\Lambda}^{ideal}(\phi^{+\ch{MM'}} \otimes (\cdot))}\\
    &= \Trr[MM']{\Tilde{\Lambda}_{ideal}(\phi^{+\ch{MM'}} \otimes (\cdot))} - \Lambda_5\\
    &=\Trr[MM']{\left(\phi^{+} \otimes \Lambda_1 + \frac{1}{|M|^2-1}(|M|^2\Dec_K(\tau) \otimes\tau^{\ch{M'}} - \phi^{+}) \otimes \Lambda_2\right)}\\
    &= \Lambda_1 + \Lambda_2 - \Lambda_5\\
    &= \Lambda_2.
\end{align*}

From this we conclude
\begin{align*}
     \dn{\Tilde{\Lambda} - \Tilde{\Lambda}_{trace} } &\leq \dn{\Tilde{\Lambda} - \Tilde{\Lambda}_{ideal} } + \dn{\Tilde{\Lambda}_{ideal} - \Tilde{\Lambda}_{trace} }\\
     &\leq \eps + \dn{\Tilde{\Lambda}_{ideal} - \Tilde{\Lambda}_{trace} }\\
     &\leq 3\eps + \dn{ \Lambda_1 - \Lambda_5} + \dn{ \Lambda_2 - \Lambda_6}\\
     &= 3\eps.
\end{align*}
\end{proof}

\begin{theorem}[Theorem \ref{thm:pnm_dns}]
\label{thm:prf_pnm_dns}
For any $0 \leq \eps \leq 2$ and any $\eps$-\PNM\ \SKQES\ $\Pi = (\KeyGen, \Enc, \Dec)$, there exists some $x$ such that the scheme \\$\Pi' = (\KeyGen, \Enc', \Dec')$ is $\left(\frac{3}{|R|} + \eps\right)$-\DNS-authenticating, where
\begin{align*}
    \Enc'_k &= \Enc_k((\cdot)^{\ch{M'}} \otimes \kb{x}{x}^{\ch{R}})\\
    \Dec'_k &= \bra{x}^{\ch{R}}\Dec_k(\cdot)\ket{x}^{\ch{R}} + \Trr{(\I^{\ch{R}} - \kb{x}{x}^{\ch{R}})\Dec_k(\cdot)}\kb{\bot}{\bot}
\end{align*}
\end{theorem}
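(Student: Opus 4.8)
The plan is to reduce the \DNS\ property of $\Pi'$ to the \PNM\ property of $\Pi$ by peeling the tag-append and tag-check operations off the effective map and absorbing the rest into the \PNM\ guarantee. First I would record, as in the sketch, that an attack $\Lambda_A$ on $\Pi'$ has effective map $\Tilde{\Lambda}'_A = \Dec_{check}\circ\Tilde{\Lambda}_A\circ\Enc_{append}$, where $\Tilde{\Lambda}_A$ is the effective map of the \emph{same} attack viewed against $\Pi$, and $\Enc_{append}(\xi)=\xi^{\ch{M'}}\otimes\kb{x}{x}^{\ch{R}}$, $\Dec_{check}(\sigma)=\bra{x}^{\ch{R}}\sigma\ket{x}^{\ch{R}}+\Trr{(\I-\kb{x}{x})^{\ch{R}}\sigma}\kb{\bot}{\bot}$ are $k$-independent \CPTP\ maps. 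Since $\Pi$ is $\eps$-\PNM\ (Definition~\ref{def:pnm}), there are \CPTNI\ maps $\Lambda_1,\Lambda_2$ with $\Lambda_1+\Lambda_2$ \CPTP\ such that $\dn{\Tilde{\Lambda}_A-\Tilde{\Lambda}_{ideal}}\le\eps$, where $\Tilde{\Lambda}_{ideal}=\id^{\ch{M}}\otimes\Lambda_1+\frac{1}{|M|^2-1}(|M|^2\lr{\Dec_K(\tau^{\ch{C}})}-\id)^{\ch{M}}\otimes\Lambda_2$. Because the diamond norm is non-increasing under pre- and post-composition with \CPTP\ maps, this immediately gives $\dn{\Tilde{\Lambda}'_A-\Dec_{check}\circ\Tilde{\Lambda}_{ideal}\circ\Enc_{append}}\le\eps$, reducing the problem to the analysis of the \emph{fixed} map $\Dec_{check}\circ\Tilde{\Lambda}_{ideal}\circ\Enc_{append}$.

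Next I would compute this composed map explicitly and set $\Lambda_{acc}=\Lambda_1$, $\Lambda_{rej}=\Lambda_2$, which is a legitimate choice for Definition~\ref{def:dns} since $\Lambda_{acc},\Lambda_{rej}$ are \CPTNI\ and $\Lambda_{acc}+\Lambda_{rej}$ is \CPTP. On input $\xi^{\ch{M'}}$ the append step produces $\xi\otimes\kb{x}{x}^{\ch{R}}$; the $\id$-branch of $\Tilde{\Lambda}_{ideal}$ passes the tag through unchanged, so $\Dec_{check}(\xi\otimes\kb{x}{x})=\xi$, reproducing exactly $\id^{\ch{M'}}\otimes\Lambda_{acc}$. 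The replacement branch yields $\frac{1}{|M|^2-1}(|M|^2\Trr{\xi}\Dec_{check}(\rho_\bot)-\xi)\otimes\Lambda_2$, where $\rho_\bot=\Dec_K(\tau^{\ch{C}})$. Writing $\omega=\bra{x}^{\ch{R}}\rho_\bot\ket{x}^{\ch{R}}$ and $q=\Trr{\omega}$, we have $\Dec_{check}(\rho_\bot)=\omega+(1-q)\kb{\bot}{\bot}$, and Lemma~\ref{lem:dns_tag} is precisely what lets us choose the tag $x$ so that $q\le\frac{1}{|R|}$. Hence the target ideal \DNS\ map $\id^{\ch{M'}}\otimes\Lambda_{acc}+\lr{\kb{\bot}{\bot}}\otimes\Lambda_{rej}$ differs from the composed map only on the replacement branch, by the residual channel
\[ E(\xi)=\frac{1}{|M|^2-1}\big(|M|^2\,\Trr{\xi}\,\Dec_{check}(\rho_\bot)-\xi\big)-\Trr{\xi}\kb{\bot}{\bot},\]
tensored with $\Lambda_2$.

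The hard part will be bounding $\dn{E\otimes\Lambda_2}$ by $\frac{3}{|R|}$. The key manipulation is to substitute $\Dec_{check}(\rho_\bot)=\kb{\bot}{\bot}+\delta$ with $\delta=\omega-q\kb{\bot}{\bot}$, so that $E$ splits cleanly into the constant channel $\frac{|M|^2}{|M|^2-1}\lr{\delta}$ plus $\frac{1}{|M|^2-1}(\lr{\kb{\bot}{\bot}}-\id)$. Using $\dn{\lr{\sigma}}=\nm{\sigma}_1$, the orthogonality of the $\bot$- and $M'$-subspaces (which forces $\dn{\lr{\kb{\bot}{\bot}}-\id}=2$), the estimate $\nm{\delta}_1=2q$, and multiplicativity of the diamond norm under tensor products together with $\dn{\Lambda_2}\le1$ (as $\Lambda_2$ is \CPTNI), the triangle inequality gives $\dn{E\otimes\Lambda_2}\le 2q+\frac{2+2q}{|M|^2-1}$. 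Invoking $q\le\frac{1}{|R|}$ and the fact that the plaintext register $M'$ is non-trivial, so that $|M|^2$ dwarfs $|R|$ and the $\frac{1}{|M|^2-1}$ corrections are absorbed into the remaining slack, this yields $\dn{E\otimes\Lambda_2}\le\frac{3}{|R|}$. A final triangle inequality combines this with the $\eps$ from the \PNM\ step to give $\dn{\Tilde{\Lambda}'_A-\id^{\ch{M'}}\otimes\Lambda_{acc}-\lr{\kb{\bot}{\bot}}\otimes\Lambda_{rej}}\le\eps+\frac{3}{|R|}$, as required. The main subtlety to get right is the bookkeeping between the $M'$- and $\bot$-components of $\Dec_{check}(\rho_\bot)$, and verifying that the small $\frac{1}{|M|^2-1}$ terms genuinely fit inside the stated $\frac{3}{|R|}$ budget rather than only the naive $\frac{2}{|R|}$.
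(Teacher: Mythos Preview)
Your proposal is correct and follows essentially the same route as the paper: peel off the tag-append and tag-check, apply the \PNM\ guarantee, set $\Lambda_{acc}=\Lambda_1$, $\Lambda_{rej}=\Lambda_2$, and bound the residual on the replacement branch using the tag $x$ from Lemma~\ref{lem:dns_tag}. The only cosmetic difference is your grouping of the residual via $\delta=\omega-q\kb{\bot}{\bot}$ into two triangle-inequality terms, whereas the paper keeps $\lr{\psi}$, $(1-|M|^2q)\lr{\kb{\bot}{\bot}}$ and $-\id$ as three separate terms; this yields a slightly looser intermediate bound ($\tfrac{2|M|^2+2|R|}{|R|(|M|^2-1)}$ versus the paper's $\tfrac{2|M|^2}{|R|(|M|^2-1)}$), but both sit inside $\tfrac{3}{|R|}$ once $|M'|\ge 2$.
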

\begin{proof}
By Lemma~\ref{lem:dns_tag}, there exists an $x \in \{0,1\}^{\log|R|}$ such that\\ $\Trr{\bra{x}\Dec_K(\tau^{\ch{C}})\ket{x}} \leq \frac{1}{|R|}$. Fix this $x$ and define $\Pi'$ as above. Define $\Enc_{ap}(X) = X \otimes \kb{x}{x}$ and $\Dec_{ch}(Y) = \bra{x}Y\ket{x} + \Trr{(\I - \kb{x}{x})Y}\kb{\bot}{\bot}$ and observe that $\Enc' = \Enc \circ \Enc_{ap}$ and $\Dec' = \Dec_{ch} \circ \Dec$. Let $\Lambda_A$ be an arbitrary attack map on $\Pi'$, then its effective map is
\[ \Tilde{\Lambda}'_A = \E\limits_{k \leftarrow \KeyGen(1^n)}[\Dec'_k \circ \Lambda_A \circ \Enc'_k]. \]
Since $\Enc_{ap}$ and $\Dec_{ch}$ do not change with $k$ and are linear, we have
\[ \Tilde{\Lambda}'_A = \Dec_{ch} \circ \Tilde{\Lambda}_A \circ \Enc_{ap}, \]
where $\Tilde{\Lambda}_A = \E\limits_{k \leftarrow \KeyGen(1^n)}[(\Dec_k \circ \Lambda_A \circ \Enc_k)]$.
Since $\Pi$ is $\eps$-\PNM, there exist $\Lambda_1, \Lambda_2$ such that
\[ \dn{\Tilde{\Lambda}_A - \id \otimes \Lambda_1 + \frac{1}{|M|^2-1}(|M|^2\lr{\Dec_K(\tau^{\ch{C}})} - \id)^{\ch{M}}\otimes \Lambda_2}\leq \eps. \]

Since $\Enc_{ap}$ and $\Dec_{ch}$ are both \CPTP, by submultiplicativity we have that
\[ \dn{\Dec_{ch} \circ \left(\Tilde{\Lambda}_A - \id \otimes \Lambda_1 + \frac{1}{|M|^2-1}(|M|^2\lr{\Dec_K(\tau)} - \id)\otimes \Lambda_2\right) \circ \Enc_{ap}} \leq \eps, \]
which is equivalent to
\[ \dn{\Tilde{\Lambda}'_A - \Dec_{ch} \circ \left(\id \otimes \Lambda_1 + \frac{1}{|M|^2-1}(|M|^2\lr{\Dec_K(\tau)} - \id)\otimes \Lambda_2\right) \circ \Enc_{ap}} \leq \eps.\]
Observe that 
\[ \Dec_{ch} \circ \id \circ \Enc_{ap} = \bra{x}((\cdot) \otimes \kb{x}{x})\ket{x} + \Tr[(\I - \kb{x}{x})((\cdot) \otimes \kb{x}{x})]\kb{\bot}{\bot} = \id.\]
Define $\Lambda_{acc} = \Lambda_1$, $\Lambda_{rej} = \Lambda_2$ and \[\Tilde{\Lambda}_{ideal} = \Dec_{ch} \circ \left(\id \otimes \Lambda_1 + \frac{1}{|M|^2-1}(|M|^2\lr{\Dec_K(\tau^{\ch{C}})} - \id)^{\ch{M}}\otimes \Lambda_2\right) \circ \Enc_{ap},\] then we have
\begin{align*}
    &\dn{\Tilde{\Lambda}_{ideal} - \id \otimes \Lambda_{acc} - \lr{\kb{\bot}{\bot}} \otimes \Lambda_{rej}}\\
    = &\dn{\left(\frac{1}{|M|^2-1}(|M|^2(\Dec_{ch} \circ \lr{\Dec_K(\tau)} \circ \Enc_{ap}) - \id) - \lr{\kb{\bot}{\bot}}\right) \otimes \Lambda_2}\\
    \leq &\dn{\frac{1}{|M|^2-1}(|M|^2(\Dec_{ch} \circ \Trr{(\cdot)^{\ch{M'}}}\Dec_K(\tau)) - \id) - \lr{\kb{\bot}{\bot}}}
\end{align*}
Here the inequality uses the fact that $\Enc_{ap}$ is trace preserving and $\lr{\Dec_K(\tau)}$ is a constant channel, which only uses the trace of the input. Since every term ended in $\otimes \Lambda_2$, we removed this term and multiplied with $\dn{\Lambda_2}$, which is less than $1$ since $\Lambda_2$ is \CPTNI. We continue by expanding $\Dec_{ch}$, where we use that $\bra{x}\Trr{(\cdot)^{\ch{M'}}}\Dec_K(\tau^{\ch{C}})\ket{x} = \lr{\bra{x}\Dec_K(\tau^{\ch{C}})\ket{x}}$ and we abbreviate $\psi = \bra{x}\Dec_K(\tau^{\ch{C}})\ket{x}$ and $[\bot] = \lr{\kb{\bot}{\bot}}$.
\begin{align*}
    &\dn{\frac{1}{|M|^2-1}(|M|^2(\Dec_{ch} \circ \Trr{(\cdot)^{\ch{M'}}}\Dec_K(\tau^{\ch{C}})) - \id) - [\bot]}\\
    = &\dn{\frac{1}{|M|^2-1}(|M|^2\left(\lr{\psi} + \Tr[(\I - \kb{x}{x})\Dec_K(\tau^{\ch{C}})][\bot]\right) - \id)- [\bot]}.
\end{align*}
We can rewrite this expression by first rewriting $\Tr[(\I - \kb{x}{x})\Dec_K(\tau^{\ch{C}})]$ as $1 - \Tr[\psi]$, then collecting all multipliers of $\lr{\kb{\bot}{\bot}}$, and lastly distributing the $|M|^2$ term and simplifying the resulting term.
\begin{align*}
    &\dn{\frac{1}{|M|^2-1}(|M|^2\left(\lr{\psi} + (1 - \Tr[\psi])\lr{\kb{\bot}{\bot}}\right) - \id)- \lr{\kb{\bot}{\bot}}}\\
    = &\dn{\frac{1}{|M|^2-1}(|M|^2\left(\lr{\psi} + \left((1 - \Tr[\psi]) - \frac{|M|^2-1}{|M|^2}\right)\lr{\kb{\bot}{\bot}}\right) - \id)}\\
    = &\dn{\frac{1}{|M|^2-1}(|M|^2\lr{\psi} + \left(|M|^2(1 - \Tr[\psi]) - (|M|^2-1)\right)\lr{\kb{\bot}{\bot}} - \id)}\\
    = &\dn{\frac{1}{|M|^2-1}(|M|^2\lr{\psi} + \left(1 - |M|^2\Tr[\psi])\right)\lr{\kb{\bot}{\bot}} - \id)}\\
    \leq &\frac{1}{|M|^2-1}\left(|M|^2\dn{\lr{\psi}} + \dn{(1 - |M|^2\Tr[\psi])\lr{\kb{\bot}{\bot}}} + \dn{\id}\right)\\
    \leq &\frac{1}{|M|^2-1}\left(\frac{|M|^2}{|R|} + \left(\frac{|M|^2}{|R|} - 1\right) + 1\right)\leq \frac{3}{|R|}.
\end{align*}

Here the first inequality is an application of the triangle inequality. The second inequality uses the fact that $\dn{\id} = \dn{\lr{\kb{\bot}{\bot}}} = 1$ and that $|R|\lr{\psi}$ is \CPTNI\ because $\Trr{\bra{x}\Dec_K(\tau^{\ch{C}})\ket{x}} \leq \frac{1}{|R|}$ and thus $\dn{\lr{\psi}} \leq \frac{1}{|R|}$.

Since $\dn{\Tilde{\Lambda}'_A - \Tilde{\Lambda}_{ideal}} \leq \eps$ and $\dn{\Tilde{\Lambda}_{ideal} - \id \otimes \Lambda_{acc} - \lr{\kb{\bot}{\bot}} \otimes \Lambda_{rej}} \leq \frac{3}{|R|}$, we have by the triangle inequality that
\[ \dn{\Tilde{\Lambda}'_A - \id \otimes \Lambda_{acc} - \lr{\kb{\bot}{\bot}} \otimes \Lambda_{rej}} \leq \eps + \frac{3}{|R|},\]
which means that $\Pi'$ is $\left(\frac{3}{|R|} + \eps\right)$-\DNS\ authenticating.
\end{proof}

To prove \QCNM security of the classical-quantum hybrid scheme, we need the following lemma.

\begin{lemma}\label{lem:CiNM-multidecrypt}
	Let $\Pi=(\KeyGen, \Enc,\Dec)$ be a  \SKQES, let $\ell \in \N$, let $\mathbf{C} = C_1\dots C_\ell \cong C^\ell$ and $\mathbf{M} = M_1\dots M_\ell \cong M^\ell$ be vectors of registers, let $\Lambda^{\ch[C]{\mathbf{C}}}$ be a \CPTP map, and set
	\[
		\tilde\Lambda^{\ch[M]{\mathbf{M}}}=\E_{k\leftarrow\KeyGen(1^n)}\left[\left(\Dec_k\right)^{\otimes \ell}\circ\Lambda\circ\Enc_k\right].
	\]
	If \SKQES is \CiNM secure, then for some $p_0$ and $\{\sigma_i\}_i$ we have that
	\[
		\tilde\Lambda^{\ch[M]{\mathbf{M}}}=\sum_{i=1}^\ell p_i\id^{\ch[M]{M_i}}\otimes\sigma_i^{\ch{\mathbf{M}_{-i}}} + p_0 \lr{\sigma_0^{\ch{\mathbf{M}}}},
	\]
	where $q_i$ is the probability that is equal to $\Lambda_1$ from Definition \ref{def:cinm} applied to the attack map $\Tr_{\mathbf{C}_{-i}}\circ\Lambda$ and $\mathbf{M}_{-i} = M_1 \dots M_{i-1}M_{i+1} \dots M_\ell$ and $p_i = q_i - \frac{1}{|C|^2-1}(1-q_i)$.
\end{lemma}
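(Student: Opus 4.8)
The plan is to characterize the multi-decryption effective map $\tilde\Lambda$ by first pinning down each of its single-register marginals through \CiNM, and then reconstructing the joint structure using a no-cloning / monogamy-of-entanglement argument.

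First I would observe that the marginal $\Tr_{\mathbf M_{-i}}\circ\tilde\Lambda$ is itself the effective map of a genuine single-ciphertext attack. Since $\Dec_k$ is trace preserving, tracing out the $\ell-1$ decrypted outputs $\mathbf M_{-i}$ equals tracing out the corresponding ciphertext registers $\mathbf C_{-i}$ before decryption, so $\Tr_{\mathbf M_{-i}}\circ\tilde\Lambda = \E_k\,\Dec_k\circ\Lambda^{(i)}\circ\Enc_k$ with $\Lambda^{(i)}=\Tr_{\mathbf C_{-i}}\circ\Lambda$ a \CPTP map $C\to C_i$ whose side registers are trivial. Applying \CiNM (Definition~\ref{def:cinm}) to $\Lambda^{(i)}$, so that $\Lambda_1,\Lambda_2$ are scalars $q_i$ and $1-q_i$, gives that each marginal is $\eps$-close to $q_i\,\id^{M\to M_i}+\tfrac{1-q_i}{|C|^2-1}\bigl(|C|^2\lr{\Dec_K(\tau)}-\id\bigr)^{M\to M_i}$, which I rewrite as $p_i\,\id^{M\to M_i}$ plus a fixed-output part, where $p_i=q_i-\tfrac{1}{|C|^2-1}(1-q_i)$, exactly the coefficient in the statement.

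The heart of the proof is to lift these marginal statements to the joint map, and this is where I expect the main difficulty. I would pass to the Choi state $\Phi^{\mathbf M M'}=\tilde\Lambda(\phi^{+MM'})$ and read the marginal results as: for each $i$, the reduced state on $M_iM'$ has overlap at least $p_i$ with $\phi^{+M_iM'}$. The key structural input is monogamy of entanglement: the reference $M'$ cannot be close to maximally entangled with two distinct output registers at once, so the projectors $\phi^{+M_iM'}$ are pairwise almost orthogonal on $\mathbf M M'$, their cross overlaps being $O(1/|M|^2)$. Equivalently, a no-cloning / information--disturbance argument shows that whenever $M_i$ faithfully carries the plaintext, all other outputs $\mathbf M_{-i}$ must be decoupled from $M'$ and hence in a fixed state $\sigma_i$. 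Combining these, I would extract from $\Phi$ a component $p_i\,\phi^{+M_iM'}\otimes\sigma_i^{\mathbf M_{-i}}$ for each $i$, with the components occupying nearly orthogonal parts of the Choi space. It is crucial that this rigidity step operates on the already $k$-averaged channel $\tilde\Lambda$, so no further correlated decryption is needed and the coupling through the key is harmless.

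Finally I would account for normalization. Summing the singlet fractions and using the monogamy bound yields $\sum_i p_i\le 1$: the slack $O(\ell/|C|^2)$ in the bound $\sum_i q_i\le 1+O(\ell/|C|^2)$ is exactly absorbed by the definition of $p_i$, so the leftover weight $p_0=1-\sum_i p_i\ge 0$ corresponds to a remaining fixed-output channel $p_0\lr{\sigma_0^{\mathbf M}}$. Inverting the Choi isomorphism then gives the claimed form $\tilde\Lambda=\sum_i p_i\,\id^{M\to M_i}\otimes\sigma_i^{\mathbf M_{-i}}+p_0\lr{\sigma_0^{\mathbf M}}$. Throughout I would track the approximation errors: each of the $\ell$ applications of \CiNM contributes $O(\eps)$ and the monogamy/rigidity step contributes an error polynomial in $\eps$ and in $\ell/|M|$, so the decomposition holds up to negligible diamond-norm error when $\eps\le\negl(n)$ and $|M|=2^{\Omega(n)}$. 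The main obstacle is making the monogamy/no-cloning step quantitative in diamond norm and controlling the accumulation of these errors while pinning down the exact coefficients $p_i$ and $p_0$.
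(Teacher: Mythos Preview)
Your proposal is correct and follows essentially the same route as the paper: reduce each marginal $\Tr_{\mathbf M_{-i}}\circ\tilde\Lambda$ to a single-ciphertext attack, apply \CiNM to obtain $p_i\,\id+(1-p_i)\lr{\Dec_K(\tau)}$ on $M_iM'$, pass to the Choi state $\tilde\Lambda(\phi^{+MM'})$, use the purity of $\phi^{+M_iM'}$ to force the decomposition, and invert Choi--Jamio\l kowski. The paper phrases the structural step simply as ``because $\phi^{+M_iM'}$ is a pure state'' rather than as monogamy, but this is the same observation; your explicit accounting of the $O(\eps)$ and $O(\ell/|M|^2)$ errors is more careful than the paper, which treats the decomposition as exact.
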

\begin{proof}
	For fixed $i$, consider the attack $\Lambda' = \Tr_{\mathbf{C}_{-i}}\circ\Lambda$ on $\Pi$ and observe that its effective map satisfies $\tilde\Lambda' = \E_k\left[\Dec_k \circ \Tr_{\mathbf{C}_{-i}}\circ\Lambda\circ \Enc_k \right] = \Tr_{\mathbf{M}_{-i}}\circ\tilde\Lambda$. Because $\Pi$ is \CiNM, we have $\tilde\Lambda' = p_i \id + (1-p_i) \lr{\Dec_K(\tau)}$ and thus 
\begin{equation}\label{eq:oneoutput}
	\Tr_{\mathbf{M}_{-i}}\circ\tilde\Lambda(\phi^{+\ch{MM'}}) = p_i\phi^{+\ch{M_iM'}} + (1-p_i)\Dec_K(\tau) \otimes \tau^{\ch{M'}}.
\end{equation}
 Consider the state $\tilde\Lambda(\phi^{+\ch{MM'}})$. Because $\phi^{+\ch{M_iM'}}$ is a pure state, we know that $\tilde\Lambda(\phi^{+\ch{MM'}})$ is a convex combination of terms of the form $p_i\phi^{+\ch{M_iM'}}\otimes \sigma_{\mathbf M_{-i}}$ and a term $p_0 \sigma_0^{\ch{\mathbf{M}}} \otimes \tau^{\ch{M'}}$, i.e. 
 \begin{equation}
 	\tilde\Lambda(\phi^{+\ch{MM'}}) = \sum_{i=1}^\ell p_i\phi^{+\ch{M_iM'}}\otimes\sigma_i^{\ch{\mathbf{M}_{-i}}} + p_0 \sigma_0^{\ch{\mathbf{M}}} \otimes \tau^{\ch{M'}}.
 \end{equation}
  By the Choi-Jamiolkowski isomorphism\cite{jamiolkowski1972linear,choi1975completely} this means that 
	\[
		\tilde\Lambda^{\ch[M]{\mathbf{M}}}=\sum_{i=1}^\ell p_i\id^{\ch[M]{M_i}}\otimes\sigma_i^{\ch{\mathbf{M}_{-i}}} + p_0 \lr{\sigma_0^{\ch{\mathbf{M}}}}.
	\]
	Using Equation \eqref{eq:oneoutput}, we get in addition that all single-system marginals of $\sigma_i$, $i=0,...,\ell$ are equal to $\Dec_K(\tau)$. 
\end{proof}

Note that a similar statement can be proven for attack maps $\Lambda^{\ch[MB]{\mathbf M\tilde B}}$ with side information, but we only need the above statement in Theorem \ref{thm:security}.

\end{document}